\DeclareMathOperator*{\argmin}{arg\,min}
\DeclareMathOperator{\cost}{Cost}
\DeclareMathOperator{\spl}{Split}
\DeclareMathOperator{\merge}{Merge}
\DeclareMathOperator{\move}{Move}
\DeclareMathOperator{\val}{val}
\newtheorem{lemma}{Lemma}
\newtheorem{corollary}{Corollary}
\begin{document}

\providecommand{\keywords}[1]
{
  \small	
  \textbf{\textit{Keywords---}} #1
}

\title{On Computing Exact Means of Time Series Using the Move-Split-Merge Metric}
\author{Jana Holznigenkemper\footnote{ Dept. of Mathematics and Computer Science, University of Marburg, Germany, \{holznigenkemper, komusiewicz, seeger\}@mathematik.uni-marburg.de} , Christian Komusiewicz$^*$, Bernhard Seeger$^*$  \\
}
\date{} 

\maketitle


\abstract{Computing an accurate mean of a set of time series is a critical task in applications like nearest-neighbor classification and clustering of time series. 
While there are many distance functions for time series, the most popular distance function used for the computation of time series means is the non-metric dynamic time warping (\textsc{DTW}) distance.
A recent algorithm for the exact computation of a \textsc{DTW-Mean} has a running time of $\mathcal{O}(n^{2k+1}2^kk)$, where $k$ denotes the number of time series and $n$ their maximum length.
In this paper, we study the mean problem for the move-split-merge (\textsc{MSM}) metric that not only offers high practical accuracy for time series classification but also carries of the advantages of the metric properties that enable further diverse applications.
The main contribution of this paper is an exact and efficient algorithm for the \textsc{MSM-Mean} problem of time series. The running time of our algorithm is $\mathcal{O}(n^{k+3}2^k k^3 )$, and thus better than the previous \textsc{DTW}-based algorithm.
The results of an experimental comparison confirm the running time superiority of our algorithm in comparison to the \textsc{DTW-Mean} competitor.
Moreover, we introduce a heuristic to improve the running time significantly without sacrificing much accuracy.}

\keywords{Time Series Means, Time Series Metrics, Dynamic Programming, Exact Algorithm}

\section{Introduction}
\label{chap:intro}
Time series databases have gained much attention in academia and industry due to demands in many new challenging applications like Internet of Things (IoT), bioinformatics, social and system monitoring. In particular, because of the emergence of IoT, the requirement for developing dedicated systems \cite{garcia2020db2} supporting time series as a first-class citizen has increased recently. In addition to supporting fundamental database operations like filters and joins, analytical operations like clustering and classification are highly relevant in time series databases.

The analysis of time series like clustering largely depends on the underlying distance functions. In a recent study, Paparrizos et al. \cite{paparrizos2020debunking} re-examined the impact of 71 distance functions on classification for many datasets. While \emph{dynamic time warping (DTW)} and related functions \cite{berndt1994using} 
had the reputation of being the best choice, 

Paparrizos et al. \cite{paparrizos2020debunking} found DTW performing inferior for time series classification in comparison to many other elastic distance functions. Among those is the \emph{move-split-merge (\textsc{MSM}) metric} \cite{stefan2012move}. It works similarly to the Levensthein distance \cite{levenshtein1966binary} by transforming one time series into another using three types of operations. A move operation changes the value of a data point, a merge operation fuses two consecutive points with equal values into one, and a split operation splits a point into two adjacent points with the same value. In addition to its superiority to DTW, \textsc{MSM} offers another significant advantage: it satisfies the properties of a mathematical metric, and thus it is ready-to-use for metric indexing \cite{novak2011metric} and algorithms that presume the triangle inequality.

Partition-based algorithms such as k-means clustering are among the best methods for clustering time series~\cite{paparrizos2017fast}. One of the fundamental problems of k-means clustering for time series is how to compute a mean for a set of time series. Brill et al.~\cite{brill2019exact} studied the problem for DTW and developed an algorithm computing an exact mean of $k$ time series in~$\mathcal{O}(n^{2k+1}2^kk)$ time, where $n$ is the maximum length of an input time series. To the best of our knowledge, the mean problem of time series has not been addressed for other distance functions like the \textsc{MSM} metric so far. 

In this paper, we examine the mean problem of time series for the \textsc{MSM} metric. The mean $m$ of a set $X$ of input time series is a time series that minimizes the sum of the distances to the time series in $X$ regarding the \textsc{MSM} metric. In the following, we use \textsc{MSM-Mean} and \textsc{DTW-Mean} to denote the mean of the \textsc{MSM} metric and DTW distance function, respectively. 

An example of \textsc{MSM-Mean} is depicted in Figure \ref{fig:italy}. It comprises four sample time series from the data set \emph{Italy Power Demand} of the UCR time series archive \cite{UCRArchive} with their respective \textsc{MSM-Mean}. 
In contrast to \textsc{DTW-Mean}, we show that \textsc{MSM-Mean} consists of values only present in the underlying time series. 
This observation is crucial for the design and efficiency of our algorithm. 
We prove that the running time of our algorithm is $\mathcal{O}(n^{k+3}2^k k^3 )$, thus faster than the \textsc{DTW}-based competitor \cite{brill2019exact}.

In summary, our contributions are: 
\begin{itemize}
    \item We give new essential characteristics of the \textsc{MSM} metric. We first prove that there always exists an optimal transformation graph that is a forest to further specify the values of some crucial nodes within this forest.
    \item We show that \textsc{MSM-Mean} only consists of data points that are present in at least one time series of the input set. 
    \item We develop a dynamic program computing the exact \textsc{MSM-Mean} of $k$ input time series achieving a better theoretical running time than its competitor \textsc{DTW-Mean}.
    \item In experiments on samples of real-world time series, we show that the practical running time of \textsc{MSM-Mean} is faster than \textsc{DTW-Mean}.
    \item We present preliminary heuristics computing \textsc{MSM-Mean} in significantly faster running time without sacrificing much accuracy. 
\end{itemize}

The remainder of the paper is structured as follows. 
Section~\ref{chap:related_work} reviews related work. 
In Section~\ref{chap:preliminaries}, we give some important preliminaries for the \textsc{MSM} metric and formulate the \textsc{MSM-Mean} problem. 
Then, in Section~\ref{chap:properties_MSM_Metric}, we introduce some new properties of the  \textsc{MSM} metric to prove at the end of the section that there always exists a mean consisting of data points of the input time series. 
The dynamic program for the exact \textsc{MSM-Mean} algorithm is given in Section~\ref{chap:computing_MSM_mean}.
We experimentally evaluate our approach, discuss various heuristics, and compare it to the \textsc{DTW-Mean} algorithm in Section~\ref{chap:experimental_eval},
and conclude in Section~\ref{chap:conclusion}.
 
\begin{figure}[t]
    \centering
    \includegraphics[width = 0.5\textwidth]{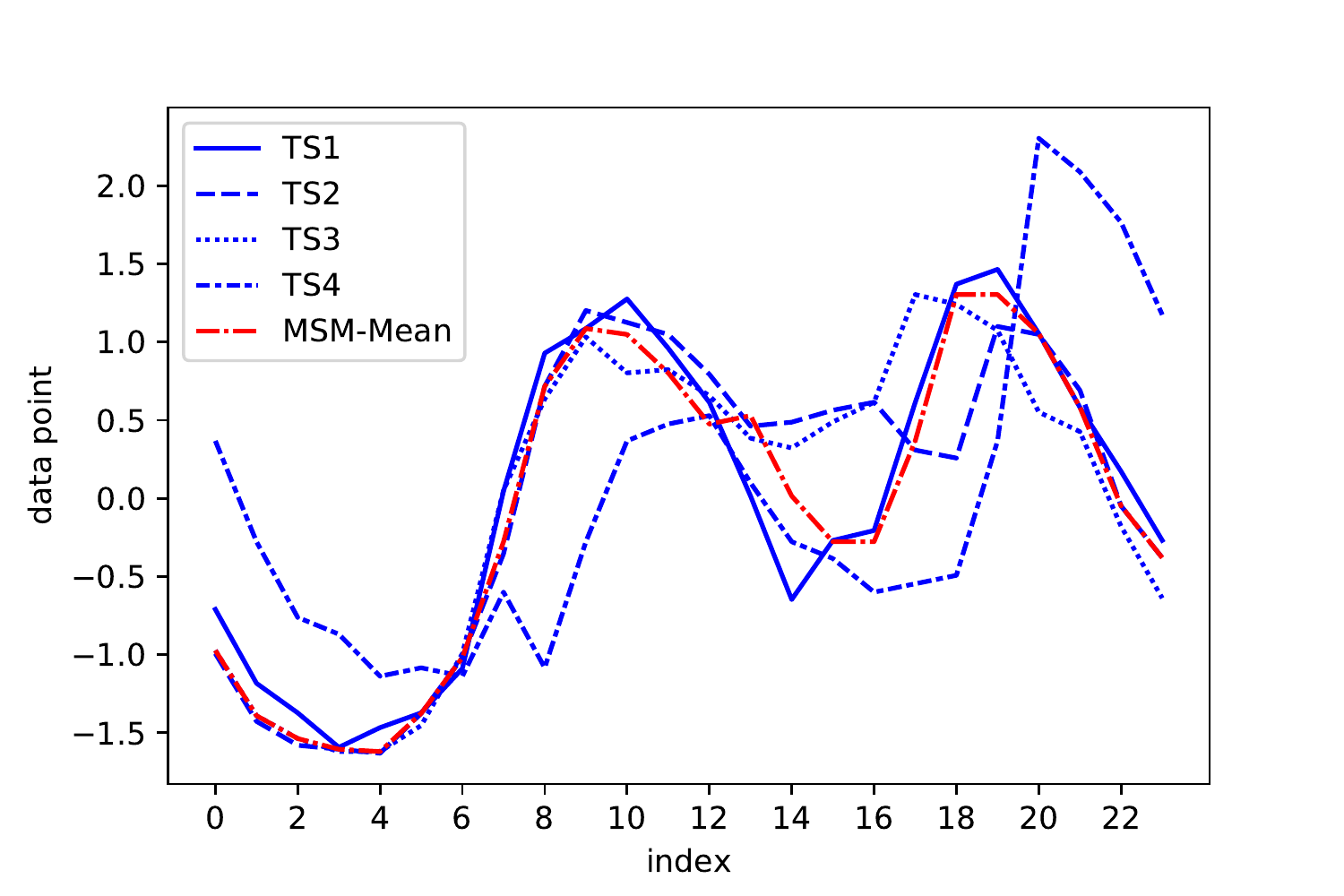}
    \caption{\textsc{MSM-Mean} of four time series from the \textit{Italy Power Demand} data set containing time series of length $n=24$.}
    \label{fig:italy}
\end{figure}

\section{Related Work}
\label{chap:related_work}
For the exploratory analysis of time series, clustering is used to discover interesting patterns in time series datasets \cite{das1998rule}. Much research has been done in this area \cite{liao2005clustering}.  The surveys \cite{aghabozorgi2015time, rani2012recent} give a recent overview of many methods. The problem of mean computation is discussed for Euclidean distance and for the DTW distance, but not for the MSM metric.

Moreover, the use of classification methods is indispensable for accurate time series analysis~\cite{jiang2020time}. 
The temporal aspect of time series has to be taken into account for clustering and classification, though finding a representation of a set of time series is a challenging task.
Determining accurate means of time series is crucial for partitioning clustering approaches \cite{niennattrakul2007clustering} like k-means \cite{macqueen1967some}, where the prototype of a cluster is a mean of its objects, and for nearest-neighbor classification  \cite{petitjean2016faster}. 
These methods are based on the choice of the underlying distance function.
Among the existing time series distance measures \cite{paparrizos2020debunking}, the DTW distance \cite{sakoe1978dynamic} is a very important measure with application in, e.g., similarity search \cite{sakurai2005ftw}, speech recognition \cite{kruskal1983time}, or gene expression \cite{aach2001aligning}.
We now give an overview of mean computation methods using the \textsc{DTW} distance. \\
Besides the exact \textsc{DTW-Mean} algorithm~\cite{brill2019exact} 
 that minimizes the Fréchet function \cite{frechet1948elements}, there are many heuristics trying to address this problem.
Some approaches first compute a multiple alignment of $k$ input time series and then average the aligned time series column-wise~\cite{hautamaki2008time,petitjean2012summarizing}.
 \emph{DTW barycenter averaging (DBA)}~\cite{petitjean2011global} is a heuristic strategy that iteratively refines an initially average sequence in order to minimize its square DTW-distance to the input time series. 
Other approaches exploit the properties of the Fréchet function \cite{cuturi2017soft,schultz2018nonsmooth}. Their methods are based on the observation that the Fréchet function is Lipschitz continuous and thus differentiable almost everywhere.  
 Cuturi et al. \cite{cuturi2017soft} use a smoothed version of the DTW-distance to obtain a differentiable Fréchet DTW-distance. 
Brill et al. \cite{brill2019exact} showed that none of the aforementioned approaches is sufficiently accurate compared to the exact method. Since clustering methods based on partitioning rely on cluster prototype determination \cite{aghabozorgi2015time}, it is necessary to compute an accurate mean. 
All these observations make it indispensable to consider the problem for other distance functions, like the \textsc{MSM} metric.

The \textsc{MSM} metric is already investigated for classification.
One of the first studies of the \textsc{MSM} metric concerning its application for classification problems was by Stefan et al. \cite{stefan2012move}. They perform their tests on 20 data sets of the UCR archive \cite{UCRArchive}. The \textsc{MSM} distance is tested against the DTW distance, the constrained DTW distance, the edit distance on real sequence and the Euclidean distance. For a majority of the tests, the \textsc{MSM} distance performs better than the compared measures. 
There have been further studies of the accuracy of different time series distance measures regarding 1-NN classification problems \cite{bagnall2017great, lines2015time}. 
Bagnall et al. \cite{bagnall2017great} also conclude that the \textsc{MSM} distance leads to better accuracy results than DTW but worse running time.   
All these studies come to a similar result as the most recent study of Paparrizos et al. \cite{paparrizos2020debunking}. 
To the best of our knowledge, there are no studies that investigate and extend the theoretical concepts and applications of the \textsc{MSM} distance. 

The subject of time series, also known as data series, has attracted attention within the database research domain recently, see \cite{jensen2017time} for a recent survey. There are time series data bases, also known as event stores, that are specially designed for the analysis of time series \cite{bader2017survey, garcia2020db2}. These systems rarely support clustering, but focus on supporting the basic building blocks for query processing. 

Since the \textsc{MSM} distance obeys all properties of a mathematical metric, especially the triangle inequality, it also applies to problems like metric indexing \cite{DBLP:journals/pvldb/ChenGZJYY17, novak2011metric}. In fact, metric indexing also requires the computation of pivots that is closely related to the mean. However, pivots belong to the underlying data set, while the mean (of a time series) is generally a newly generated object.

\section{Preliminaries}
\label{chap:preliminaries}

Let us first introduce our notation and problem definition. For~$k\in\mathds{N}$, let $[k]:= \{1,\ldots ,k\}$. 
A \emph{time series} of length $n$ is a sequence $x=(x_1, \ldots , x_n)$, where each \emph{data point}, in short \emph{point},  $x_i$ is a real number.
Let $V(x) = \{x_i~\vert~x~\in~x\}$ be the set of all values of points of $x$. 
For $i<j$, the point $x_i$ is a \emph{predecessor} of the point~$x_j$ and the point~$x_j$ is a \emph{successor} of the point~$x_i$.
For a set of time series $X = \{x^{(1)}, \ldots ,x^{(k)}\}$, the  $i$th point of the $j$th time series of~$X$ is denoted as $x_i^{(j)}$; time series~$x^{(j)}$ has length $n_j$. Further let $V(X) = \cup_{j\in[k]} V(x^{(j)}) = \{v_1,\ldots, v_r\}$ be the set of the values of all points of all time series in $X$.

\subsection{Move-Split-Merge Operations}
We now define the \textsc{MSM} metric, following the notation of Stefan et al. \cite{stefan2012move}, and the \textsc{MSM-Mean} problem. 
The \textsc{MSM} metric allows three transformation operations to transfer one time series into another: 
\emph{move, split}, and \emph{merge} operations. 
For time series $x=(x_1, \ldots , x_n)$ a move transforms a point $x_i$ into $x_i + w$ for some~$w\in\mathds{R}$, that is,  $\move_{i,w}(x)\coloneqq(x_1, \ldots , x_{i-1},x_i+w, x_{i+1}, \ldots ,x_n) $, with cost $\cost(\move_{i,w}) = \vert w \vert$.
Informally, we say that there is a \emph{move at point $x_i$ to another point} $x_i~+~w$. The split operation splits the $i$th element of $x$ into two consecutive points. 
A split at point $x_i$ is defined as $\spl_i(x) \coloneqq (x_1, \ldots , x_{i-1}, x_i, x_i, x_{i+1}, \ldots, x_n).$ 

A merge operation may be applied to two consecutive points of equal value. 
For $x_i = x_{i+1}$, it is given by $\merge_i(x)\coloneqq(x_1, \ldots , x_{i-1}, x_{i+1}, \ldots, x_n).$
We say that~$x_i$ and $x_{i+1}$ \emph{merge to a point}~$z$.
Split and merge operations are inverse operations. 
Their costs are assumed to be equal and determined by a given nonnegative constant $c  = \cost(\spl_{i}) =  \cost(\merge_{i})$. 
A \emph{sequence of transformation operations} is given by $\mathbb{S} = (S_1, \ldots , S_s)$, where $S_j \in \{\move_{i_j,w_j}, \spl_{i_j}, \merge_{i_j}\}$. 
A \emph{transformation} $T(x, \mathbb{S})$ of a time series $x$ for a given sequence of transformation operations $\mathbb{S}$ is defined as $T(x, \mathbb{S}) \coloneqq T(S_1(x), (S_2, \ldots , S_s))$.
If $\mathbb{S}$ is empty, we define $T(x, \emptyset) \coloneqq x$. The cost of a sequence of transformation operations $\mathbb{S}$ is given by the sum of all individual operations cost, that is, $ \cost(\mathbb{S}) \coloneqq \sum_{S \in \mathbb{S}}\cost(S).$
We say that $\mathbb{S}$ transforms  $x$ to $y$ if $T(x,\mathbb{S}) = y$.
We call a transformation an \emph{optimal transformation} if it has minimal cost transforming $x$ to $y$.
The \emph{\textsc{MSM} distance} $d(x,y)$ between two time series $x$ and $y$ is defined as the cost of an optimal transformation. 
The distance $D(X,y)$ of multiple time series $X=\{x^{(1)},\ldots,x^{(k)}\}$ to a time series $y$ is given by $D(X,y) = \sum_{x\in X} d(x,y). $
A \emph{mean} $m$ of a set of time series $X$ is defined as the time series with minimum distance to $X$, that is, $ m = \argmin_{z\in \mathcal{Z}} D(X,z)$,
where $\mathcal{Z}$ is the set of all finite time series. 
The problem of computing a mean is thus defined as follows:\\
\\
\textsc{MSM-Mean}\\
\textsc{Input:} A set of time series $X = \{x^{(1)}, \ldots,x^{(k)}\}$. \\
\textsc{Output:} A time series $m$ such that $ m = \argmin_{z\in \mathcal{Z}} D(X,z)$. \\

Before we regard the \textsc{MSM-Mean} problem in more detail, we will introduce the concept of \emph{transformation graphs} to describe the structure of a transformation $T(x,\mathbb{S}) = y$.

\subsection{Transformation graphs}

The transformation $T(x,\mathbb{S}) = y$ can be described by a directed acyclic graph $G_{\mathbb{S}}(x,y)$, the \emph{transformation graph},  with \emph{source nodes} $N(x) = \{u_1,\ldots,u_m\}$ and \emph{sink nodes} $N(y) = \{v_1,\ldots,v_n\}$, where a node~$u_i$ represents the point~$x_i$ and the node~$v_j$ represents the point~$y_j$.
All nodes which are neither source nor sink nodes are called \emph{intermediate nodes}.
If the time series and operation sequence are clear from context, we may write $G$ instead of $G_{\mathbb{S}}(x,y)$. 
Each node in the node set $V$ of $G$ is associated with a value given by a function $\val:~V \rightarrow \mathds{R}$. 
For source and sink nodes we have $\val(u_i) = x_i$ and $\val(v_j) = y_j$. 
Each intermediate node is also associated with a value.
The edges represent the transformation operations of $\mathbb{S}$. 
To create a transformation graph, for each operation in $\mathbb{S}$ a respective \emph{move edge} or two \emph{split}, or \emph{merge edges} are added to the graph. 
A move edge can be further specified as an \emph{increasing (inc-)} or \emph{decreasing (dec-)} edge if the move operation adds a positive or negative value to the value of the parent node, respectively. 
An edge can be either a move, split or merge edge. 
If a node $\alpha$ is connected to a node $\beta$ by a split edge and $\beta$ is a child of $\alpha$, then there exists a node $\gamma\neq \beta$ to which $\alpha$ is connected by a split edge and which is a child of $\alpha$. 
If the nodes $\alpha$ and $\beta$ are connected by a merge edge and $\alpha$ is a parent of $\beta$, then there exists a node~$\gamma \neq \alpha$ which is connected to $\beta$ by a merge edge and is a parent of $\beta$. 
Moreover, for the split and the merge case, it holds that $\val(\alpha) = \val(\beta)  = \val(\gamma)$. 

Given a set of sequence of operations $\mathbb{S}$, the transformation graph $G_{\mathbb{S}}(x,y)$ is unique.
Given a transformation graph $G$, it may be derived from different sequences of operations since a sequence $\mathbb{S}$ is only partially ordered.
Taking the example of a transformation graph (see Figure \ref{fig:example_transformation_graph}), that means, that for example the move operation between the node $u_1$ and $\alpha$ and the move operation between $u_4$ and $v_3$ are interchangeable.   

\begin{figure}[t]
    \centering
    \includegraphics[width = 0.25\textwidth]{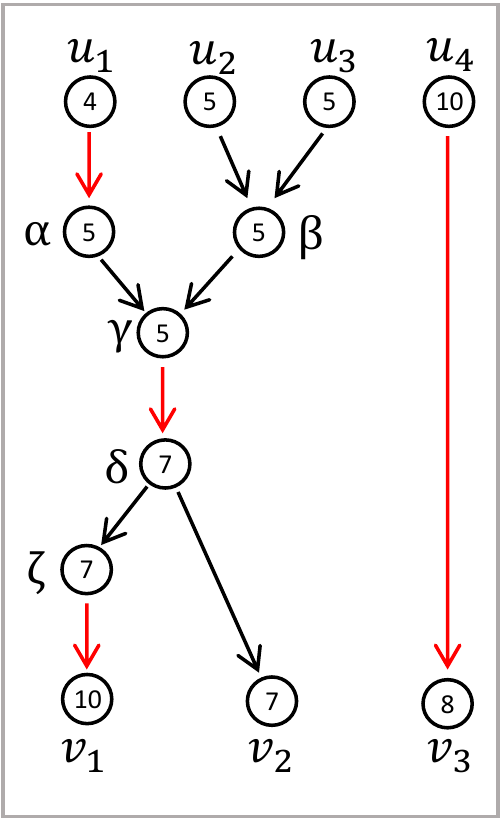}
    \caption{Optimal transformation graph of $x=(4,5,5,10)$ to $y=(10,7,8)$ for $c=0.1$. Move edges are colored in red in this work. The cost of a move edge is the difference between the source and the target point. We have total cost merge and split cost $3c$ and move cost of 8. Hence, the distance between $x$ and $y$ is $d(x,y) = 8.3$.}
    \label{fig:example_transformation_graph}
\end{figure}

A \emph{transformation path}, in short \emph{path}, in~$G_{\mathbb{S}}(x,y)$ is a directed path from a source node $u_i \in N(x)$ to a sink node $ v_j \in N(y)$. 
We say that $u_i$ is \emph{aligned} to $v_j$. 
A path can be further characterized by its sequence of edge labels. For example, in Figure \ref{fig:example_transformation_graph}, the path from $u_1$ to $v_2$ is an inc-merge-inc-split path. 
Analogously, we say that the path consists of consecutive inc-merge-inc-split edges. 

A transformation path is \emph{monotonic} if the move edges on this path are only inc- or only dec-edges.  
A monotonic path may be specified as \emph{increasing} or \emph{decreasing}. 
A transformation is \emph{monotonic} if the corresponding transformation graph only contains monotonic paths. 
A transformation graph is \emph{optimal} if it belongs to an optimal transformation.
Two transformation graphs are \emph{equivalent} if they have the same sink and source nodes.

In the next section, we recap some known properties about the transformation graph and extend them proving some new essential characteristics.

\subsection{Properties of Transformation Graphs}
In the following, we summarize some important known properties about the transformation graph by Stefan et al. \cite{stefan2012move}. 
The first lemma states that there exists an optional transformation graph without split and merge edges that occur directly after another on a path.
\begin{lemma}[Proposition 2 \cite{stefan2012move}]
\label{prop:merge_split}
For any two time series~$x$ and~$y$, there exists an optimal sequence of transformation operations~$\mathbb{S}$ such that $G_{\mathbb{S}}(x,y)$ contains no consecutive merge-split or split-merge edges. 
\end{lemma}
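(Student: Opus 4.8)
The plan is to prove the lemma by a local exchange argument on an optimal transformation graph, combined with a termination potential. Suppose $G_{\mathbb{S}}(x,y)$ is optimal but contains, on some path, two consecutive edges forming a merge-split or a split-merge pattern. I would exhibit a local surgery that deletes the offending intermediate node, re-routes a constant number of edges, does not increase the total cost, and strictly decreases the number of split and merge edges in the graph. Since this count is a nonnegative integer, iterating the surgery terminates and yields an optimal graph with no consecutive merge-split or split-merge edges. Because the surgery never touches the source nodes $N(x)$ or sink nodes $N(y)$, the resulting graph remains equivalent and still encodes a transformation of $x$ into $y$.

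For the \emph{merge-split} case, two parents $\alpha,\epsilon$ merge into an intermediate node $\beta$, which is then split into two children $\gamma_1,\gamma_2$. By the value constraints on split and merge edges recalled in the preliminaries, all five nodes share a common value $v$. I would delete $\beta$ together with its two merge edges and two split edges (four edges, total cost $2c$) and instead join each former parent directly to one former child by a value-preserving move edge, i.e.\ a $\move$ with $w=0$ of cost $0$, matching left input to left output and right input to right output. The \emph{split-merge} case is symmetric: a node $\alpha$ splits into $\beta_1,\beta_2$ and then $\beta_1$ merges with some node $\epsilon$ into $\delta$; locally there are again two inputs $\alpha,\epsilon$ and two outputs $\beta_2,\delta$, all of value $v$, so I would delete $\beta_1$ and its four incident split and merge edges and connect $\alpha,\epsilon$ directly to $\beta_2,\delta$ by zero-cost move edges. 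In both cases the cost drops by $2c\ge 0$ and two split edges and two merge edges vanish, so the potential strictly decreases.

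The routine part is this cost and edge-count bookkeeping. The part that needs care, and what I expect to be the main obstacle, is verifying that the rewired directed acyclic graph is still a \emph{legal} transformation graph, i.e.\ that some topological order of it realizes an actual sequence of move, split, and merge operations taking $x$ to $y$. Because split and merge act only on \emph{adjacent} points, the surgery must preserve the linear left-to-right order of the points; the real work is checking, in each of the two cases, that the left-right matching of inputs to outputs is consistent, that the external edges incident to $\gamma_1,\gamma_2,\beta_2,\delta$ and to the far ends of $\alpha,\epsilon$ remain correctly attached, and that the rewiring introduces no new consecutive merge-split or split-merge pattern. Once this positional consistency is established, the strict decrease of the split/merge-edge count guarantees termination and hence the existence of the desired optimal graph.
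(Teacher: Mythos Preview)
The paper does not supply its own proof of this lemma; it is quoted verbatim as Proposition~2 of Stefan et al.\ and used as a black box. Your local-exchange argument with the split/merge-edge count as a strictly decreasing potential is the standard way to establish such a statement, and it is correct.

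One small omission: in the split--merge case you treat only the asymmetric configuration, where one split child $\beta_1$ merges with an \emph{external} node $\epsilon$, yielding two inputs $\alpha,\epsilon$ and two outputs $\beta_2,\delta$. There is also the degenerate symmetric configuration in which the two split children $\beta_1,\beta_2$ merge back with each other; then there is a single input $\alpha$ and a single output $\delta$, and the surgery is simply to contract $\alpha$ to $\delta$ via one zero-cost move. This is even easier and fits the same bookkeeping (four split/merge edges removed, one move edge added), so the potential still drops. With that case added, your argument is complete: inserting move edges---even of weight~$0$---cannot create new merge--split or split--merge patterns, so the legality concern you flag as the ``main obstacle'' is indeed routine, and termination follows from the nonnegative integer potential.
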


By construction, two consecutive move-move edges are not useful, since they can be combined into one move edge.
We extend Lemma \ref{prop:merge_split} to further path restrictions in an optimal transformation graph. 
That is, that there exists an optimal transformation graph without paths containing consecutive split-move-merge edges. 

\begin{lemma}
\label{prop:split-move-merge}
For any two time series~$x$ and~$y$, there exists an optimal sequence of transformation operations~$\mathbb{S}$ such that $G_{\mathbb{S}}(x,y)$ contains no consecutive split-move-merge edges. 
\end{lemma}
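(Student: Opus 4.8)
The plan is to prove the statement by a \emph{local exchange} combined with a monovariant: starting from an arbitrary optimal transformation graph, I would repeatedly excise each occurrence of the forbidden pattern by a cost-non-increasing local rewiring that strictly decreases the number of edges. Since the edge count cannot decrease forever, the process terminates at an optimal graph in which no split-move-merge path survives, and from this graph one reads off an optimal operation sequence~$\mathbb{S}$ with the desired property.

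First I would fix notation for one occurrence. Let the offending path be $a\xrightarrow{\text{split}} b\xrightarrow{\text{move}} c\xrightarrow{\text{merge}} d$. Because $a$ splits it has a second split-child $b'$, and because $d$ results from a merge it has a second merge-parent $c'$. The structural conditions on split and merge edges force $\val(a)=\val(b)=\val(b')=:v$ and $\val(c)=\val(c')=\val(d)=:v'$, where $v'=v+w$ and $w$ is the weight of the move $b\to c$. I would record that $b$ and $c$ are \emph{clean}: each operation type fixes a node's unique outgoing role, so $b$ has only the move edge leaving it and $c$ only the merge edge, and each has a single incoming edge along the path; hence the segment $b\to c$ carries no branching.

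The key observation is that the gadget is \emph{value- and position-preserving}: its two entry nodes $a$ (value $v$) and $c'$ (value $v'$) carry exactly the values of its two exit nodes $b'$ (value $v$) and $d$ (value $v'$), and tracking the sequence through split, move, and merge shows that $c'$ and $d$ stay on the same side (left or right) of the $a/b'$ strand. I would therefore build $G'$ by deleting $b$ and $c$ with the five gadget edges, contracting $a$ with $b'$ (legitimate since $\val(a)=\val(b')$) and $c'$ with $d$ (legitimate since $\val(c')=\val(d)$). The contracted node inherits its incoming edges and source status from $a$ (resp.\ $c'$) and its outgoing edges and sink status from $b'$ (resp.\ $d$), so the source and sink node sets are unchanged and everything up- and downstream of the gadget is left intact.

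Finally I would close the argument quantitatively. The rewiring removes exactly one split, one move, and one merge, so $\cost(G')=\cost(G)-(2c+\lvert w\rvert)\le\cost(G)$; as $G$ is optimal, so is $G'$, and $G'$ has strictly fewer edges. Iterating on any remaining occurrence thus terminates in an optimal, split-move-merge-free graph. \emph{The main obstacle} I expect is the bookkeeping of the third paragraph: verifying rigorously that the two contractions preserve the horizontal order so that every downstream merge still acts on two adjacent equal-valued points and every downstream split still produces adjacent points, and checking that contracting the ancestor–descendant pairs $a,b'$ and $c',d$ introduces no cycle. The degenerate case $w=0$ needs a remark as well, since the move is then trivial and the path already contains a consecutive split-merge, which is excluded by Lemma~\ref{prop:merge_split}; I note, however, that the bypass above is insensitive to the value of $w$ and so handles this case uniformly.
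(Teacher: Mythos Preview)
Your approach is correct and rests on the same local observation as the paper: any split--move--merge gadget can be excised at a saving of $2c+\lvert w\rvert$, so an optimal graph can be taken free of the pattern. The paper, however, argues by a two-case contradiction: Case~I is the special situation where your $b'$ and $c'$ are themselves joined by a move edge (so the gadget is a ``diamond'' $\alpha\rightrightarrows\beta$), and Case~II is the generic situation you describe; in each case the paper exhibits the cheaper replacement and concludes that the pattern cannot sit inside an optimal graph. Your contraction of $a$ with $b'$ and of $c'$ with $d$ handles both cases at once---in Case~I the surviving edge $b'\to c'$ becomes exactly the single move the paper writes down---and your edge-count monovariant gives a clean termination argument that also covers the degenerate $2c+\lvert w\rvert=0$ situation, where the paper's strict inequality does not literally yield a contradiction. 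The bookkeeping you flag (horizontal adjacency, acyclicity) is genuine but is treated at the same informal level in the paper's own proof, which simply appeals to ``editing the part of the graph as shown'' in a figure.
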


\begin{proof}
Assume an optimal transformation graph $G_{\mathbb{S}}(x,y)$ including split-move-merge edges. Since the underlying set of transformation operations $\mathbb{S}$ of $G$ is partially ordered, we can reorder the operations in $\mathbb{S}$, choosing an order, where split, move and merge operations are directly applied after one another. 
Figure \ref{fig:lemma_split_move_merge} shows two different possibilities how consecutive split-move-merge edges may be contained in a transformation graph.  \\
\indent\textit{Case I}: We consider a split at node $\alpha$ to the nodes $\alpha'$ and $\alpha''$ where $\val(\alpha) = \val(\alpha') = \val(\alpha'')$. It is followed by two move edges from $\alpha'$ to $\beta'$ and from $\alpha''$ to $\beta''$ and a merge of $\beta'$ and $\beta''$ (see Figure \ref{fig:lemma_split_move_merge}a). Therefore, the values added to $\val(\alpha)$ have to be equal on both move edges, that is a value $w\in \mathds{R}$. 
The cost of these transformation operations are $2c + 2\vert w \vert$. Consider replacing the two split-move-merge edges with one direct move edge from $\alpha$ to $\beta$ adding $w$ to $\val(\alpha)$ (see Figure \ref{fig:lemma_split_move_merge}b). This replacement leads to an equivalent transformation with cost  $\vert w \vert < 2c + 2\vert w \vert$. 
This is a contradiction to our assumption that $G$ is optimal. \\
\indent\textit{Case II}: Consider the part of a transformation graph in Figure \ref{fig:lemma_split_move_merge}c. There is a split at $\alpha$ to $\alpha'$ and $\alpha''$. The node $\alpha''$ is connected by a move edge to $\beta'$ adding a value $w$ to $\val(\alpha'')$. The node $\beta'$ merges with $\beta''$ to $\beta$. 
Deleting the split-move-merge edge and editing the part of the graph as shown in Figure \ref{fig:lemma_split_move_merge}d leads to an equivalent transformation graph, saving cost of $2c +\vert w \vert$. This is a contradiction to our assumption that $G$ is optimal. 
\end{proof}

\begin{figure}
    \centering
    \includegraphics[width = 0.9\textwidth]{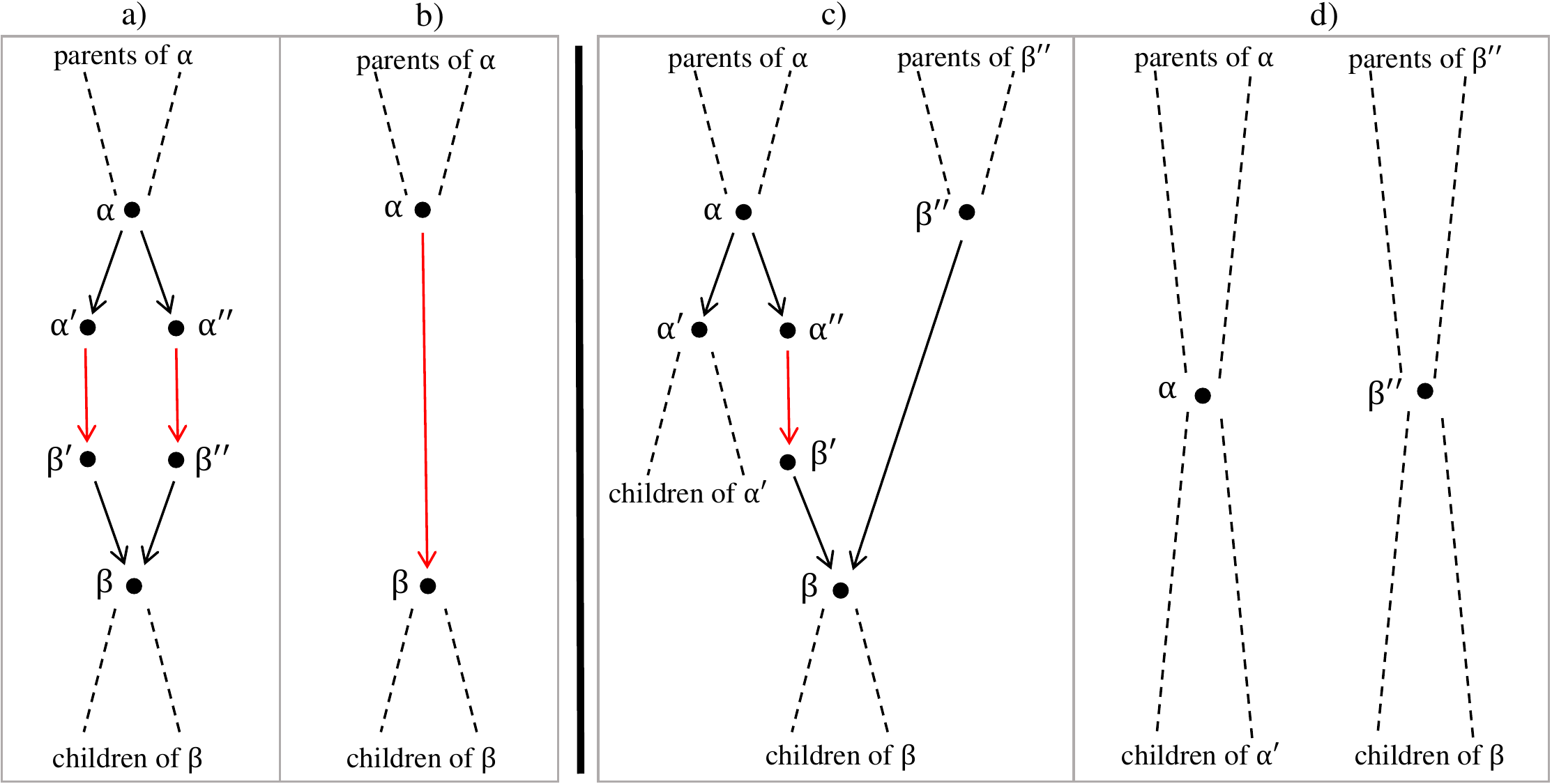}
    \caption{a) First possibility of a transformation graph including consecutive split-move-merge edges. b) Equivalent transformation graph to a). c) Second possibility of a transformation graph including consecutive split-move-merge edges. d) Equivalent transformation graph to~c)}
    \label{fig:lemma_split_move_merge}
\end{figure}

The next lemma states that there is always an optimal monotonic transformation.

\begin{lemma}[Monotonicity lemma \cite{stefan2012move}]
\label{lemma:monotonicity}
For any two time series $x$ and $y$, there exists an optimal transformation that converts $x$ into $y$ and that is monotonic. 
\end{lemma}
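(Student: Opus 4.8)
The plan is to start from an optimal transformation graph and repair non-monotonic paths by a local exchange argument, leaning on the structural restrictions already established in Lemma~\ref{prop:merge_split} and Lemma~\ref{prop:split-move-merge}. First I would fix an optimal transformation graph $G = G_{\mathbb{S}}(x,y)$ and, invoking the move-combination remark together with Lemmas~\ref{prop:merge_split} and~\ref{prop:split-move-merge}, assume that $G$ contains no two consecutive move edges and no consecutive merge-split, split-merge, or split-move-merge edges. Among all optimal graphs with these properties I would additionally pick one minimizing the total number of move edges; this secondary potential is what the exchange step will decrease when a strict cost improvement is not available.

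Suppose for contradiction that such a $G$ still has a non-monotonic path $P$ from some source $u_i$ to some sink $v_j$. Walking along $P$ and tracking $\val$, there must be a turning point, namely a node $\beta$ on $P$ whose value exceeds that of its predecessor and successor on $P$ (a ``peak''; the ``valley'' case is symmetric, with the roles of inc- and dec-edges exchanged). Because consecutive move edges have been combined, the value stays constant between the ascending and descending moves, so the top of the peak is a plateau consisting only of split edges or only of merge edges (the two types cannot alternate by Lemma~\ref{prop:merge_split}). Hence the local picture at $\beta$ reduces to one of two shapes: an inc-move into a node that is then split, with $P$ continuing down one branch, or an inc-move into a node that is merged, with $P$ continuing down afterwards.

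Then I would perform the exchange. Writing $a = \val(\alpha)$ for the value just before the ascent, $b = \val(\beta)$ at the peak, and $g$ for the value just after the descent, the path currently pays $(b-a) + (b-g)$ in move cost to cross the peak from value $a$ to value $g$. I would reroute so that the branch that must reach the high value $b$ (to feed the other child of the split, or the merge partner) is served by a single monotonic ascent, while the descending branch of $P$ is replaced by one direct move from $a$ to $g$, whose cost satisfies $\lvert g-a\rvert \le (b-a)+(b-g)$, with equality only when $b=\max(a,g)$, i.e.\ when one of the two original moves is a zero-cost move and the turn is not genuine. For a genuine peak, $b > \max(a,g)$, so this strictly lowers the move cost, contradicting optimality of $G$; in the degenerate case the zero-cost move edge is simply deleted, strictly lowering the number of move edges and contradicting the choice of $G$. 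Iterating removes all turning points and yields a monotonic optimal transformation of $x$ into $y$.

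The main obstacle I anticipate is not the inequality, which is the elementary arithmetic above, but making the reroute legitimate in the presence of shared split and merge nodes: the peak node typically lies on several paths at once, so I must verify that moving the split earlier (or the merge later) produces a well-defined transformation graph that still has exactly the source nodes $N(x)$ and sink nodes $N(y)$, respects the equal-value constraint $\val(\alpha)=\val(\beta)=\val(\gamma)$ at every split and merge, and does not raise the cost on any other path through $\beta$. Checking that the neighboring branch still reaches value $b$ at no extra cost, and that the modification reintroduces none of the forbidden patterns, is what makes the induction on the number of move edges genuinely well-founded, and is where the real work of the proof lies.
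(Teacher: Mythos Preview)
The paper does not give its own proof of this lemma: it is quoted verbatim as a result of Stefan et al.\ \cite{stefan2012move} and is immediately followed by the summary paragraph ``Summarizing the above properties\ldots'' with no \texttt{proof} environment in between. So there is nothing in the present paper to compare your argument against; the monotonicity lemma functions here as an imported black box.

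As for the proposal itself, the overall strategy---pick an optimal graph with the fewest move edges, locate a peak (or valley) on a non-monotonic path, and reroute so that the triangle inequality $\lvert g-a\rvert \le (b-a)+(b-g)$ replaces the two opposing moves by one---is the natural one and is essentially how Stefan et al.\ argue. You are right that the delicate part is not the arithmetic but the bookkeeping at the peak node, which is shared with other branches via split or merge edges: one must check that lowering the plateau value on one side does not force an illegal value change on the sibling branch, and that no forbidden merge--split or split--move--merge pattern is reintroduced. Your sketch flags this but does not carry it out; until the two local configurations (inc--split and inc--merge at the peak) are each handled explicitly, with the modified graph written down and its cost compared edge by edge, the argument remains a plan rather than a proof. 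One minor point: invoking Lemma~\ref{prop:split-move-merge} here is harmless in the present paper's ordering, but note that the original monotonicity proof in \cite{stefan2012move} predates that lemma and does not rely on it, so you may find it cleaner to use only Lemma~\ref{prop:merge_split} and the move-combination remark.
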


Summarizing the above properties, there always exists an optimal transformation graph only containing paths from source to sink nodes of the following consecutive edge types: 
\begin{description}
    \item[Type 1:] move - move - $\cdots$ - move - move 
    \item[Type 2:] split/move - split/move - $\cdots$ - split/move -split/move
    \item[Type 3:] merge/move - merge/move - $\cdots$ - merge/move - merge/move
    \item[Type 4:] Type 3 - merge - move - split - Type 2
\end{description}

Note, that paths of Type 2 and Type 3 contain at least one split or merge edge, respectively. In the following, we consider only transformation graphs that contain only paths of Type~1--4. 
To identify independent transformation operations, we decompose an optimal transformation graph into its weakly connected components. 
A weakly connected component is a tree if its underlying subgraph is a tree. 

In the following, we give a more substantive view on those weakly connected components which are trees (see Figure \ref{fig:tree_defs}). \\
The first ones are \emph{trees of Type 1}. These trees contain only paths of Type 1, that is, there is only one move edge in the tree connecting one source and sink node (see Figure \ref{fig:tree_defs}a).
A weakly connected component containing only paths of Type 2 has only one source node and at least two paths of Type 2, that is, it has at least two sink nodes. It is a tree since all all nodes have indegree 1. 
We call these trees \emph{trees of Type 2} (see Figure \ref{fig:tree_defs}b). 
\emph{Trees of Type 3} contain only paths of merge or move operations (Type 3). These trees have at least two source nodes whose paths reach the same sink node. All nodes have outdegree 1 (see Figure \ref{fig:tree_defs}c). 
The last weakly connected component which is a tree is a \emph{tree of Type 4}.
These trees include only paths of Type 4. 
They contain only and at least two paths of Type 4, that is, that they have at least two source and two sink nodes (see Figure \ref{fig:tree_defs}d). 
For the following sections, we need a more detailed description of Type-4-Trees.
All source nodes merge and move to some intermediate node $\sigma$.
After $\sigma$, there is a move to $\sigma^*$ with subsequent split and move edges leading to the source nodes. 
All source and sink nodes of this tree are connected by one single path which we call a \emph{bottleneck} with $\sigma$  as the \emph{first bottleneck node} and $\sigma^*$ as the \emph{second bottleneck node}.
All nodes above and including the first bottleneck node have outdegree 1. We call this subgraph the \emph{upper tree} of $\sigma$. 
All nodes below and including the second bottleneck node have indegree 1. We call this subgraph a \emph{lower tree} of $\sigma^*$.  \\

\begin{figure}
    \centering
    \includegraphics[width=0.7\textwidth]{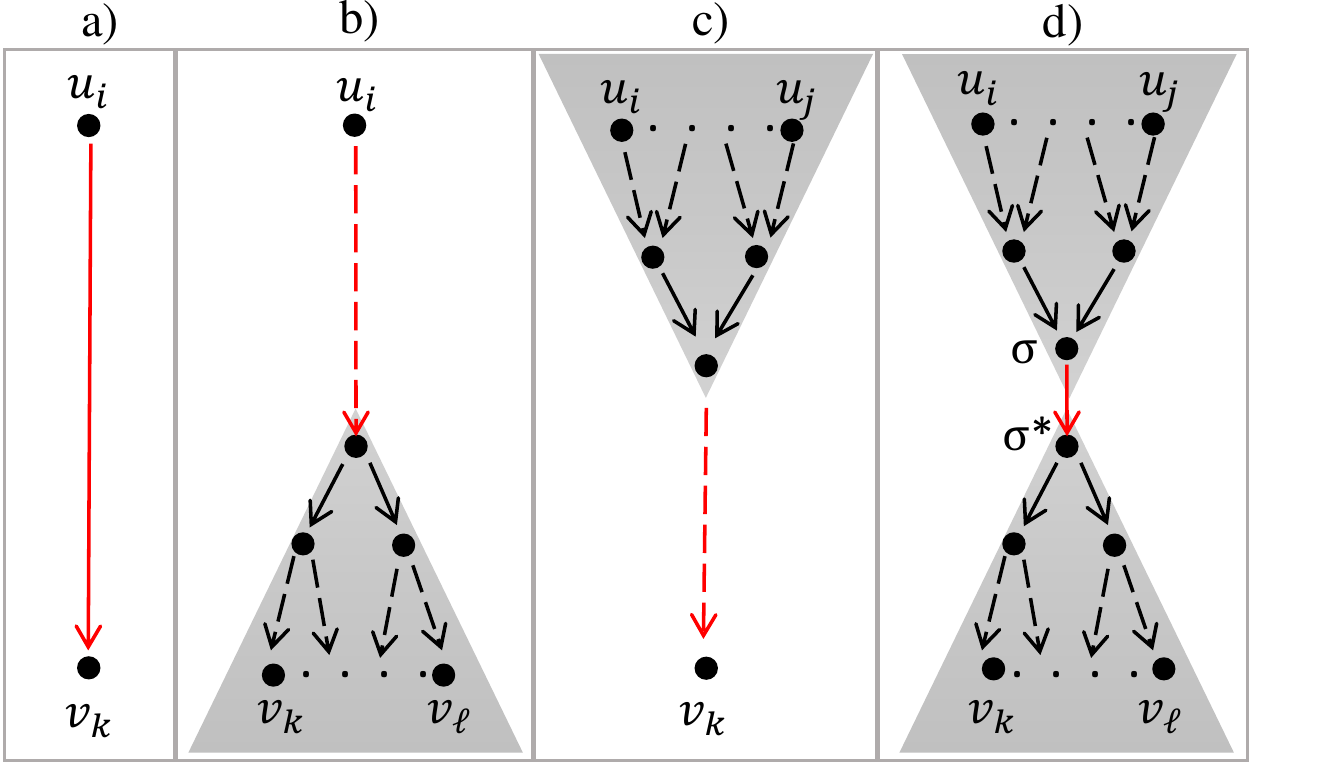}
    \caption{All red edges are move operations. Black arrows are merge or split edges. The dashed lines represent paths from one node to another without specifying how many intermediate node are on them.  a) Tree of Type 1. b) Tree of Type 2. c) Tree of Type 3. d) Tree of Type 4.}
    \label{fig:tree_defs}
\end{figure}

The following lemma states that there always exists an optimal transformation graph, where every weakly connected component is a tree of Type~1--4. 

\begin{lemma}
Let $x$ and $y$ be two time series. Then there exists an optimal transformation graph $G_{\mathbb{S}}(x,y)$ such that its weakly connected components are only trees of Type~1--4. 
\end{lemma}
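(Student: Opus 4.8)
The plan is to not modify the graph at all, but to argue that any optimal transformation graph $G_{\mathbb{S}}(x,y)$ already having only Type~1--4 paths---which exists by combining Lemmas~\ref{prop:merge_split}, \ref{prop:split-move-merge} and~\ref{lemma:monotonicity}, as summarized above---automatically has tree components. I would first record the local degree bounds forced by the path types: a merge produces a node of indegree~$2$ and a split produces a node of outdegree~$2$, while every other intermediate node, being created and consumed by a single operation, has indegree and outdegree at most~$1$. In particular no node is simultaneously a \emph{split node} (outdegree~$2$) and a \emph{merge node} (indegree~$2$): a node produced by a merge and then split would realize the consecutive merge--split pattern excluded by Lemma~\ref{prop:merge_split}. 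Hence nodes of total degree at least~$3$ are exactly the split nodes and the merge nodes, and these two classes are disjoint.

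The heart of the proof is to show that each weakly connected component $C$ is acyclic in the underlying undirected graph, which together with connectedness makes it a tree. I would argue by contradiction: an undirected cycle in the acyclic directed graph~$C$, oriented by edge direction, cannot consist only of pass-through nodes (that would form a directed cycle), so it contains a node $p$ that is a local maximum (both incident cycle edges leave $p$) and a node $q$ that is a local minimum (both incident cycle edges enter $q$). Then $p$ has outdegree at least~$2$ and $q$ indegree at least~$2$, so $p$ is a split node and $q$ is a merge node. Following the cycle from $p$ downward to the first local minimum yields a directed path from $p$ to $q$ whose first edge is a split edge and whose last edge is a merge edge. Extending this path backward to a source and forward to a sink, which is possible in any DAG, produces a source-to-sink path on which a split edge precedes a merge edge. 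But none of Type~1--4 permits this: Type~1 has neither edge, Types~2 and~3 forbid merges and splits respectively, and in Type~4 every merge precedes the unique merge--move--split transition and hence every split. This contradiction shows that $C$ is a tree.

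It then remains to classify each tree by its edge content and to match it to Figure~\ref{fig:tree_defs}. If $C$ has only move edges, the degree bounds make it a single move edge, a tree of Type~1. If $C$ has splits but no merges, every non-source node has indegree exactly~$1$, so $|E|=|V|-s$ with $s$ the number of sources; comparing with $|E|\ge|V|-1$ for a connected graph forces $s=1$, yielding a tree of Type~2, and dually a component with merges but no splits is a tree of Type~3. If $C$ contains both, I would invoke the no-split-before-merge property once more: it prevents any path from reaching a merge after leaving a split, so all merge nodes lie above, and all split nodes below, a single transition. This gives the bottleneck edge $(\sigma,\sigma^*)$ together with the upper tree of outdegree-$1$ nodes and the lower tree of indegree-$1$ nodes, i.e.\ exactly a tree of Type~4.

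I expect the acyclicity step to be the main obstacle, specifically the clean translation of an arbitrary undirected cycle into a forbidden split-before-merge pattern on some source-to-sink path. Once that is established, both the type classification and the uniqueness of the Type-4 bottleneck follow from the same prohibition together with routine indegree and outdegree counting.
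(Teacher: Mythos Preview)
Your approach is correct and genuinely different from the paper's. The paper argues \emph{constructively}: it picks a path of Type~4 in a component, identifies the bottleneck nodes~$\sigma,\sigma^*$, observes that adding merge/move edges above~$\sigma$ preserves outdegree~$1$ and adding split/move edges below~$\sigma^*$ preserves indegree~$1$, and then shows that attaching any further node to this Type-4 tree would create a split--move--merge or merge--split pattern forbidden by Lemmas~\ref{prop:merge_split} and~\ref{prop:split-move-merge}. Acyclicity is thus obtained \emph{a posteriori}, as a by-product of the maximality argument. You instead prove undirected acyclicity \emph{first}, by the clean observation that any undirected cycle in a DAG forces a local source and a local sink, hence a split node with a directed path to a merge node, hence a source-to-sink path with a split edge preceding a merge edge---ruled out by all four path types. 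This is more transparent graph-theoretically and makes explicit a fact the paper leaves implicit (that components containing only Type~1--3 paths are already trees).

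One point deserves more than the phrase ``routine indegree and outdegree counting'': in the Type-4 case you still need to argue that the bottleneck is \emph{unique}, i.e.\ that all source-to-sink paths share the same transition edge $(\sigma,\sigma^*)$. Your counting gives $\#\text{merge nodes}=s-1$ and $\#\text{split nodes}=t-1$, but that alone does not localize them. The missing step is again the no-split-before-merge prohibition applied inside the tree: for any split node~$p$ with no split ancestor, its unique incoming edge must be the bottleneck of \emph{every} path through~$p$, because the tail of that edge cannot be a split node (choice of~$p$) and every incoming edge to a merge node on any path would place a merge after~$p$'s split. A short argument of this kind, or the paper's ``cannot be extended'' argument restricted to the subtree above~$\sigma$ and below~$\sigma^*$, closes the gap; without it the existence of a \emph{single} transition is asserted rather than shown.
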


\begin{proof}
We show that if a weakly connected component of a given optimal transformation graph $G$ is not a tree of Type~1--3, then it has to be a tree of Type~4. 
We consider a path of Type~4. 
In a path of Type~4 there is one part with consecutive merge-move-split edges.  
Let $\sigma$ be the node between this merge and move operation and $\sigma^*$ be the node between this move and split operation.
We add further move and merge operations to the part above $\sigma$. 
We still have outdegree 1 of each node above~$\sigma$. 
The same applies for the part below $\sigma^*$: Adding further move and split operations still leads to indegree 1 of each node below $\sigma^*$. 
Hence, the subgraph of~$G$ consisting of the edge from $\sigma$ to $\sigma^*$ and all move or merge edges above $\sigma$ and all move or split edges below $\sigma^*$ is a tree of Type~4. 
We now show that this tree structure cannot be extended without violating our assumptions of the above Lemmas. 
Let $\alpha$ be a node in the upper tree that is connected to a node~$\alpha'$ that is neither in the upper nor in the lower tree. 
If $\alpha$ is a source node or an intermediate node except of $\sigma$, the first operation is a split, where one split edge is on the path to $\sigma$ and the other is on the path to $\alpha'$. 
We get a contradiction to Lemma \ref{prop:split-move-merge}, because the first path includes consecutive split-move-merge edges. 
If $\alpha=\sigma$, we have again a split at $\alpha$, which is a contradiction to Lemma \ref{prop:merge_split} because we have a consecutive merge-split edge. 
The same argumentation is applied for an extension of the lower tree, because it is the symmetric case of the one we described. 
\end{proof}

It follows that we can decompose an optimal transformation graph $G_{\mathbb{S}}(x,y)$ into a sequence of distinct trees $(\mathcal{T}_1,\ldots,\mathcal{T}_t)$.
Each tree $\mathcal{T}_i$ has a set of sink nodes $N_{\mathcal{T}_i}(x)$ and a set of source nodes $N_{\mathcal{T}_i}(y)$.
All nodes of $N_{\mathcal{T}_i}(x)$  and $N_{\mathcal{T}_i}(y)$ are successors of $N_{\mathcal{T}_{i-1}}(x)$ and $N_{\mathcal{T}_{i-1}}(y)$, respectively.
We call a tree \emph{monotonic} if all paths in the tree are monotonic. 
Further a tree may be specified as \emph{increasing} or \emph{decreasing}. 
Two trees are \emph{equivalent} if they have the same set of source and sink nodes. 
The \emph{cost of a tree $\mathcal{T}$} is the sum of the cost of all edges in the tree.\\

In the following, we denote an optimal transformation graph fulfilling all the above properties as an \emph{optimal transformation forest}.

\section{Properties of the MSM Metric}
\label{chap:properties_MSM_Metric}
As a main result of this section, we prove that for a set of time series $X$ there exists a mean $m$ such that all points of $m$ are points of at least one time series of $X$. 
To this end, we first analyze the structure of trees of optimal transformation forests.
Some of the following results are only proven for trees of Type~4 since these trees include all types of possible paths; as a consequence the proofs for other tree types are simpler versions of the ones for Type~4.

\subsection{Properties of Alignment Trees}
We first regard some properties of so-called \emph{subtrees}, which are substructures of trees of Type 4.  
\subsubsection{Subtrees}
\label{section:subtree}
Let  $G_{\mathbb{S}}(x,y)$ be an optimal transformation forest.
For an intermediate node $\delta$ in $G$, that has two parent nodes connected to it by a merge edge each, let $\mathcal{S}(\delta)$ be the \emph{subtree of} $\delta$ consisting of all source nodes of $G$ that have a path to~$\delta$ and of all nodes and edges on these paths. 
Each subtree has a set of source nodes $N_{\mathcal{S}(\delta)}(x)$.
Let $N_{\mathcal{S}(\delta)}(x) = \{u_i,\ldots, u_j\}$ be the source nodes of~$\mathcal{S}(\delta)$; we call $u_i$ the \emph{start node of} $\mathcal{S}(\delta)$ and $u_j$ the \emph{end node of} $\mathcal{S}(\delta)$.  
A subtree is \emph{increasing} (\emph{decreasing}) if all paths to $\delta$ are increasing (decreasing). 
In the following, we will give some properties of subtrees. If there are two move edges to some nodes $\alpha_2$ and $\beta_2$ that merges to another node $\gamma$ (see Figure \ref{fig:my_subtree_parallel_edges}a), we first observe that these two move edges cannot be both increasing or decreasing.

\begin{lemma}
\label{lemma_parallel_edges}
Let $G_{\mathbb{S}}(x,y)$ be an optimal transformation forest with nodes $\alpha_1, \alpha_2, \beta_1, \beta_2$ and $\gamma$ and move edges between  $\alpha_1$, $\alpha_2$ and $\beta_1$, $\beta_2$. 
If $\alpha_2$ and $\beta_2$ merge to $\gamma$, then the edges between $\alpha_1$, $\alpha_2$ and $\beta_1$, $\beta_2$ cannot be both increasing or decreasing. 
\end{lemma}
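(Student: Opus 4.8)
The plan is to argue by contradiction through a local rerouting of the optimal transformation forest. First I would fix notation: set $a = \val(\alpha_1)$ and $b = \val(\beta_1)$ for the two parents, and let $m = \val(\gamma)$ denote the value after the merge. Since $\alpha_2$ and $\beta_2$ merge to $\gamma$, the merge precondition from the definition of the metric forces $\val(\alpha_2) = \val(\beta_2) = \val(\gamma) = m$. Hence the move edge $\alpha_1 \to \alpha_2$ adds $m - a$ and the move edge $\beta_1 \to \beta_2$ adds $m - b$, so this local piece of the graph (the two moves plus the merge) contributes cost $|m-a| + |m-b| + c$. Note that an increasing edge means the added value is strictly positive, so ``both increasing'' is exactly the assumption $m > a$ and $m > b$, while ``both decreasing'' is $m < a$ and $m < b$; the two cases are mirror images, so I would treat the increasing one and remark that the other is symmetric.

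Assume toward a contradiction that both edges are increasing, and without loss of generality that $a \le b$ (the roles of $\alpha$ and $\beta$ are symmetric in the statement). I would then replace the three edges $\alpha_1 \to \alpha_2$, $\beta_1 \to \beta_2$, and the merge of $\alpha_2,\beta_2$ into $\gamma$ by the following equivalent local gadget (cf.\ Figure~\ref{fig:my_subtree_parallel_edges}a): move $\alpha_1$ up to value $b$, merge this node with $\beta_1$, and then move the merged node up to value $m$. This gadget keeps exactly the same inputs $\alpha_1,\beta_1$ and the same output node $\gamma$ of value $m$, so the rest of the forest is untouched and the replacement yields an equivalent transformation.

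Next I would compare costs. The original piece costs $(m-a) + (m-b) + c$, whereas the gadget costs $(b-a) + c + (m-b) = (m-a) + c$; the single merge of cost $c$ appears in both and cancels. The net saving is therefore $m - b$, which is strictly positive precisely because the edge $\beta_1 \to \beta_2$ is increasing, i.e.\ $m > b$. This strictly lowers the total transformation cost, contradicting the optimality of $G_{\mathbb{S}}(x,y)$; hence the two edges cannot both be increasing. For the both-decreasing case I would take $b \le a$, route $\alpha_1$ \emph{down} to $b$, merge, and then move down to $m$, and the identical bookkeeping produces a saving of $b - m > 0$, again a contradiction.

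The step I expect to be the main obstacle is verifying that the new merge is legal, not just value-wise but positionally. I would lean on the fact that a move changes only a point's value and never its position, so $\alpha_1$ and $\beta_1$ occupy exactly the adjacent positions from which $\alpha_2$ and $\beta_2$ merged; after moving $\alpha_1$ up to value $b$, the node at $\alpha_1$'s position and $\beta_1$ now share the value $b$ and sit in consecutive positions, so the merge is well defined. To make this fully rigorous I would invoke the partial order of the operation sequence discussed earlier, reordering the move and merge operations so that the two rerouted moves and the new merge are applied consecutively, thereby confirming that the local substitution produces a well-formed transformation graph.
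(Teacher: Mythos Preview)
Your proof is correct and follows essentially the same local rerouting argument as the paper: both take the increasing case with $\val(\alpha_1)\le\val(\beta_1)$, replace the two parallel moves-then-merge by a single move of $\alpha_1$ up to $\val(\beta_1)$, an earlier merge, and one final move to $\val(\gamma)$, and exhibit the strict saving $\val(\gamma)-\val(\beta_1)>0$. Your extra paragraph on the positional legality of the new merge is more careful than the paper, which simply asserts the modified graph is equivalent.
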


\begin{figure}[t]
    \centering
    \includegraphics[width = 0.5\textwidth]{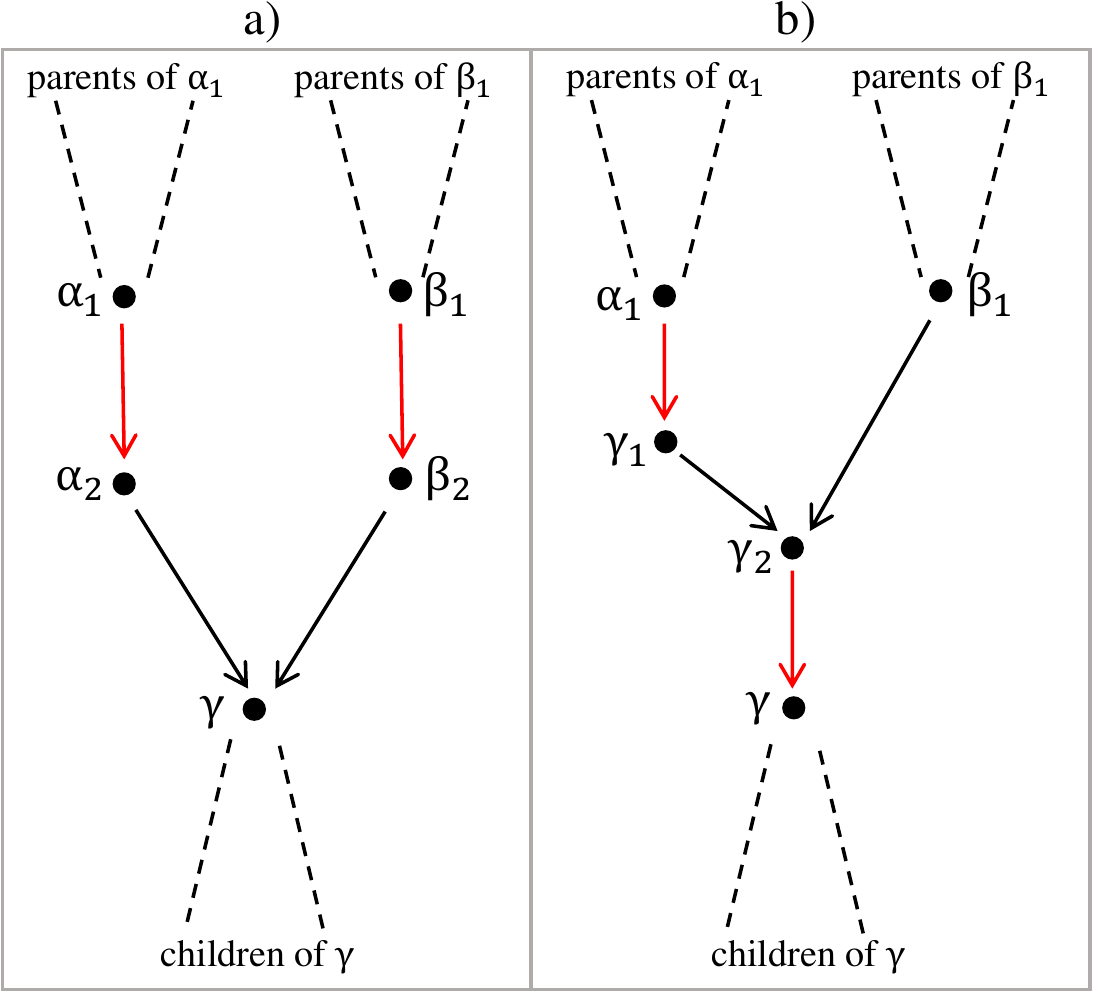}
    \caption{a) Merge  structure of the (intermediate) nodes $\alpha_1, \alpha_2, \beta_1, \beta_2$ and $\gamma$. b) Equivalent tree to the tree in a).}
    \label{fig:my_subtree_parallel_edges}
\end{figure}

\begin{proof}
Without loss of generality, we prove this assumption for increasing paths. Assume towards a contradiction two inc-edges between  $\alpha_1$ and $\alpha_2$ and between $\beta_1$ and $\beta_2$ with $\val(\alpha_1) < \val(\beta_1)$ (see Figure \ref{fig:my_subtree_parallel_edges}a). 
Since $\val(\alpha_2) = \val(\beta_2) = \val(\gamma)$ the cost for these move operations are $ 2\val(\gamma) - \val(\beta_1) - \val(\alpha_1)$.
We now consider a modified merge structure with an additional intermediate node~$\gamma_1$ with $\val(\gamma_1)~=~\val(\beta_1)$ (see Figure \ref{fig:my_subtree_parallel_edges}b). 
We now have an inc-edge from $\alpha_1$ to~$\gamma_1$, which merges with the new node~$\beta_1$ to a new node~$\gamma_2$. 
At $\gamma_2$, there is an inc-edge to $\gamma$. 
The modified transformation forest is equivalent to the old one, since the parent and children nodes of the regarded nodes stay the same (see Figure \ref{fig:my_subtree_parallel_edges}). 
The cost for the modified move operations are $\val(\gamma) - \val(\beta_1) + \val(\beta_1) - \val(\alpha_1) < 2\val(\gamma) - \val(\beta_1) - \val(\alpha_1)$ since $\val(\beta_1) < \val(\gamma)$. 
This is a contradiction to $G$ being optimal. 
\end{proof}

We now make two observations about the value of the node~$\delta$ in a subtree $\mathcal{S}(\delta)$. The first lemma states, that the value of $\delta$ is equal to one value of the source nodes of $\mathcal{S}(\delta)$. Recall that $u_1,\ldots,u_m$ are the source nodes of $G_{\mathbb{S}}(x,y)$ with values $x_1,\ldots, x_m$. 

\begin{lemma}
\label{lemma_max}
Let $\mathcal{S}(\delta)$  be an  increasing (decreasing)  subtree of $\delta$ in an optimal transformation forest $G_{\mathbb{S}}(x,y)$  with $N_{\mathcal{S}(\delta)}(x)~=~\{u_i, \ldots, u_j\}$. 
Then, $\val(\delta) = \max(x_i, \ldots, x_j)$ for increasing subtrees and  $\val(\delta) = \min(x_i, \ldots, x_j)$ for decreasing subtrees. 
\end{lemma}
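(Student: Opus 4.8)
The plan is to prove the increasing case, $\val(\delta) = \max(x_i,\ldots,x_j)$; the decreasing case then follows by the symmetric argument, exchanging inc-edges for dec-edges and $\max$ for $\min$. Write $M := \max(x_i,\ldots,x_j)$. I would split the proof into the two inequalities $\val(\delta) \ge M$ and $\val(\delta) \le M$.

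For the lower bound I would argue along individual transformation paths. Fix a source node $u_\ell \in N_{\mathcal{S}(\delta)}(x)$ and follow its path to $\delta$. Since $\mathcal{S}(\delta)$ is increasing, every move edge on this path is an inc-edge and hence does not decrease the value, while every merge edge preserves the value, because a merge forces $\val(\alpha)=\val(\beta)=\val(\gamma)$ on its two parents and its child. Thus the value is non-decreasing along the path, giving $\val(\delta) \ge \val(u_\ell) = x_\ell$. Taking the maximum over all $\ell \in \{i,\ldots,j\}$ yields $\val(\delta) \ge M$.

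For the upper bound I would use induction on the number $s = j-i+1 \ge 2$ of source nodes of $\mathcal{S}(\delta)$, with Lemma \ref{lemma_parallel_edges} as the key tool. The node $\delta$ has exactly two parents $p_1,p_2$ joined to it by merge edges, so $\val(p_1)=\val(p_2)=\val(\delta)$. As $\mathcal{S}(\delta)$ is increasing, every move edge in it is an inc-edge; if both $p_1$ and $p_2$ were reached by an incoming move edge, these would be two inc-edges whose targets merge to $\delta$, contradicting Lemma \ref{lemma_parallel_edges}. Hence at least one parent, say $p_1$, has no incoming move edge. Since in the upper part of a Type-4 tree each node is either a source, produced by a single move edge (indegree $1$), or produced by a merge (indegree $2$), the node $p_1$ must be either a source node or the top of a smaller subtree. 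In the former case $\val(\delta)=\val(p_1)=x_\ell \le M$ for the associated source. In the latter case $p_1$ is the top node of a strictly smaller increasing subtree $\mathcal{S}(p_1)$ whose source set is contained in $\{u_i,\ldots,u_j\}$; by the induction hypothesis $\val(p_1)$ equals the maximum of its source values, which is at most $M$. Either way $\val(\delta)=\val(p_1) \le M$.

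Combining the two inequalities gives $\val(\delta)=M$. I expect the only delicate point to be the upper bound: a naive cost-exchange argument is awkward because lowering $\val(\delta)$ also affects the single edge leaving $\delta$ toward the remainder of the tree, so one would have to track the induced change downstream. Routing the argument through Lemma \ref{lemma_parallel_edges} avoids this entirely, since it forces one merge-parent of every merge node to inherit its value ``for free'' from either a source node or a smaller subtree, which lets the induction close without any explicit cost computation.
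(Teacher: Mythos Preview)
Your proof is correct and takes essentially the same approach as the paper: the lower bound via path monotonicity, and the upper bound via Lemma~\ref{lemma_parallel_edges} forcing at least one merge-parent of each merge node to inherit its value without a move. The paper organizes the upper bound as a minimal-counterexample argument (take the first merge node $\alpha$ whose value exceeds the maximum of its sources and derive two parallel inc-edges into it) rather than your induction on the number of source nodes, but these are equivalent formulations of the same idea.
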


\begin{figure}[t]
    \centering
    \includegraphics[width=0.45\textwidth]{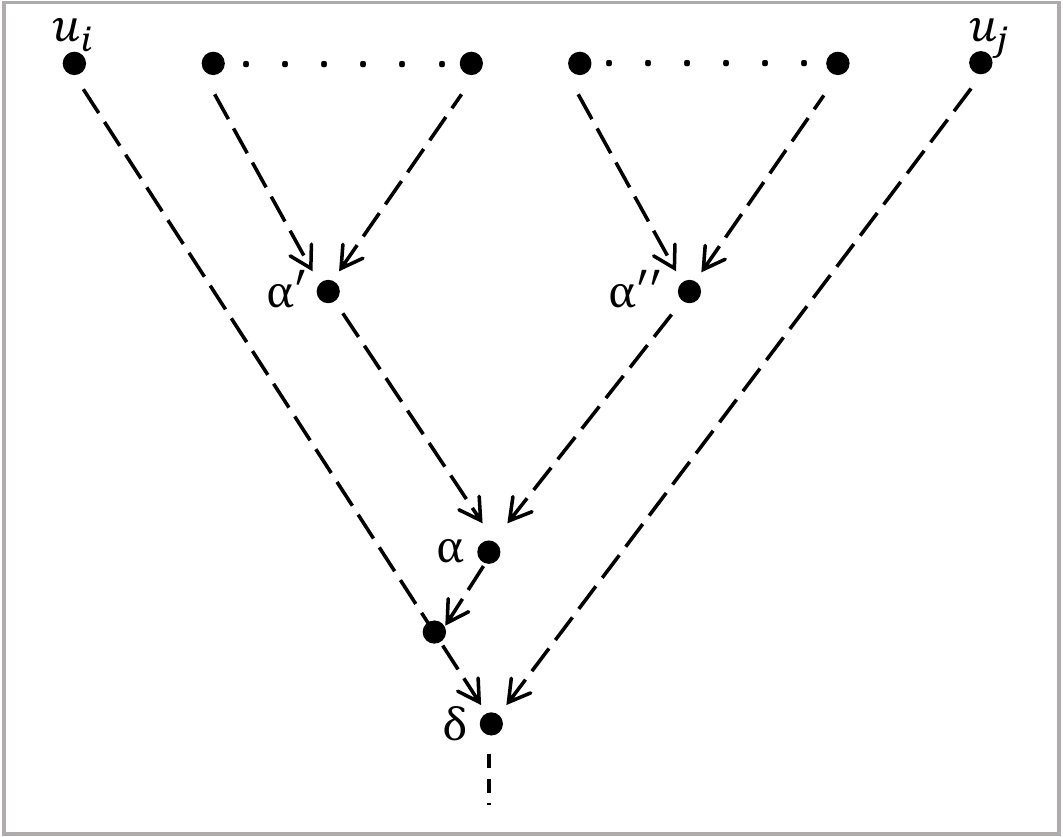}
    \caption{Structure of the subtree of $\delta$ explained in the proof of Lemma \ref{lemma_max}. Note, that this is only a schematic representation and that there may be further intermediate nodes which are not marked.}
    \label{fig:delta_max}
\end{figure}

\begin{proof}[Proof of Lemma \ref{lemma_max}]
Without loss of generality, we prove this assumption for increasing subtrees. Figure \ref{fig:delta_max} depicts this subgraph with all mentioned intermediate nodes. Assume towards a contradiction, that  $\val(\delta) \neq \max(x_i, \ldots, x_j)$. Therefore, there exists an $u_{\ell} \in \{u_i, \ldots, u_j\}$ such that $\val(\delta)\neq x_\ell$.
For $\val(\delta)< x_\ell$, it follows that we have a decreasing edge between $u_{\ell}$ and $\delta$, which is a contradiction.
For $\val(\delta)> x_\ell$, let $\alpha$ be the first intermediate node below $\{u_i, \ldots, u_j\}$ such that $\val(\alpha) \neq \max_{u\in \mathcal{N}_{\mathcal{S}(\alpha)}(x)} (u)$, where $\mathcal{N}_{\mathcal{S}(\alpha)}(x) \subseteq \mathcal{N}_{\mathcal{S}(\delta)}(x)$ are the source nodes of the subtree $\mathcal{S}(\alpha)$ of $\alpha$.
There exist two intermediate nodes  $\alpha'$ and $\alpha''$ such that for the source nodes of their subtrees $\mathcal{S}(\alpha')$ and  $\mathcal{S}(\alpha'')$, respectively, it holds that $\mathcal{N}_{\mathcal{S}(\alpha')}(x) \cup \mathcal{N}_{\mathcal{S}(\alpha'')}(x) = \mathcal{N}_{\mathcal{S}(\alpha)}(x)$. 
It follows that there exists a path from $\alpha'$ to $\alpha$ and from $\alpha''$ to~$\alpha$. Since $\alpha$ is the first intermediate node below $\mathcal{N}_{\mathcal{S}(\delta)}(x)$ such that $\val(\alpha)\neq \max_{u\in \mathcal{N}_{\mathcal{S}(\delta)}(x)} (u)$, it holds that $\val(\alpha') = \max_{u\in \mathcal{N}_{\mathcal{S}(\alpha')}(x)} (u)$ and $\val(\alpha'') = \max_{u\in \mathcal{N}_{\mathcal{S}(\alpha'')}(x)} (u)$. 
Consequently, there is an inc-edge on the path from~$\alpha'$ to $\alpha$ and on the path from $\alpha''$ to $\alpha$.
Applying Lemma \ref{lemma_parallel_edges}, this is a contradiction to $G$ being optimal.
\end{proof}

In the next lemma, we specify the value of $\delta$ in a subtree $\mathcal{S}(\delta)$, stating that it is always equal to the value of a specific source node of $\mathcal{N}_{\mathcal{S}(\delta)}(x)$. 

\begin{lemma}
\label{lemma:max_specification}
Let the nodes $\delta_1$ and $\delta_2$ merge to a node $\delta$ in an optimal transformation forest $G$. Let $\mathcal{S}(\delta)$ be the subtree of $\delta$, $\mathcal{S}(\delta_1)$ be the subtree of $\delta_1$ with the end node $u_{i-1}$, and $\mathcal{S}(\delta_2)$ be the subtree of $\delta_2$ with the start node $u_i$.
If the subtree $\mathcal{S}(\delta)$ is increasing (decreasing) and $x_{i-1}>x_i$, then $\val(\delta)=x_{i-1}$ $(\val(\delta)=x_i)$. If $x_{i-1}<x_i$, then $\val(\delta)=x_i$ ($\val(\delta)=x_{i-1}$ for decreasing subtrees).
\end{lemma}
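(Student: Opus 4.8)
The plan is to reduce the statement to the two child subtrees via Lemma~\ref{lemma_max} and then pin down $\val(\delta)$ by analysing the single merge that joins $\delta_1$ and $\delta_2$. I would treat only the increasing case; the decreasing case is symmetric (replace $\max$ by $\min$ and inc-edges by dec-edges throughout), and the two subcases $x_{i-1}>x_i$ and $x_{i-1}<x_i$ are mirror images of one another.

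First I would record what Lemma~\ref{lemma_max} already gives. Writing $N_{\mathcal{S}(\delta_1)}(x)=\{u_a,\dots,u_{i-1}\}$ and $N_{\mathcal{S}(\delta_2)}(x)=\{u_i,\dots,u_b\}$, both children are themselves increasing subtrees, so $\val(\delta_1)=\max(x_a,\dots,x_{i-1})$ and $\val(\delta_2)=\max(x_i,\dots,x_b)$, while $\val(\delta)=\max(x_a,\dots,x_b)=\max(\val(\delta_1),\val(\delta_2))$. Hence it suffices to show two things: that $\val(\delta)$ is the larger of the two child values, and that this larger value is in fact attained at one of the seam nodes $u_{i-1}$, $u_i$.

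Next I would analyse the merge at $\delta$. Its two merge-parents $p_1,p_2$ (on the $\delta_1$- and $\delta_2$-side respectively) satisfy $\val(p_1)=\val(p_2)=\val(\delta)$, and each is reached from $\delta_1$, resp.\ $\delta_2$, by a move edge. Applying Lemma~\ref{lemma_parallel_edges} to this configuration, these two move edges cannot both be increasing; since $\mathcal{S}(\delta)$ is increasing they cannot be decreasing either, so at least one of them is a zero-move. Consequently $\val(\delta)$ equals $\val(\delta_1)$ or $\val(\delta_2)$, namely the larger of the two. The same monotonicity also forbids $\val(\delta)$ from being strictly smaller than either child value, which is exactly what rules out $\val(\delta)=x_i$ when $x_{i-1}>x_i$: it would force a decreasing edge leaving the $\delta_1$-side, contradicting that $\mathcal{S}(\delta)$ is increasing.

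The main obstacle is the remaining claim that the larger child value sits at the seam node and not at some interior source. For this I would argue that the cost of an increasing subtree depends only on its source multiset $S$ and its root value, equalling $|S|\cdot\val(\delta)-\sum_{s}x_s$ plus a fixed merge cost, so it is invariant under reordering the internal merges. This invariance lets me select a canonical optimal forest in which every merge peels the current extreme source toward the seam, giving $\val(\delta_1)=x_{i-1}$ and $\val(\delta_2)=x_i$; feeding these into the merge analysis above yields $\val(\delta)=\max(x_{i-1},x_i)$ as required. Checking that such a reordering preserves every structural property of an optimal transformation forest (no consecutive split--merge, monotonicity, and the restriction to Type~1--4 paths) is the delicate point, and it is the part on which I would expect to spend the most care.
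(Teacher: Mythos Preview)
Your first two steps are sound and line up with the paper's use of Lemma~\ref{lemma_max} and Lemma~\ref{lemma_parallel_edges}. The gap is in your third step: the cost formula you propose is false, so the invariance argument collapses. Take three sources with values $1,2,5$ merging to a root of value $5$. If $u_1$ and $u_2$ merge first (to a node of value $2$, by Lemma~\ref{lemma_max}) and the result then merges with $u_3$, the move cost is $(2-1)+(5-2)=4$. If instead $u_2$ and $u_3$ merge first (to a node of value $5$) and then $u_1$ joins, the move cost is $(5-2)+(5-1)=7$. Both are increasing subtrees with the same source multiset and root value, yet their costs differ; indeed one can check that the move cost of such a tree equals $\val(\delta)+\sum_{v\text{ internal}}\val(v)-\sum_s x_s$, which manifestly depends on the internal merge structure, not just on $|S|\cdot\val(\delta)-\sum_s x_s$. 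So ``reordering the internal merges'' is not cost-neutral, and you cannot simply pass to a canonical forest that peels extremes toward the seam.

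The paper avoids this by never appealing to invariance. Instead it runs two local exchange arguments that produce a \emph{strictly} cheaper equivalent tree, contradicting optimality of $G$. For the case $x_{i-1}>x_i$: first, if $\val(\delta_2)>\val(\delta_1)$, then along the increasing path from $u_i$ (value $x_i<x_{i-1}$) up to $\delta_2$ one finds consecutive nodes $\gamma',\gamma''$ with $\val(\gamma')\le x_{i-1}<\val(\gamma'')$, and rerouting $\gamma'$ to merge into the $u_{i-1}$--$\delta_1$ path saves the move cost between $\gamma'$ and $\gamma''$; second, if $\val(\delta_1)=x_j>x_{i-1}$ for some earlier $u_j$, an analogous rerouting of $\delta_2$ into the increasing stretch on the $u_{i-1}$--$\delta_1$ path again strictly lowers cost. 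These modifications are what you should substitute for the invariance step. Note also that your reordering strategy, even if it worked, would only yield the conclusion for \emph{some} optimal forest, whereas the lemma (and its later uses in Lemma~\ref{lemma:deviation_epsilon}) is stated for a given optimal $G$.
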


\begin{figure}[h]
    \centering
    \includegraphics[width=0.35\textwidth]{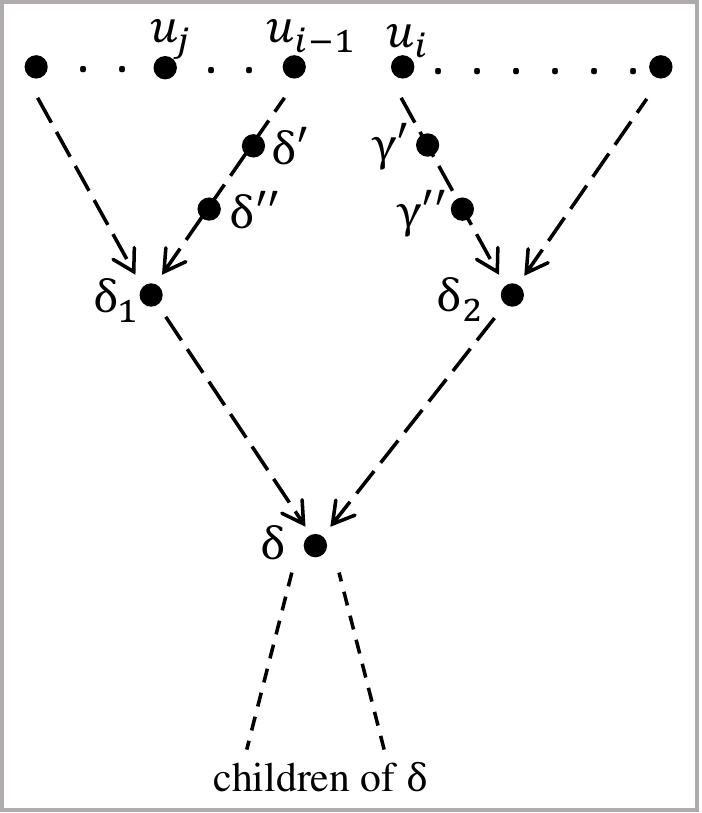}
    \caption{Structure of the subtree of $\delta$ described in the proof of Lemma~\ref{lemma:max_specification}. Note that this is only a schematic representation and that there may be further intermediate nodes which are not marked.}
    \label{fig:max_specification}
\end{figure}

\begin{proof}
Without loss of generality, $\mathcal{S}(\delta)$ is increasing.
Figure \ref{fig:max_specification} depicts this subgraph with all mentioned intermediate points. 
We will only show the case that $x_{i-1}~>~x_i$ since the other case is analogous. 
By Lemma \ref{lemma_max}, it holds that $\val(\delta)~=~\max(\val(\delta_1), \val(\delta_2))$. 
We first show that  $\val(\delta)~=~\val(\delta_1)$. 
Assume towards a contradiction that $\val(\delta_1) < \val(\delta_2) = \val(\delta)$. 
Since $x_{i-1} > x_i$, there exists intermediate nodes $\gamma'$ and $\gamma''$ on the path between $u_i$ and $\delta_2$ such that $x_{i-1} \geq \val(\gamma')$ and $x_{i-1} < \val(\gamma'')$.
Let $\mathcal{S'}(\delta)$ be the modified subtree of $\mathcal{S}(\delta)$. 
The only difference between $\mathcal{S}$ and $\mathcal{S}'$ is that $\gamma'$ merges to some intermediate node on the path between $u_{i-1}$ and $\delta_1$.
The cost of $\mathcal{S}'(\delta)$ is $\cost(\mathcal{S}'(\delta))~=~\cost(\mathcal{S}(\delta)) - \val(\gamma'') - \val(\gamma') < \cost(\mathcal{S}(\delta))$.
This is a contradiction to $G$ being optimal. 
Applying Lemma~\ref{lemma_parallel_edges} and Lemma~\ref{lemma_max}, we get $\val(\delta) = \val(\delta_1)$.

In a second step, we prove that $\val(\delta_1) = x_{i-1} = \max_{u\in\mathcal{N}_{\mathcal{S}(\delta_1)}(x)}(\val(u))$. 
Assume towards a contradiction, that there exists a~$u_j \in N_{\mathcal{S}((\delta_1)}(x)\setminus u_{i-1}$ such that $\val(\delta_1)~=~x_j > x_{i-1} > x_i$.
Then, it holds that there exist two intermediate nodes $\delta'$ and $\delta''$ on the path between $u_{i-1}$ and $\delta_1$ such that $\val(\delta')<\val(\delta_1)$ and $\val(\delta'')~=~\val(\delta_1)$. 
We consider the modified subtree $\mathcal{S}''(\delta)$, which is almost equal to  $\mathcal{S}(\delta)$, the only difference being that $\delta_2$ merges to some intermediate node on the path between $\delta'$ and $\delta''$.
The cost of $\mathcal{S}''(\delta)$ are $\cost(\mathcal{S}''(\delta)) = \cost(\mathcal{S}(\delta)) - \val(\delta_1) - \val(\delta_2) < \cost(\mathcal{S}(\delta))$. 
This is a contradiction to $G$ being optimal. 
\end{proof}

We will now apply the above properties to the bottleneck nodes in a tree of Type 4 stating that the first and second bottleneck nodes always have values of the input time series $x$ and $y$, respectively. Recall, that the first bottleneck node $\sigma$ is the intermediate node where all source nodes in the tree of Type~4 merge to, followed by a move edge to the second bottleneck node $\sigma^*$. 

\begin{corollary}
\label{lemma:sigma}
Let $G_{\mathbb{S}}(x,y)$ be an optimal transformation forest. 
In a tree $\mathcal{T}$ of Type~4 $\val(\sigma)\in V_{\mathcal{T}}(x)$ and $\val(\sigma^*)\in V_{\mathcal{T}}(y)$.
\end{corollary}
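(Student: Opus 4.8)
The plan is to derive both statements from Lemma~\ref{lemma_max}, applying it to $\sigma$ directly and to $\sigma^*$ after reversing the forest. Recall that in a tree $\mathcal{T}$ of Type~4 the upper tree above and including $\sigma$ consists only of merge and move edges and contains every source node of $\mathcal{T}$, while the lower tree below and including $\sigma^*$ is its mirror image, consisting only of split and move edges and containing every sink node of $\mathcal{T}$; the two bottleneck nodes are joined by the single move edge $\sigma\to\sigma^*$, and by Lemma~\ref{lemma:monotonicity} the whole tree may be assumed monotonic.

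First I would argue that $\sigma$ is a genuine merge node, so that the subtree $\mathcal{S}(\sigma)$ of Section~\ref{section:subtree} is well-defined and Lemma~\ref{lemma_max} is applicable. Since $\mathcal{T}$ has at least two source nodes, all of which reach $\sigma$, the last edge entering $\sigma$ is either a merge or a move edge. If it were a move edge, it would be immediately followed by the move edge $\sigma\to\sigma^*$, producing consecutive move-move edges, which can be contracted into a single cheaper move edge and hence cannot occur in an optimal forest. Therefore $\sigma$ is reached by a merge edge and, by the definition of merge edges, has two parents joined to it by merge edges. Consequently $\mathcal{S}(\sigma)$ is exactly the upper tree, with $N_{\mathcal{S}(\sigma)}(x)=N_{\mathcal{T}}(x)$, and it is monotonic. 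Lemma~\ref{lemma_max} then gives $\val(\sigma)=\max(x_i,\ldots,x_j)$ in the increasing case and the corresponding minimum in the decreasing case, where $N_{\mathcal{T}}(x)=\{u_i,\ldots,u_j\}$; either way $\val(\sigma)\in V_{\mathcal{T}}(x)$.

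For $\sigma^*$ I would exploit the symmetry between split and merge. By the merge-move-split structure of the bottleneck, the first edges leaving $\sigma^*$ are split edges, so $\sigma^*$ has two children joined by split edges. Reversing the direction of every edge of $\mathcal{T}$ and exchanging the roles of split and merge---these are inverse operations of equal cost---turns the lower tree into an upper tree whose bottleneck node $\sigma^*$ is now a merge node and whose source nodes are precisely the sink nodes $N_{\mathcal{T}}(y)$. Increasing paths become decreasing and vice versa, so monotonicity and optimality (reversal preserves the total cost) are retained. Applying Lemma~\ref{lemma_max} to this reversed subtree then yields $\val(\sigma^*)\in V_{\mathcal{T}}(y)$.

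The main obstacle is the bookkeeping needed to make the two reductions airtight: confirming that the last edge into $\sigma$ must be a merge edge, so that $\mathcal{S}(\sigma)$ and hence Lemma~\ref{lemma_max} genuinely apply, and verifying that reversing the forest and swapping split with merge preserves exactly the hypotheses of Lemma~\ref{lemma_max}, including monotonicity and the cost-minimality used in its proof. Once these structural facts are established, the values of the two bottleneck nodes follow immediately from Lemma~\ref{lemma_max}.
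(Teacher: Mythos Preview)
Your proposal is correct and follows essentially the same route as the paper: apply Lemma~\ref{lemma_max} to the upper tree, which is precisely $\mathcal{S}(\sigma)$, and then invoke symmetry (reversing the forest, swapping split with merge) to handle $\sigma^*$. The paper's own proof is a two-line version of this; your additional care in checking that $\sigma$ is a genuine merge node and that reversal preserves the hypotheses of Lemma~\ref{lemma_max} is sound bookkeeping, though note that combining two monotone move edges yields equal rather than strictly smaller cost---the exclusion of consecutive move--move edges in the paper is a normalization convention, not a strict cost saving.
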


\begin{proof}
To prove that $\sigma\in V_{\mathcal{T}}(x)$ we apply Lemma \ref{lemma_max} since the upper part of the tree $\mathcal{T}$ is the subtree of $\sigma$. 
By symmetry reasons, it follows that $\sigma^* \in  V_{\mathcal{T}}(y)$.
\end{proof}

\subsection{The Effect of Perturbing Single Values}
We aim to show that there exists a mean of a set of time series that only consists of points of the input set. 
To this end, we first make observations on the effect of shifting points of a time series that are not from $V(X)$.
To this end, we first analyze for two time series $x$ and $y$, how the distance between $x$~and~$y$ may be affected by shifting one point of $x$ by $\varepsilon\in \mathds{R}$.
We let $x_{\varepsilon, i}$ denote the new time series that is equal to $x$ except at the position $i$, where it has the new point $x_i +\varepsilon$. 
The change of the node $u_i$ in the transformation forest is denoted by $u_i^\varepsilon$. 
In the following we say that if the distance between $x_{\varepsilon, i}$ and $y$ is shorter than between $x$ and $y$, the replacement of $x$ by $x_{\varepsilon, i}$ is \emph{beneficial}. If it leads to a longer distance, it is \emph{detrimental}, and if the distance does not change it is \emph{neutral}. 
Assume that $x_i\notin V(y)$, the next lemma states that if the replacement of $x$ by $x_{\varepsilon, i}$ is not neutral, it is beneficial for either $\varepsilon$ or $-\varepsilon$.

\begin{lemma}
\label{lemma:deviation_epsilon}
 Let $x$ and $y$ be two time series with distance $d(x,y)$. If $x_{i-1} \neq x_i \neq x_{i+1}$ and $x_i \notin V(y)$, there either exists an $\varepsilon'>0$ such that for all $\varepsilon \in [0,\varepsilon']$ one of the following equations holds:
\begin{description}
    \item[(1)] $ d(x_{\varepsilon, i}, y) + \varepsilon   = d(x,y)  = d(x_{-\varepsilon, i},y) -\varepsilon $ (beneficial increase),
    \item[(2)] $ d(x_{-\varepsilon, i}, y) + \varepsilon   = d(x,y)  = d(x_{\varepsilon, i},y) -\varepsilon $ (beneficial decrease),
\end{description}
or there exist $\varepsilon_I,\varepsilon_D>0$ such that  
\begin{description}
    \item[(3.1)]  $d(x,y) = d(x_{\varepsilon, i},y)$ for all $\varepsilon \in [0,\varepsilon_I]$ (neutral increase), and
    \item[(3.2)]  $d(x,y) = d(x_{-\varepsilon, i},y)$ for all $\varepsilon \in [0,\varepsilon_D]$ (neutral decrease).
\end{description}
Moreover, for beneficial increases $x_i+\varepsilon' \in (V(y) \cup \{x_{i-1},x_{i+1}\})$,
for beneficial decreases $x_i-\varepsilon' \in (V(y) \cup \{x_{i-1},x_{i+1}\})$, and for neutral increases and decreases  $x_i+\varepsilon_I , x_i-\varepsilon_D \in (V(y) \cup \{x_{i-1},x_{i+1}\})$.
\end{lemma}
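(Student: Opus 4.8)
The plan is to fix an optimal transformation forest $G_{\mathbb{S}}(x,y)$ realizing $d(x,y)$ (which exists with the structure guaranteed by Lemmas \ref{prop:merge_split}--\ref{lemma:monotonicity}) and to track how its cost changes when the single source value $\val(u_i)=x_i$ is replaced by $x_i+\varepsilon$. Since the trees $(\mathcal{T}_1,\dots,\mathcal{T}_t)$ partition the source nodes into consecutive blocks, $u_i$ lies in exactly one tree $\mathcal{T}$, and I would first determine the local role of $u_i$ inside $\mathcal{T}$: either $u_i$ performs one net move ``outward'' (up in an increasing portion, down in a decreasing portion) before it merges, or its value is pinned as an extreme value of a subtree and thus propagates along a merge spine. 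The values forced on the relevant intermediate nodes are exactly the ones computed in Lemma \ref{lemma_max}, Lemma \ref{lemma:max_specification} and Corollary \ref{lemma:sigma}, so keeping the topology of $G$ fixed and only propagating the change of $\val(u_i)$ yields a well-defined cost function $h(\varepsilon)$ with $h(0)=d(x,y)$.

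The core of the argument is to show that $h$ is affine with slope $s\in\{-1,0,+1\}$ in a neighbourhood of $0$, i.e. that it has no kink at $\varepsilon=0$. I would establish $s\in\{-1,0,1\}$ by a local exchange argument in the spirit of Lemma \ref{prop:split-move-merge}: if two move edges carrying the value of $u_i$ both shortened (or both lengthened) as $x_i$ increases, one could reroute by letting $u_i$ move first and split/merge afterwards, strictly lowering the cost and contradicting optimality of $G$. Hence at most one incident move ``survives'' in each direction, the up-pulls and down-pulls on $x_i$ cancel up to one unit, and the derivative of $h$ is an integer in $\{-1,0,1\}$. Reading off which case occurs gives precisely the trichotomy of the statement: a surviving outward move gives a beneficial increase or decrease (cases (1)/(2)), whereas a balanced configuration (e.g.\ $u_i$ extreme with symmetric merge pulls, or a split feeding both an increasing and a decreasing path) makes $h$ constant, giving the neutral cases (3.1)/(3.2). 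This simultaneously supplies the upper bound $d(x_{\varepsilon,i},y)\le h(\varepsilon)=d(x,y)+s\varepsilon$ on both sides of $0$, since the same topology remains a valid transformation of $x_{\varepsilon,i}$ until a node value collides.

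For the matching lower bound I would use that $d(\cdot,y)$ is $1$-Lipschitz in the coordinate $x_i$ --- moving $x_i+\varepsilon$ back to $x_i$ by an extra move of cost $|\varepsilon|$ turns any transformation of $x_{\varepsilon,i}$ into one of $x$ --- so that $|d(x_{\varepsilon,i},y)-d(x,y)|\le|\varepsilon|$. Combined with the affine upper bound this pins $d(x_{\varepsilon,i},y)$ on the beneficial side; the delicate part is the detrimental side, where I must rule out that some genuinely different topology beats $h$. This is exactly where the hypotheses $x_i\notin V(y)$ and $x_{i-1}\neq x_i\neq x_{i+1}$ enter: they forbid the only configurations in which $u_i$ could be aligned to a sink value by a zero-cost move, or merged with an equal-valued neighbour at zero cost. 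Such a pinned $u_i$ is precisely what would create a convex (``V'', both directions detrimental) kink, and allowing co-optimal topologies of opposite outward direction is what would create a concave (``$\Lambda$'', both directions beneficial) kink; excluding the pinned case forces all topologies that are optimal at $\varepsilon=0$ to share the single slope $s$, so no kink can appear and the inequalities become the claimed equalities.

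Finally, for the range of validity I would track the first value collision as $\varepsilon$ grows. The forced values of Lemma \ref{lemma_max} and Lemma \ref{lemma:max_specification} change the combinatorial role of $u_i$ exactly when $x_i\pm\varepsilon$ meets a sink value in $V(y)$ (when an outward move of $u_i$, or the bottleneck move $\sigma\to\sigma^*$, degenerates) or meets the value of an adjacent source; since Lemma \ref{lemma:max_specification} compares only the boundary sources $x_{i-1}$ and $x_i$ (respectively $x_i$ and $x_{i+1}$) of neighbouring subtrees, the relevant source boundary is always $x_{i-1}$ or $x_{i+1}$. This yields the endpoints $x_i\pm\varepsilon'\in V(y)\cup\{x_{i-1},x_{i+1}\}$ claimed at the end of the lemma. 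I expect the main obstacle to be the lower-bound/no-kink step, i.e.\ proving combinatorial stability of the optimum under the perturbation, as this is the only place where the exclusion of pinned values does real work and where a careful case analysis over the tree types is unavoidable.
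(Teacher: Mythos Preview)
Your proposal is more abstract than the paper's: the paper proceeds by an explicit case analysis on the position of $u_i$ in its tree $\mathcal{T}$ (only predecessors, only successors, or both) crossed with the relative order of $x_{i-1},x_i,x_{i+1}$, in each subcase either exhibiting a single move edge that shortens by exactly $\varepsilon$ (beneficial case) or explicitly sliding a merge node along an increasing path to obtain an equal-cost tree (neutral case). Your fixed-topology cost $h(\varepsilon)$ together with the $1$-Lipschitz bound is a clean way to pin the beneficial side, and you correctly flag the detrimental side as the crux.

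There is, however, a concrete gap in your slope argument. Your exchange rule ``if two move edges carrying the value of $u_i$ both shorten, reroute and contradict optimality'' fails in the paper's Case~2.3 ($x_{i-1}<x_i>x_{i+1}$ inside an increasing tree). Here $\val(\delta)=x_i$ by Lemma~\ref{lemma:max_specification}, and \emph{three} move edges depend on $x_i$: the two moves from the $u_{i-1}$- and $u_{i+1}$-subtrees up to level $x_i$, and the outgoing move from $\delta$ toward the bottleneck. Decreasing $x_i$ shortens two of them and lengthens one, yet the configuration \emph{is} optimal --- no rerouting helps, because the high-valued $u_i$ is sandwiched between the two increasing paths and Lemma~\ref{lemma_parallel_edges} does not apply across it. That the net slope still comes out $\pm1$ is true, but it follows from the combinatorics of Lemma~\ref{lemma:max_specification} (the extremum at $\delta$ is always the adjacent boundary source), not from a one-step local exchange; obtaining this without the order-type case split seems no easier than the paper's argument.

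The second gap is the ``no concave kink'' step. You assert that all optimal forests at $\varepsilon=0$ must share the same slope $s$ because $u_i$ is not pinned, but you give no mechanism for this. Two globally different optimal forests could in principle assign $u_i$ different local roles (e.g.\ last source of its tree in one forest, first source of the next tree in another), and your hypotheses $x_i\notin V(y)$, $x_{i\pm1}\neq x_i$ do not by themselves exclude this. The paper's case analysis sidesteps the issue because the outcome in each subcase is driven by the order type of $(x_{i-1},x_i,x_{i+1})$ relative to the tree's monotonicity direction --- data intrinsic to $x$ rather than to the chosen forest --- so any optimal forest lands in the same conclusion. Making that uniformity explicit is exactly what your outline still owes.
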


\begin{proof}
We show the lemma for trees of Type~4. All other cases are simpler versions of this proof.
Let $\mathcal{T}$ be a tree of Type~4 in $G_{\mathbb{S}}(x,y)$. 
By Lemma \ref{lemma:monotonicity}, the tree $\mathcal{T}$ is monotonic. 
We assume, without loss of generality, that all monotonic paths in $\mathcal{T}$ are increasing. 
We distinguish whether $u_i$ has only predecessors or only successors (\textit{Case 1}) or both (\textit{Case~2}) in $\mathcal{T}$. We denote the predecessors of $u_i$ as $\mathcal{P}$ and the successors of $u_i$ as $\mathcal{F}$.

\indent \textit{Case 1}: $u_i$ has only predecessors or successors in $\mathcal{T}$. 
We prove the case that  $u_i$ has only predecessors, the other case is analogous. 
We first describe the possible structures of the upper tree in $\mathcal{T}$ for this case, which are depicted in Figure \ref{fig:epsilon_var}. 
There is a potential move at $u_i$ to a node~$\gamma$. 
The node $\gamma$ merges to $\delta$ with a node $\alpha^*$, which is the node resulting from a move at $\alpha$. 
The nodes $\{u_{i-\ell}, \ldots, u_{i-1}\} \subseteq \mathcal{P}$ are the source nodes of the subtree of $\alpha$.
Below $\delta$ there may be further subsequent merge and move operations to the first bottleneck node $\sigma$. 
Since $x_{i-1} \neq x_i$ there has to be an inc-edge either between $u_i$ and $\gamma$, if  $x_{i-1} > x_i$, or between $u_{i-1}$ and $\alpha^*$, if $x_{i-1} < x_i$ because in the first case $\val(\delta) = x_{i-1}$ and in the second case $\val(\delta) = x_i$ (see Lemma \ref{lemma:max_specification}). 

\indent\textit{Case 1.1}: $x_{i-1} > x_{i}$.
There is an inc-edge between $u_i$ and $\gamma$ (see Figure \ref{fig:epsilon_var}a). 
The replacement of $x$ by $x_{\varepsilon, i}$ is a beneficial increase for all $\varepsilon \in [0,\varepsilon']$ with $\varepsilon' = \val(\gamma)-x_i$ because the node $u_i^\varepsilon$  approaches the node $\gamma$ and the cost of the adapted move decrease by $\varepsilon$.   
Thus, we get the left side of Equation (1), $ d(x_{\varepsilon, i}, y) + \varepsilon = d(x,y)$. 
Since the subtree of $\delta$ is increasing and $x_{i-1}>x_i$, it holds by Lemma \ref{lemma:max_specification} that $\val(\delta) = x_{i-1}$. 
We get that $x_{i-1} = \val(\gamma) = x_i + \varepsilon'$. 
For the right side of Equation (1), the argumentation is similar: After replacing  $x$ by $x_{-\varepsilon, i}$ for $\varepsilon\leq \varepsilon'$, the cost for the move between $x_i - \varepsilon$ and $\gamma$ are $\val(\gamma) - x_i + \varepsilon$.
Therefore, they increase by $\varepsilon$.

\indent\textit{Case 1.2}: $x_{i-1} < x_{i}$.
There is an inc-edge between $u_{i-1}$ and $\alpha^*$ (see Figure \ref{fig:epsilon_var}b). 
We modify the structure of $\mathcal{T}$ for the replacement of $x$ by $x_{\varepsilon, i}$ for $\varepsilon \in [0,\varepsilon_I]$, $\varepsilon_I>0$.
Let $\mathcal{T}'$ be the modified tree with a new node $u_i^\varepsilon$ instead of $u_i$. 
In $\mathcal{T}'$, the nodes $\alpha^*$ and $\delta$ does not exist but $\mathcal{T}'$ contains a new node $\delta'$ such that $\val(\delta') \in [\val(\alpha^*),\val(\sigma^*)]$. The node~$u_i^\varepsilon$ merges to $\delta'$.
The rest of the tree stays unchanged. 
For all $\varepsilon \in [0,\varepsilon_I]$ with $\varepsilon_I = \val(\sigma^*)-x_i$ the cost of $\mathcal{T}'$ is equal to the cost of $\mathcal{T}$ because we only shifted a merge operation to another position in the tree (see Figure \ref{fig:epsilon_var}c). 
This is a neutral increase for all $\varepsilon \in [0,\varepsilon_I]$.
It holds that $x_i + \varepsilon_I = \sigma^* \in V(y)$ (see Corollary~\ref{lemma:sigma}). 
Let $\mathcal{T}''$ be another modified tree of $\mathcal{T}$ with a new node $u_i^{-\varepsilon}$ instead of $u_i$ for $\varepsilon \in [0,\varepsilon_D]$, $\varepsilon_D>0$.
The tree $\mathcal{T}''$ does not contain the node $\delta$ but contains a new node $\delta''$ such that $\val(\delta'') \in [x_{i-1}, \val(\alpha^*)]$ and $u_i^{-\varepsilon}$ merges to $\delta''$ (see Figure \ref{fig:epsilon_var}d). 
For all $\varepsilon \in [0,\varepsilon_D]$ with $\varepsilon_D = \val(\alpha^*)-x_{i-1} $ we get equal cost of $\mathcal{T}$ and $\mathcal{T}''$ since we only shifted a merge operation.
From Lemma \ref{lemma:max_specification} we get that $\val(\alpha^*) = \val(\delta) = x_i$ and hence $x_i- \varepsilon_D= x_{i-1}$.

\begin{figure}[h]
    \centering
    \includegraphics[width = 0.8\textwidth]{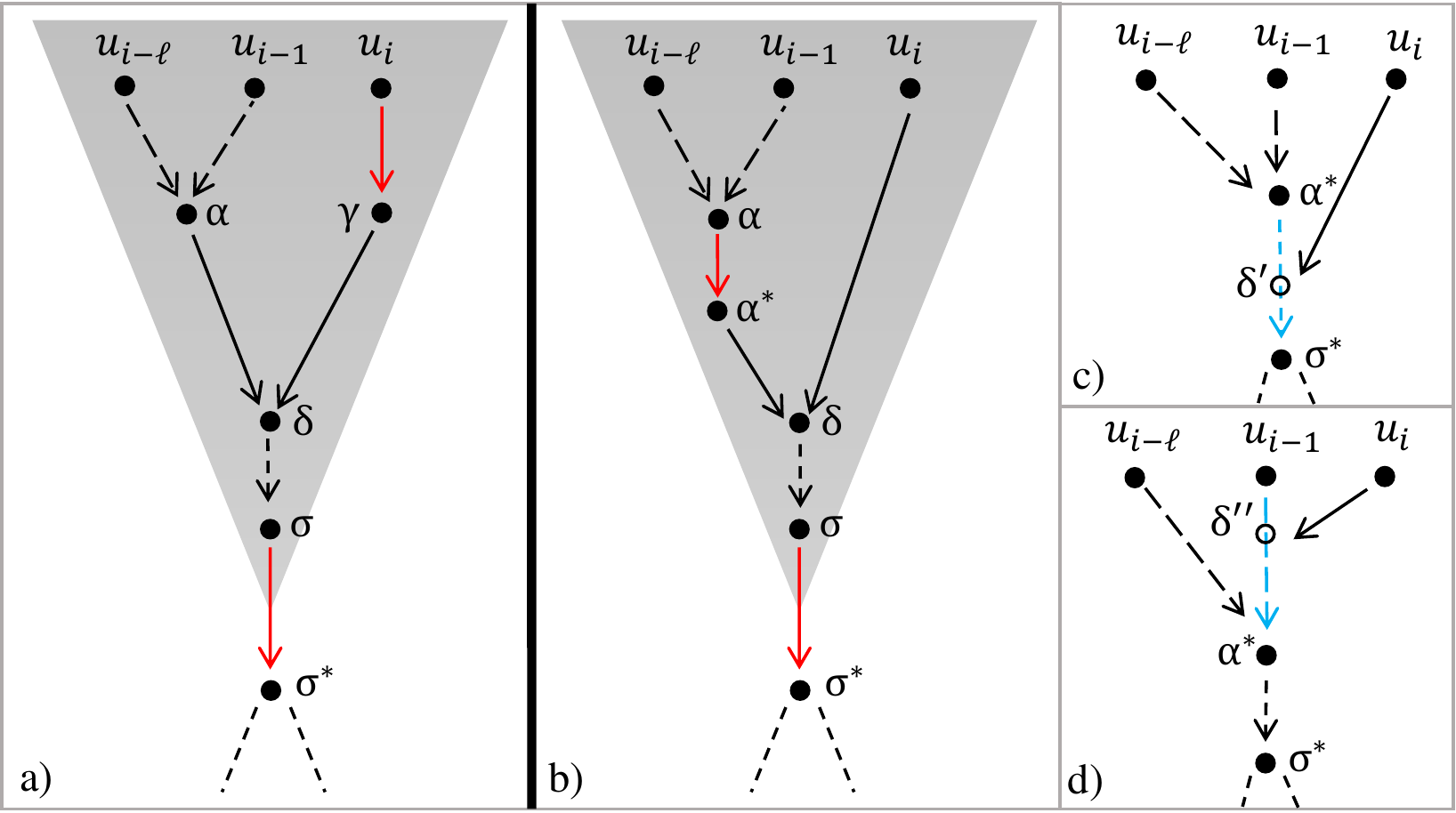}
    \caption{Schematic representation of the trees discussed for Case 1 in the proof of Lemma \ref{lemma:deviation_epsilon}. The node $u_i$ has only predecessors. The dashed red edges show potential move operations. a) Case 1.1:  $x_{i-1} > x_{i}$. b) Case 2.1: $x_{i-1} < x_{i}$. c) Proof mechanism introducing the modified tree $\mathcal{T}'$, where the path on which the new node $\delta'$ can be shifted on is marked in blue. d) Proof mechanism introducing the modified tree $\mathcal{T}''$ following the same mechanism as in c).}
    \label{fig:epsilon_var}
\end{figure}

\indent\textit{Case 2}: 
$u_i $ has predecessors $\mathcal{P}$ and successors $\mathcal{F}$. Again, we first describe the upper Type-4-Tree $\mathcal{T}$ (see Figure \ref{fig:epsilon_var_II}).  
Let $\{u_{i-\ell}, \ldots, u_{i-1}\}\subseteq \mathcal{P}$ be the source nodes of the subtree of $\alpha$. 
At $\alpha$ there is a potential move to $\alpha^*$. 
Let $\{u_{i+1}, \ldots, u_{i+r}\}\subseteq \mathcal{F}$ be the source nodes of the subtree of  $\zeta$. At $\zeta$ there is a potential move to $\zeta^*$ 
After a potential move from $u_i$ to $\gamma$ there is a merge with $\alpha^*$, which is afterwards merged with $\zeta^*$ to an intermediate node $\delta$. 
Without loss of generality, we assume this order of merge to $\delta$. What follows are potential move and merge operations until all nodes in $N_{\mathcal{T}}(x)$ merge to the first bottleneck node $\sigma$.
Since $\mathcal{T}$ is increasing, the subtree of $\delta$ is increasing. 
We further analyze the relation between $x_i$ to its direct predecessor $x_{i-1}$ and its direct successor $x_{i+1}$.

\indent\textit{Case 2.1}: $x_{i-1} < x_i < x_{i+1}$. 
By Lemma \ref{lemma:max_specification}, it follows that $\val(\delta) =x_{i+1}$.
Furthermore, there is no inc-edge between $u_i$ and $\gamma$ because $u_i$ merges with $\alpha^*$ to $\beta$ with a subsequent inc-edge to $\beta^*$ (see Figure \ref{fig:epsilon_var_II}a).
We modify the tree structure of $\mathcal{T}$ for the replacement of $x$ by $x_{\varepsilon, i}$.
Let $\mathcal{T}'$ be the modified tree of $\mathcal{T}$, where we have the new node $u_i^\varepsilon$ instead of $u_i$ for $\varepsilon \in [0,\varepsilon_I], \varepsilon_I>0$. 
In $\mathcal{T}'$ the node $\beta$ does not exist anymore but $\mathcal{T}'$ includes a new node $\beta'$, such that $\val(\beta') \in [\val(\alpha^*),\val(\beta^*)]$, where $u_i^\varepsilon$ merges with $\alpha^*$. 
The rest of the tree stays unchanged. 
For $\varepsilon \in [0,\varepsilon_I]$ with  $\varepsilon_I = \val(\beta^*)-x_i$ the cost of $\mathcal{T}'$ is equal to the cost of $\mathcal{T}$ because we only shifted a merge operation to another position in the tree. 
This is a neutral increase for all $\varepsilon \in [0,\varepsilon_I]$.
It holds that $x_i +\varepsilon_I = \val(\beta^*) = x_{i+1}$.
Let further be $\mathcal{T}''$ another modified tree of $\mathcal{T}$. 
The tree $\mathcal{T}''$ does not contain the node $\beta$, instead it contains a new node $\beta''$, such that $\val(\beta'') \in [u_{i-1},\val(\alpha^*)]$, where $u_i^{-\varepsilon}$ merges to. 
Again, we only shifted a merge operation, that leads to equal cost of $\mathcal{T}$ and $\mathcal{T}''$ for all $\varepsilon \in [0,\varepsilon_D]$ with $\varepsilon_D = \val(\alpha^*)-x_{i-1} $. 
We have $\val(\alpha^*) = x_i$ and hence $x_i - \varepsilon_D = x_{i-1}$.

\indent\textit{Case 2.2}: $x_{i-1} > x_i > x_{i+1}$. This case is analogous to Case 2.1.

\indent\textit{Case 2.3}: $x_{i-1} < x_i > x_{i+1}$. 
We further assume, without loss of generality, that $x_{i-1} < x_{i+1}$.
By Lemma \ref{lemma:max_specification} it holds that $x_{i} = \val(\delta)$.
We have inc-edges between $u_{i-1}$ and $\alpha^*$ and between $u_{i+1}$ and $\zeta^*$ (see Figure \ref{fig:epsilon_var_II}b). 
The replacement of $x$ by $x_{-\varepsilon, i}$ for an $\varepsilon \in [0, \varepsilon_D]$ is a beneficial decrease because the merge points $\beta$ and $\delta$ are shifted by $-\varepsilon$: The move cost are $\val(\alpha^*) - \varepsilon - x_{i-1}$ and $\val(\zeta^*) - \varepsilon - x_{i+1}$ for the two move operations, that is a decrease of $2\varepsilon$.
The new merge node of $\beta^*$ and $\zeta^*$ is denoted by $\delta'$.
For the new path between $\delta'$ and $\sigma^*$ we have cost of $\vert\sigma^* - \delta' + \varepsilon\vert$, that is an increase of cost by $\varepsilon$. 
We get the left side of Equation (2), that is, $ d(x_{-\varepsilon, i}, y) + \varepsilon   = d(x,y)$ for all $\varepsilon \in [0,\varepsilon']$ with $\varepsilon_D = x_i-x_{i+1}$. 
It holds that  $x_i - \varepsilon_D = x_{i+1}$.
The argumentation of the detrimental replacement of  $x$ by $x_{\varepsilon, i}$ is analogous to Case 1.1.

\indent\textit{Case 2.4}: $x_{i-1} > x_i < x_{i+1}$. 
There is an inc-edge between $u_i$ and $\gamma$ (see Figure \ref{fig:epsilon_var_II}c). 
Again, we further assume, without loss of generality, that $x_{i-1} < x_{i+1}$.
Following the same argumentation as in Case 1.1, the replacement of $x$ by $x_{\varepsilon, i}$ is a beneficial increase for all $\varepsilon \in [0,\varepsilon_I]$ with $\varepsilon_I = \val(\gamma)-x_i$.
By Lemma \ref{lemma:max_specification}, it holds that $\val(\beta) = x_{i-1}$ and $\val(\delta) =  x_{i+1}$. 
Note, that there are no increasing paths between $u_{i-1}$ and $\beta^*$ and $u_{i+1}$ and $\zeta^*$ because otherwise there is no move between $u_i$ and $\gamma$ (see Lemma~\ref{lemma_parallel_edges}). 
It holds that $ \val(\gamma) = x_{i-1} =  x_i+\varepsilon'$.
The detrimental replacement of  $x$ by $x_{-\varepsilon, i}$ is analogous to Case~1.1. 
\end{proof}

\begin{figure}[h]
    \centering
    \includegraphics[width = 0.9\textwidth]{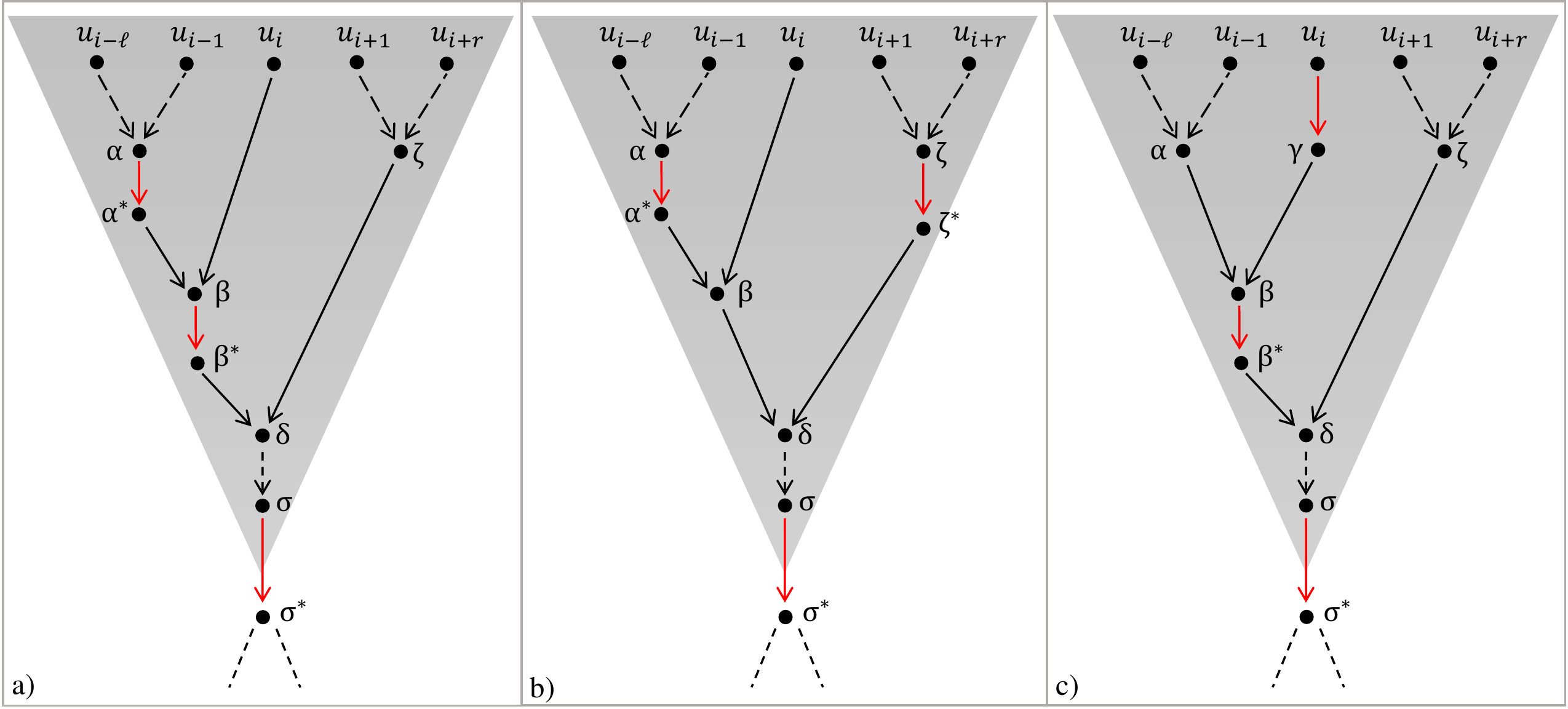}
    \caption{Schematic representation of the trees discussed for Case 2 in the proof of Lemma \ref{lemma:deviation_epsilon}. The node $u_i$ has predecessors and successors. a) Case 2.1: $x_{i-1} < x_i < x_{i+1}$. b) Case 2.3: $x_{i-1} < x_i > x_{i+1}$. c) Case 2.4:  $x_{i-1} > x_i < x_{i+1}$.}
    \label{fig:epsilon_var_II}
\end{figure}

Let us briefly discuss the trees of Type~1--3. If the tree $\mathcal{T}$ is of Type 3, then~$\val(\sigma^*)$ is already in $\mathcal{N}_{\mathcal{T}}(y)$. The same proof as for trees of Type 4 can be applied. 
Since the symmetries properties hold for the MSM metric, the Lemma holds for trees of Type 2 as well. 
For a tree containing only a move edge, the argumentation is the same as in Case 1.1.

In the following, we regard a \emph{block} $\mathcal{B}$ of adjacent source nodes $N_{\mathcal{B}}(x)=\{u_i, \ldots, u_\ell\}$ representing points of equal value of a time series $x$. 
A block is a maximal contiguous sequence of nodes with the same value. 
Our aim is to show a generalization of Lemma \ref{lemma:deviation_epsilon} shifting all points of a block $\mathcal{B}$ by some $\varepsilon\in\mathds{R}$.  
We show that shifting a block is either beneficial for one direction or neutral for both directions.
Let $x_{\varepsilon, i,\ell}$, $i<\ell$, denote the time series that is equal to $x$ except at the positions $i,\ldots,\ell$, where the points $x_i$ of $x$ are replaced by  $x_i + \varepsilon$.
The definitions of beneficial, detrimental or neutral replacements of $x$ by $x_{\varepsilon, i,\ell}$  are analogous to the previous one. 
A block may be contained in several trees, hence shifting a block affects the cost of all these trees. To count the number of trees with beneficial or detrimental replacement, we introduce two further parameters $\rho_I,\rho_D\in\mathds{N}$.
\begin{lemma}

\label{lemma_deviation_epsilon_block}
 Let $x=(x_1,\ldots , x_m)$ and $y=(y_1,\ldots, y_n)$ be two time series with a distance $d(x,y)$. 
 If we consider a block $\mathcal{B}$ of similar points $N_{\mathcal{B}}(x) = \{u_i, \ldots, u_\ell\}$ with  $ x_i \notin V(y)$, then there either exists an $\varepsilon'>0$ and $\rho_I ,\rho_D \in \mathds{N}$ such that for all $\varepsilon \in [0,\varepsilon']$ one of the following equations holds:
 
\begin{description}
    \item[(1)] $ d(x_{\varepsilon, i,\ell}, y) + \rho_I \cdot \varepsilon = d(x,y)  = d(x_{-\varepsilon, i,\ell},y) - \rho_D \cdot\varepsilon $ (b. increase),
    \item[(2)] $ d(x_{-\varepsilon, i,\ell}, y) + \rho_D \cdot\varepsilon  = d(x,y)  = d(x_{\varepsilon, i,\ell},y) - \rho_I \cdot\varepsilon $ (b. decrease),
\end{description}
or there exist $\varepsilon_I,\varepsilon_D>0$ such that
\begin{description}
    \item[(3.1)]  $d(x,y) = d(x_{\varepsilon, i},y)$ for all $\varepsilon \in [0,\varepsilon_I]$ (neutral increase), and
    \item[(3.2)]  $d(x,y) = d(x_{-\varepsilon, i},y)$ for all $\varepsilon \in [0,\varepsilon_D]$ (neutral decrease).
\end{description}
Moreover, for beneficial increases $x_i+\varepsilon' \in (V(y) \cup \{x_{i-1},x_{i+1}\})$,
for beneficial decreases $x_i-\varepsilon' \in (V(y) \cup \{x_{i-1},x_{i+1}\})$, and for neutral increases  and decreases~$x_i+\varepsilon_I, x_i-\varepsilon_D \in (V(y) \cup \{x_{i-1},x_{i+1}\})$.
\end{lemma}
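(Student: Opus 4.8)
The plan is to reduce the block statement to the single-point Lemma~\ref{lemma:deviation_epsilon} by fixing an optimal transformation forest $G_{\mathbb{S}}(x,y)$, decomposing it into its sequence of trees $(\mathcal{T}_1,\ldots,\mathcal{T}_t)$, and analyzing the contribution of the block inside each tree separately. Since the decomposition orders the source nodes so that the sources of $\mathcal{T}_i$ succeed those of $\mathcal{T}_{i-1}$, the block $N_{\mathcal{B}}(x)=\{u_i,\ldots,u_\ell\}$ is cut into contiguous chunks, each lying entirely inside one tree. Because every chunk node carries the same value $x_i$, I would argue, using the merge structure of Type-4 trees together with Lemma~\ref{lemma_max} and Lemma~\ref{lemma:max_specification}, that these equal-valued nodes may be assumed to merge into a single node first, so that the whole chunk behaves like one effective source point of value $x_i$ whose nearest distinct-valued neighbours are the block boundaries $x_{i-1}$ and $x_{\ell+1}$. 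This is what lets me invoke the single-point lemma per tree.

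With this reduction in place I would apply Lemma~\ref{lemma:deviation_epsilon} to the effective point in every tree touched by the block. By that lemma each such tree falls into exactly one of three regimes, \emph{beneficial increase}, \emph{beneficial decrease}, or \emph{neutral in both directions}, and in the two beneficial regimes the local cost changes linearly in $\varepsilon$ with rate $1$ (the detrimental direction changing at rate $1$ as well). Shifting the entire block by $\varepsilon$ shifts the effective point of all these trees simultaneously, so the total change $d(x_{\varepsilon,i,\ell},y)-d(x,y)$ is the sum of the independent per-tree changes. Writing $p_I$ and $p_D$ for the numbers of trees in which increasing, respectively decreasing, the block is beneficial, an increase of the block changes the total cost at rate $p_D-p_I$ and a decrease at the opposite rate $p_I-p_D$, while the neutral trees contribute nothing on a sufficiently small interval.

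From this summation the dichotomy of the lemma follows directly. If $p_I>p_D$ the increase is globally beneficial and the decrease detrimental, and setting the magnitude of the net rate to be $\rho_I=\rho_D=p_I-p_D$ yields Equation~(1); if $p_I<p_D$ the symmetric computation gives Equation~(2); and if $p_I=p_D$ the beneficial and detrimental trees cancel exactly, so the shift is neutral in both directions and we obtain Equations~(3.1) and~(3.2). For the admissible ranges I would take $\varepsilon'$ (respectively $\varepsilon_I,\varepsilon_D$) to be the minimum over all relevant trees of the thresholds supplied by Lemma~\ref{lemma:deviation_epsilon}; each such threshold drives the effective point of its tree to a value in $V(y)\cup\{x_{i-1},x_{\ell+1}\}$ (using Corollary~\ref{lemma:sigma} for the bottleneck value in the neutral subcase), so the minimum again lands in this set, which gives the endpoint-membership claims.

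The hard part will not be the combinatorial summation but making the per-tree reduction rigorous. Two points need care. First, I must show that the equal-valued chunk inside a tree can indeed be collapsed to a single source point without changing cost or optimality, and in particular that a chunk sitting at a tree boundary, where its neighbour in the full series may share its value because the block was split, still meets the distinct-neighbour hypothesis that Lemma~\ref{lemma:deviation_epsilon} requires; here the node in question is the start or end node of its tree, and I expect its relevant outer neighbour to be exactly a block boundary value. Second, I must verify that the transformation I build for $x_{\pm\varepsilon,i,\ell}$ by editing each tree is not merely an upper bound on the distance but is genuinely optimal, so that the computed rate equals the true change in $d$. I expect to settle this last point with the same local exchange arguments (Lemmas~\ref{lemma_parallel_edges}--\ref{lemma:max_specification}) already used in the single-point case, now applied uniformly across all trees that meet the block.
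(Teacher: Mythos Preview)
Your proposal is correct and follows essentially the same route as the paper: decompose an optimal transformation forest into trees, argue that the equal-valued block chunk inside each tree can be collapsed to a single effective source point (the paper handles this as its Case~1.1/1.2), apply Lemma~\ref{lemma:deviation_epsilon} per tree, and then sum the per-tree contributions and take the minimum threshold (the paper's Case~2). The only cosmetic difference is bookkeeping of the rates: you set $\rho_I=\rho_D=|p_I-p_D|$ from the net count, whereas the paper records the raw counts $\rho_I=x_{i,\ell}^+$ and $\rho_D=x_{i,\ell}^-$; either choice witnesses the existential claim in the lemma.
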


\begin{figure}[h]
    \centering
    \includegraphics[width = 0.6\textwidth]{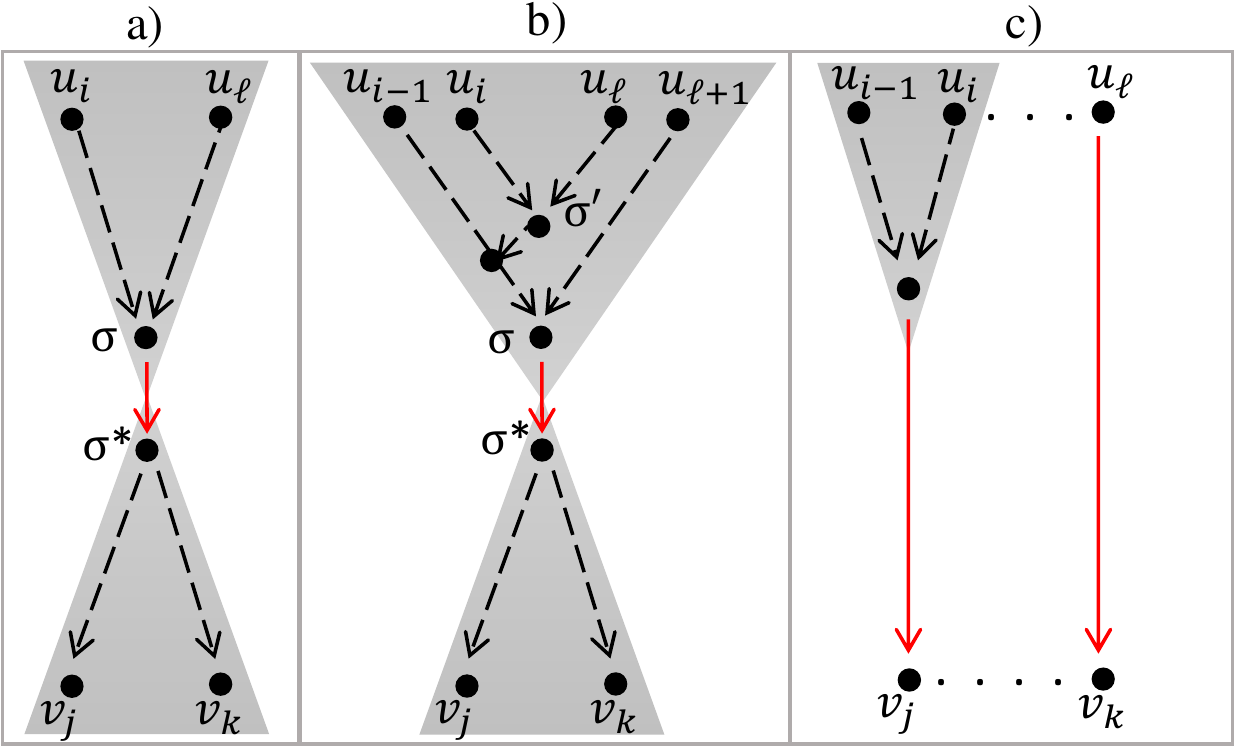}
    \caption{Schematic representation of the three cases for proving Lemma \ref{lemma_deviation_epsilon_block}, depending on the structure of a block $\mathcal{B}$. a) Case 1.1: $N_{\mathcal{T}}(x) = N_{\mathcal{B}}(x)$. b) Case 1.2: $\vert N_{\mathcal{T}}(x)\vert > \vert N_{\mathcal{B}}(x)\vert$. c) Case 2: The nodes in $N_{\mathcal{B}}(x)$ belong to different trees.}
    \label{fig:block_trees}
\end{figure}

\begin{proof} We distinguish whether all nodes of a block  $\mathcal{B}$ belong to the same tree or if they are in different trees. 
Without loss of generality, we specify monotonic paths and trees to be increasing. \\
\indent\textit{Case 1:} All nodes $N_{\mathcal{B}}(x)$ are in one tree $\mathcal{T}$  (see Figure \ref{fig:block_trees}a,b). Without loss of generality, $\mathcal{T}$ is considered to be a tree of Type~4, since all other cases follows the same or a simpler argumentation.
We further distinguish whether the nodes of $N_{\mathcal{B}}(x)$ are the only nodes in $\mathcal{T}$.\\
\indent\textit{Case 1.1:} $N_{\mathcal{T}}(x) = N_{\mathcal{B}}(x)$. 
For the bottleneck nodes it holds that $\val(\sigma) = x_i$ and $\val(\sigma^*)\in V(y)$ (see Corollary \ref{lemma:sigma}). 
The replacement of $x$ by $x_{\varepsilon, i, \ell}$ is a beneficial increase for all $\varepsilon \in [0,\varepsilon']$ with $\varepsilon' = \sigma^*-x_i$ because the intermediate node is also shifted to $\sigma + \varepsilon$ that leads to lower move cost of $\val(\sigma^*) - \val(\sigma) - \varepsilon$, that is, a decrease by $\varepsilon$. 
Therefore, we get the left side of the first equation $ d(x_{\varepsilon, i,\ell}, y) + \varepsilon = d(x,y)$ for~$\rho_I=1$.
It holds that $x_i + \varepsilon' = \val(\sigma^*) \in  V(y) $.   
For the right side of the equation with $\rho_D=1$, the argumentation is similar. Replacing  $x$ by $x_{-\varepsilon, i,\ell}$ we get new move cost of  $\val(\sigma^*) - \val(\sigma) + \varepsilon$, that is, an increase by $\varepsilon$.

\indent\textit{Case 1.2:} $\vert N_{\mathcal{T}}(x)\vert > \vert N_{\mathcal{B}}(x)\vert$. 
Since $\mathcal{T}$ is a tree of Type~4, all nodes in $N_{\mathcal{T}}(x)$ merge to the intermediate node $\sigma$. 
Moreover, it is evident that the merge of adjacent points in $\mathcal{T}$ that are equal creates lower cost than merging two points that are different. 
Therefore, there exists an intermediate node $\sigma'$ where all nodes in $N_{\mathcal{B}}(x)$ merge to (see Figure \ref{fig:block_trees}b). 
Then Lemma \ref{lemma:deviation_epsilon} can be applied for $u_i =  \sigma'$ with  $\rho_I= \rho_D =1$.
Depending on the case, an $\varepsilon_I$ is specified such that we get one of the above equations for an $\varepsilon \in [0,\varepsilon_I]$. 
For $\varepsilon = \varepsilon_I$, the block is shifted until it reaches a value of the adjacent points of the block, that is $ x_{i-1}$ or $x_{\ell +1}$, or it reaches a point in $V(y)$. 
\\

\indent\textit{Case 2:} The nodes in $N_{\mathcal{B}}(x)$ belong to different trees (see Figure \ref{fig:block_trees}c).
In this case, we need to count how many trees we have beneficial increases and decreases.
To decide whether a replacement is beneficial or detrimental, there are two possible cases of trees belonging to the block $\mathcal{B}$. 
The first case is that all nodes in a tree in $\mathcal{B}$ belong to $N_{\mathcal{B}}(x)$. 
Then we can apply Case 1.1. 
The second case concerns the boundary values of $N_{\mathcal{B}}(x)$ merging with the predecessors or successors of the block~$\mathcal{B}$. 
Following the argumentation of Case~1.2, we determine if the replacement of~$x$ by~$x_{\varepsilon, i,\ell}$ is beneficial, neutral, or detrimental.  
Ignoring neutral replacements, we set~$x_{i,\ell}^+$ as the number of trees for which we have a beneficial increase and~$x_{i,\ell}^-$ as the number of trees for which  we have a beneficial decrease. 
Shifting a whole block may therefore lead to a reduction of distance of more than $\varepsilon$. We get the above statement for $\rho_I =x_{i,\ell}^+$  and $\rho_D =x_{i,\ell}^- $.
By Lemma \ref{lemma:deviation_epsilon}, we get an $\varepsilon_{\mathcal{T}}$ for all trees $\mathcal{T}$ in a block $\mathcal{B}$ restricting a beneficial or neutral replacement. Let $\varepsilon^+_{min}$ be the minimum of all   $\varepsilon_{\mathcal{T}}$ for which  we have a beneficial increase. Analogously,  $\varepsilon^-_{min}$ is defined for beneficial decreases. 
Without loss of generality, we assume $x_{i,\ell}^+ > x_{i,\ell}^- $. 
Hence, it holds that for $x_i + \varepsilon^+_{min}$ is in $\{x_{i-1}, x_{\ell+1}\}$ or in $V(y)$. 
\end{proof}

\subsection{MSM-Mean Values}

We now use beneficial and neutral replacements to prove that for any set $X$ there exists a mean $m$ such that all points of $m$ are points of at least one time series of $X$.   

\begin{lemma}
\label{lemma:values_of_m}
Let $X = \{x^{(1)}, \ldots, x^{(k)}\}$ be a set of $k$ time series. Then there exists a mean $m~=~(m_1,\ldots,m_N)$ of $X$ such that $m_i \in V(X)$ for all $m_i, i\in [N]$.
\end{lemma}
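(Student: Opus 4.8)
The plan is to start from an arbitrary mean $m$ and to repeatedly push maximal constant blocks of $m$ onto values of $V(X)$ without changing $D(X,m)$, until no point of $m$ lies outside $V(X)$. Call a maximal block of $m$ \emph{bad} if its common value is not in $V(X)$, and let $\beta(m)$ be the number of bad blocks. If $\beta(m)=0$ we are done, so suppose $m$ has a bad block $\mathcal{B}$ with value $v=m_i\notin V(X)$ and source nodes $N_{\mathcal{B}}(m)=\{u_i,\dots,u_\ell\}$. I would shift the whole block by a parameter $\varepsilon$ and track $f(\varepsilon):=D(X,m_{\varepsilon,i,\ell})=\sum_{j\in[k]}d(m_{\varepsilon,i,\ell},x^{(j)})$.

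For each input series $x^{(j)}$ I would apply Lemma~\ref{lemma_deviation_epsilon_block} with $x:=m$ and $y:=x^{(j)}$ (using $d(m,x^{(j)})=d(x^{(j)},m)$ by symmetry of the \textsc{MSM} metric); the hypothesis $m_i\notin V(x^{(j)})$ holds because $V(x^{(j)})\subseteq V(X)$. The lemma tells me that $\varepsilon\mapsto d(m_{\varepsilon,i,\ell},x^{(j)})$ is affine on a one-sided neighbourhood of $0$ and stays affine until the block value reaches an element of $V(x^{(j)})\cup\{m_{i-1},m_{\ell+1}\}$. The decisive observation is that the two one-sided pieces have the \emph{same} slope, i.e.\ there is no kink at $\varepsilon=0$: a slope change could occur only where the block value coincides with a breakpoint in $V(x^{(j)})\cup\{m_{i-1},m_{\ell+1}\}$, but maximality of $\mathcal{B}$ gives $v\neq m_{i-1},m_{\ell+1}$ and $v\notin V(X)\supseteq V(x^{(j)})$, so $v$ is not such a breakpoint (each constituent tree contributes a term that, by Lemma~\ref{lemma:deviation_epsilon}, is linear through $0$ with slope in $\{-1,0,1\}$). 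Hence each summand, and therefore $f$, is affine on a full two-sided neighbourhood of $\varepsilon=0$.

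Since $m$ is a mean, $\varepsilon=0$ minimises the affine function $f$, forcing its slope to be $0$; thus $f$ is constant on $[0,\varepsilon^{*}]$, where $\varepsilon^{*}$ is the smallest shift at which some $d(m_{\varepsilon,i,\ell},x^{(j)})$ reaches a breakpoint. Shifting $\mathcal{B}$ by $\varepsilon^{*}$ therefore yields a new time series $m'$ with $D(X,m')=D(X,m)$, so $m'$ is again a mean, whose block value is now $v'=v+\varepsilon^{*}\in V(X)\cup\{m_{i-1},m_{\ell+1}\}$ (and $v'\in V(X)$ if $\mathcal{B}$ has no neighbour on that side, which is finite since the endpoint then lies in $V(x^{(j)})\subseteq V(X)$). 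I would iterate this and argue termination via $\beta$: if $v'\in V(X)$ and $v'$ differs from the adjacent block values, $\mathcal{B}$ becomes good and $\beta$ drops by one; if $v'$ equals a neighbouring block value then $\mathcal{B}$ merges with that neighbour, and a short case check (neighbour good $\Rightarrow$ merged block good; neighbour bad $\Rightarrow$ two bad blocks become one) shows $\beta$ again drops by at least one. As shifting only alters the value of $\mathcal{B}$, no new bad block is created, so $\beta$ strictly decreases at each step and reaches $0$ after finitely many steps, producing a mean all of whose points lie in $V(X)$.

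I expect the main obstacle to be the middle paragraph: rigorously ruling out a kink of $f$ at the current block value. The single-pair Lemma~\ref{lemma_deviation_epsilon_block} only describes one-sided behaviour, so the heart of the argument is to combine it across the $k$ series and to exploit $v\notin V(X)$ together with block maximality to certify that $\varepsilon=0$ is an interior point of a linear piece rather than a breakpoint; this is exactly what converts ``$m$ is a minimiser'' into ``$f$ has zero slope'' and unlocks the cost-preserving shift. The termination counting with merges is routine but must be stated carefully so that merging two bad blocks into one is correctly accounted for.
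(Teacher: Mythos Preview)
Your proposal is correct and follows essentially the same route as the paper's proof: shift a bad block via Lemma~\ref{lemma_deviation_epsilon_block} until its value lands in $V(X)$ or coincides with a neighbouring block, and argue this cannot continue indefinitely. The paper packages this as an extremal argument (choose a mean maximizing the number of points already in $V(X)$ and, subject to that, minimizing the number of blocks, then derive a contradiction) rather than your iteration with the potential $\beta$, and it does not spell out your ``no kink at $\varepsilon=0$'' observation---so your version is a bit more explicit on that point, but the underlying mechanism is identical.
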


\begin{proof}
Assume towards a contradiction that every mean has at least one point that is not in $V(X)$. Among all means, choose a mean~$m$ such that 1)  $n_V$, the number of points of~$m$ that are in $V(X)$, is maximum and 2) among all means with $n_V$ points from~$V(X)$, the number of transitions from $m^{(i)}_j$ to $m^{(i)}_{j+1}$ where  $m^{(i)}_j \neq m^{(i)}_{j+1}$ is minimum.  
In other words,~$m$ has a minimal number of blocks.
Let~$\mathcal{B}$ be a block in~$m$ whose points are not in~$V(X)$.
We apply Lemma \ref{lemma_deviation_epsilon_block} to show that there exists an $\varepsilon\in \mathds{R}$ such that  $m_{\varepsilon, i,\ell}$ is a mean where the points of the shifted block $\mathcal{B}$ reach a point of a predecessor or successor of $\mathcal{B}$ or a point in $V(X)$.
We now specify $\varepsilon$. 
First, we determine whether $\varepsilon$ is positive or negative. 
For each sequence, one of the Cases (1) to (3) of Lemma \ref{lemma_deviation_epsilon_block} applies. 
For each time series in $X$, we introduce two variables to count how many beneficial increases and decreases we have. 
Neutral replacements are not counted.  
Let $x^+$ be the sum of~$\rho_I$ for beneficial increases and  $x^-$ be the sum of~$\rho_D$ for beneficial decreases of all time series. 
If $x^+ \geq x^-$, we set $\varepsilon$ as the minimum of the specified $\varepsilon'$ for beneficial increases and all $\varepsilon_I$ (see Lemma~\ref{lemma_deviation_epsilon_block}).
If $x^+ < x^-$ we set $\varepsilon$ as the maximum of the specified $-\varepsilon'$ for beneficial decreases and all $-\varepsilon_D$.
Compared to the mean $m$, all values of $m_{\varepsilon, i,\ell}$ are the same except the values of the shifted block.
By Lemma \ref{lemma_deviation_epsilon_block} the points of the new mean $m_{\varepsilon, i,\ell}$ are shifted for the specified~$\varepsilon$  until they reach a point of the right or left neighbor block or a point in $V(X)$.
If they reach a point of the right or left neighbor block, we have a contradiction to the selection of a mean with a minimal number of transitions.
If they reach a point in $V(X)$, we have the contradiction to the selection of a mean with minimal number of values that are not in $V(X)$.  
\end{proof}

\section{Computing an exact MSM-Mean}
\label{chap:computing_MSM_mean}
Based on Lemma \ref{lemma:values_of_m}, we now give a dynamic program computing a mean $m$ of~$k$ time series $X=\{x^{(1)},\ldots,x^{(k)}\}$. 
The transformation operations are described for the direction transforming $X$ to $m$. 
\subsection{Dynamic Program}
We fill a $(k+2)$-dimensional table $D$ with entries $D[(p_1, \ldots, p_k),\ell, s]$, where \begin{itemize}
    \item $p_i \in [n_i]$ indicates the \emph{current position} in time series $x^{(i)}$,
    \item the index $\ell\in [N]$ indicates the \emph{current position of $m$}, and
    \item $s$ is the index of a point $v_s \in V(X)$.
\end{itemize}   
We also say that $(p_1, \ldots, p_k)$ are the \emph{current positions of $X$}. 
For clarity, we write $p= (p_1,\ldots, p_k)$.
The entry $D[p,\ell, s]$ represents the cost of the partial time series $\{(x_1^{(1)}, \ldots, x_{p_1}^{(1)}),\ldots, (x_{1}^{(k)}, \ldots, x_{p_k}^{(k)})\}$ transforming to a mean $(m_1,\ldots,m_{\ell})$ assuming that $m_{\ell} = v_s$.
Giving a recursive formula filling table $D$ we have two transformation cases. 
The case distinction is based on the computation of the \textsc{MSM} distance for two time series $x = (x_1,\ldots, x_m)$ and $(y_1,\ldots, y_n)$.
This computation fills a two-dimensional table $D^*$. 
An entry $D^*[i,j]$ represents the cost of transforming the partial time series $(x_1, \ldots, x_i)$ to the partial time series $(y_1,\ldots, y_j)$. The distance $d(x,y)$ is given by $D^*[m,n]$.
Stefan et al. \cite{stefan2012move} give the recursive formulation of the \textsc{MSM} metric as the minimum of the cost for the three transformation operations.
\begin{align*}
    D^*[i,j] &= \min\{A_{MO}[i,j], A_{M}[i,j], A_{SP}[i,j] \}, \text{where} &\\
    A_{MO}[i,j] &= D^*[i-1,j-1] + \vert x_i -y_i \vert & (move) \\
    A_{M}[i,j] &= D^*[i-1,j] + C(x_i, x_{i-1}, y_j) & (merge) \\
    A_{SP}[i,j] &= D^*[i,j-1] + C(y_j, x_i, y_{j-1}) & (split) 
\end{align*}
\noindent for
$$
C(x_i,x_{i-1},y_j) = \begin{cases} 
c \;\;\;\; \text{if } x_{i-1} \leq x_i \leq y_j \text{ or } x_{i-1}\geq x_i \geq y_j &  \\
c + \min(\vert x_i-x_{i-1}\vert,\vert x_i-y_j\vert) \quad \text{otherwise}. & 
\end{cases}
$$
When the recursion reaches a border of $D^*$, we have the special cases $D^*[i,1] = D^*[i-1,1] + C(x_i, x_{i-1}, y_1) $ (only merge operation may be further applied) and $D^*[1,j] =D^*[1,j-1] + C(y_j, x_1, y_{j-1})$ (only split operation may be further applied). 
The base case is reached for $D^*[1,1] = \vert x_1 -y_1\vert$ where only a move operation is applied. 

For the recursion formula for the \textsc{MSM-Mean} problem, we distinguish between applying moves and splits ($A_{MS}$) and  only merges ($A_{ME}$):
\begin{align*}
    D[p,\ell, s] = \min \{ A_{MS}[p,\ell, s], A_{ME}[p,\ell, s]\}.
\end{align*}
To distinguish between these cases, we introduce index sets $I_{MO}, I_{SP}$, and $I_{ME}$ for move, split, and merge operations, respectively. 
They represent the indices for those time series which either move, split, or merge.  
All index sets are subsets of $I=[k]$.
Let $\overline{p}_{MO}$ be the tuple of $p$ where $\overline{p}_i = p_i-1$ for all $i\in I_{MO}$. 
The tuple $\overline{p}_{ME}$ is defined analogously. 
The first case considers that at some current positions of $X$ there are move and at all other positions there are split operations. 
It holds that $I_{MO} \cup I_{SP}=I$.
For these operations, the recursive call of the function decreases the current position of~$m$:
\begin{align*}
A_{MS}[p,\ell, s] = &\min_{v_{s'}\in V(X)}\{\min_{I_{MO},I_{SP}} \bigl( D[\overline{p}_{MO}, \ell-1, s'] \\
&+ \sum_{i\in I_{MO}} \vert x_{p_i}^{(i)} - v_s\vert + \sum_{i\in I_{SP}} C(v_s,x_{p_i}^{(i)}, v_{s'})\bigr)\}. 
\end{align*}
The second case treats merge operations of at least one current position of $X$.
If a merge is applied, all other time series pause since the recursive call does not decrease the position of $m$:
\begin{align*}
A_{ME}[p,\ell, s] = &\min_{I_{ME}} \bigl(D[\overline{p}_{ME},\ell, s] +  \sum_{i\in I_{ME}} C(x_{p_i}^{(i)},x_{p_i-1}^{(i)}, v_{s}) \}. 
\end{align*}

\noindent For the last step in the recursion, the entries $D[(1,\ldots, 1), 1, s]$ for all $v_s \in V(X)$ are calculated by
\begin{align*}
    D[(1,\ldots, 1), 1, s] = \sum_{i\in I} \vert x_{1}^{(i)} - v_s\vert.
\end{align*}
All entries $D[p,\ell, s]$ for which  $p_i<1$ for some $i\in [k]$ is set to $+\infty$.
If $\ell =1$ and all $p_i>1$, only merge operations may be applied:
\begin{align*}
  D[p,1, s] = \min_{I_{ME}\subseteq I} \bigl(D[\overline{p}_{ME},1, s] +  \sum_{i\in I_{ME}}C(x_{p_i}^{(i)},x_{\overline{p}_i}^{(i)}, v_{s}). 
\end{align*}

The correctness of the dynamic program hinges on the fact that in the recursive definition of the pairwise distance, the value of $D^*[i,j]$ depends only on the values of $D^*[i,j-1]$, $D^*[i-1,j]$, $x_i,y_j,x_{i-1}$, and $y_{j-1}$; we omit the formal correctness proof.

\subsection{Running Time Bound}
We now show an upper bound on the maximum mean length in terms of the total length of $X$. 
To this end, we first make the observation that the index set  $I_{MO}$ is never empty.
That is, it is not optimal to apply only split operations in one recursion step. 

\begin{lemma}
\label{lemma:mergeAndMove}
Let $m$ be a mean of a set of $k$ time series $X$. It holds that  
$
D[p,\ell, s] < \min_{v_{s'}\in V(X)} \{ D[p,\ell-1, s'] + \sum_{i\in I} C(v_s,x_{p_i}^{(i)}, v_{s'})\}
$.
\end{lemma}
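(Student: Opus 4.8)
The first move is to recognize that the right-hand side is not a new quantity but exactly one of the options already minimized over inside $A_{MS}$, namely the choice $I_{MO}=\emptyset$, $I_{SP}=I$: then $\overline{p}_{MO}=p$ and the $A_{MS}$ expression collapses to $\min_{v_{s'}\in V(X)}\{D[p,\ell-1,s']+\sum_{i\in I}C(v_s,x_{p_i}^{(i)},v_{s'})\}$. Since $A_{MS}[p,\ell,s]$ is the minimum over all admissible pairs $(I_{MO},I_{SP})$ and $D[p,\ell,s]\le A_{MS}[p,\ell,s]$, the inequality ``$\le$'' is immediate. Hence the entire content of the lemma is the \emph{strictness}: the all-split step $I_{MO}=\emptyset$ is never the sole minimizer, i.e.\ producing the mean point $m_\ell=v_s$ purely by splitting every input series can always be beaten.

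To get strictness I would exhibit a strictly cheaper reconstruction, and the guiding intuition lives at the level of the transformation forest. If $m_\ell$ is reached in the transformation forest transforming each $x^{(i)}$ into $m$ only through a split edge, then $m_\ell$ is a pure duplicate of an adjacent mean point, and each such split contributes at least the split cost $c$ (indeed $C(v_s,x_{p_i}^{(i)},v_{s'})\ge c$ for every $i$ directly from the definition of $C$). I would therefore replace the split in at least one series $i^\ast$ by a \emph{move} of its current point $x_{p_{i^\ast}}^{(i^\ast)}$ into $v_s$: because $v_s\in V(X)$ by Lemma~\ref{lemma:values_of_m} and the forest may be taken monotonic by Lemma~\ref{lemma:monotonicity}, this move can be inserted consistently, the point that previously furnished the split is freed for producing $m_{\ell-1}$, and the net change in cost is at most $|x_{p_{i^\ast}}^{(i^\ast)}-v_s|$ against a saving of $C(v_s,x_{p_{i^\ast}}^{(i^\ast)},v_{s'})\ge c$. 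Choosing $i^\ast$ so that the move overhead cannot cancel the saved split cost yields a configuration with $I_{MO}\neq\emptyset$ whose value is strictly below the right-hand side, and hence $D[p,\ell,s]<\text{RHS}$.

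The step I expect to be the main obstacle is the bookkeeping of the current-position tuple. Splits leave $p$ unchanged whereas moves decrement it, so turning a split into a move for series $i^\ast$ replaces the recursive term $D[p,\ell-1,s']$ by $D[\overline{p}_{MO},\ell-1,s']$ evaluated at a strictly smaller position, and one must argue that this substitution, together with the move cost, still lands strictly below the right-hand side rather than merely shifting cost elsewhere. I would resolve this by reasoning directly on an optimal transformation forest realizing the right-hand side instead of on the raw $D$-values: the structural lemmas forbidding consecutive merge--split and split--move--merge edges (Lemma~\ref{prop:merge_split} and Lemma~\ref{prop:split-move-merge}) constrain how a split producing $m_\ell$ may attach, so that removing it and rerouting through a single move keeps the forest within the admissible Type~1--4 structure while strictly decreasing its total cost. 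The strictness then rests on $c>0$, i.e.\ on split and merge operations carrying positive cost, which is exactly what makes a duplicated mean point wasteful.
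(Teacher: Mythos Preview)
Your opening observation is correct and matches the paper: the right-hand side is precisely the $A_{MS}$ option with $I_{MO}=\emptyset$, so the inequality $\le$ is automatic and the whole lemma is about strictness. But your route to strictness is not the paper's, and the obstacle you yourself flag is a genuine gap. When you replace the split at a single series $i^\ast$ by a move, the recursive term changes from $D[p,\ell-1,s']$ to $D[\overline{p}_{MO},\ell-1,s']$ with $p_{i^\ast}$ decremented, and you have no a~priori control over that difference; it can easily exceed the $C(v_s,x_{p_{i^\ast}}^{(i^\ast)},v_{s'})\ge c$ you save. Your proposed fix---``reasoning directly on an optimal transformation forest'' and citing Lemmas~\ref{prop:merge_split} and~\ref{prop:split-move-merge}---does not bound that recursive term, and the sketch of ``rerouting through a single move'' is still a local swap, not a completed argument.

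The paper's proof is simpler and global rather than local to the DP. Suppose the all-split option attained the minimum at some step $\ell$ in an optimal computation of the mean $m$. Then every series produces $m_\ell$ solely via a split. In that case $m_\ell$ is redundant: delete it from $m$ to obtain $m'$. Each $x^{(i)}$ can be transformed into $m'$ by exactly the same operations as before except that the $k$ splits creating $m_\ell$ are dropped; no other alignment is disturbed. This saves $\sum_{i\in I}C(v_s,x_{p_i}^{(i)},v_{s'})\ge kc>0$, so $D(X,m')<D(X,m)$, contradicting that $m$ is a mean. The key idea you are missing is to modify the \emph{mean itself} rather than the index sets in the recursion---this is precisely where the hypothesis ``$m$ is a mean'' does its work, and it sidesteps all of your bookkeeping difficulties at once.
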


\begin{proof}
Let $D(X,m)$ be the distance of a mean $m$ to $X$. 
Assume towards a contradiction, that there exists a recursion step, where $I_{MO}= \emptyset$. 
That is, in each time series in $X$ there is a split at a point $x_{p_i}^{(i)}, i\in I$ to the points $m_{\ell}$ and $m_{\ell -1}$. 
We regard the cost for the transformation up to the positions $(p_1, \ldots, p_k)$ of~$X$ and $\ell$ of $m$. 
Applying the recursion formula for $I_{SP}=I$, we get $D[p,\ell, s] =\min_{v_{s'}\in V(X)} \{D[p,\ell-1, s'] + \sum_{i\in I} C(v_s,x_{p_i}^{(i)}, v_{s'})\}.$ 
Let $m'$ be a mean of $X$ equal to $m$ but where $m_{\ell}$ is deleted. 
For the mean $m'$, we save the cost for splitting $\sum_{i\in I} C(v_s,x_{p_i}^{(i)}, v_{s'})$, without changing the alignment of all other points in $X$. It follows that $D(X,m') < D(X,m)$.
This is a contradiction to $m$ being a mean. 
\end{proof}

Lemma~\ref{lemma:mergeAndMove} now leads to the following upper bound for the \textsc{MSM-Mean} length. 

\begin{lemma}
\label{lemma:mean_length}
Let $X=\{x^{(1)}, \ldots, x^{(k)}\}$ be a set of time series with maximum length $max_{j\in[k]}\vert x^{(j)} \vert = n$. Then, every mean~$m$  has length at most $(n-1)k + 1$.
\end{lemma}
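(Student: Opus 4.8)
The plan is to bound the mean length by counting how many positions of $m$ can be produced over the course of the dynamic program, using Lemma~\ref{lemma:mergeAndMove} as the key structural fact. The index $\ell$ of $m$ increases by one exactly when a move-split step ($A_{MS}$) is applied, and stays fixed when a merge step ($A_{ME}$) is applied. Hence the final mean length $N$ equals the number of move-split steps taken along the path through the table $D$ that realizes the optimal value $D[(n_1,\ldots,n_k),N,s]$. So the task reduces to bounding the number of such steps in terms of the total input length.

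The main idea is to track, across the whole transformation, how the current positions $p=(p_1,\ldots,p_k)$ advance. Each time series $x^{(i)}$ has its position $p_i$ decremented (reading the recursion from the final entry backwards, or equivalently advanced reading forwards) only when $i$ participates in a move or merge, i.e. $i\in I_{MO}\cup I_{ME}$; a split ($i\in I_{SP}$) leaves $p_i$ unchanged while still consuming a position of $m$. The critical input is Lemma~\ref{lemma:mergeAndMove}: in every move-split step the set $I_{MO}$ is nonempty, so at least one time series advances its own position by one whenever a new point of $m$ is created. This is the lever that converts ``number of points of $m$'' into ``number of position-advances in $X$,'' which is a quantity bounded by the total length $\sum_{j}n_j$.

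The plan for the count is as follows. First I would observe that the very first point $m_1$ (the base case $D[(1,\ldots,1),1,s]$) is produced ``for free'' — it is aligned by moves from the first point of every series, so it accounts for one unit of mean length. For every subsequent point $m_\ell$ with $\ell\ge 2$, the step that creates it is an $A_{MS}$ step, and by Lemma~\ref{lemma:mergeAndMove} it has $|I_{MO}|\ge 1$, meaning at least one index $i$ has its position strictly advanced in that step. Each advance of $p_i$ from $1$ up to $n_i$ can happen at most $n_i-1$ times total across the entire run (a position of a fixed series is consumed at most once in the forward reading). Summing over all $k$ series, the total number of move-advances available is at most $\sum_{i\in[k]}(n_i-1)\le k(n-1)$. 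Since each of the $N-1$ points $m_2,\ldots,m_N$ uses up at least one such advance, we get $N-1\le k(n-1)$, hence $N\le (n-1)k+1$, as claimed.

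The main obstacle I anticipate is making the bookkeeping airtight: one must argue that the advances consumed by distinct move-split steps are genuinely distinct (no single position $p_i$ is ``used'' to justify two different new points of $m$), and that merge-only steps, which also advance positions but do not increase $\ell$, do not spoil the inequality. The clean way is to define the count as the number of $A_{MS}$ steps, charge to each such step at least one decrement of some coordinate of $p$, and note that decrements are globally limited by $\sum_i(n_i-1)$ regardless of whether they occur in $A_{MS}$ or $A_{ME}$ steps — so the $A_{MS}$ steps alone cannot exceed this total. I would phrase this as a charging/amortization argument rather than a direct induction, since the interleaving of merge and move-split steps makes a naive induction on $\ell$ awkward. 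The remaining details — the $+1$ from the base case and the worst case $n_i=n$ for all $i$ — are routine.
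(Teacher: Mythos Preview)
Your proposal is correct and follows essentially the same approach as the paper: both arguments hinge on Lemma~\ref{lemma:mergeAndMove} to guarantee that every step which produces a new mean point decrements at least one coordinate of $p$, and then bound the total number of such decrements by $\sum_i (n_i-1)\le k(n-1)$. The only cosmetic difference is that the paper phrases it as a proof by contradiction (if $N>(n-1)k+1$ the recursion reaches $(1,\ldots,1)$ with $\ell'>1$, forcing an all-split step that Lemma~\ref{lemma:mergeAndMove} forbids), whereas you give a direct charging argument; the underlying count is identical.
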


\begin{proof}
Towards a contradiction, let $m$ be a mean of $X$ with length $N>(n-1)k + 1$.
The entry of the first recursion call is $D[(n_1, \ldots, n_k), N, s]$.
Consider any sequence of recursion steps from $D[(n_1, \ldots, n_k), N, s]$ to $D[(1, \ldots, 1), \cdot, \cdot]$; each step is associated with  index sets $I_{MO}$ and~$I_{SP}$, or $I_{ME}$.
By Lemma \ref{lemma:mergeAndMove}, it holds that $I_{MO}~\neq~\emptyset$ in each step. 
That is, at least one current position of $X$ is reduced by one in each recursion step until the entry $D[(1,\ldots, 1), \ell', s']$ is reached. 
These are at most $(n-1)k+1$ recursion steps. 
Since $N~>~(n-1)k + 1$, it holds that $\ell'>1$. 
The only possibility for a further recursion step for $D[(1,\ldots, 1), \ell', s']$ is to set $I_{SP}= I$, since~$D[p,\ell, s]=+\infty $ whenever $p_i<1$ for some~$i$.
By Lemma \ref{lemma:mergeAndMove}, we get a contradiction to $m$ being a mean.  
\end{proof}

We now bound the running time of our algorithm.
\begin{lemma}
The \textsc{MSM-Mean} problem for $k$ input time series of length at most~$n$ can be solved in time $ \mathcal{O}(n^{k+3} 2^k k^3)$. 
\end{lemma}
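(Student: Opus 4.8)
The plan is to bound the running time by counting the number of table entries and the work done per entry. First I would count the table entries. The table $D$ has dimensions indexed by $(p_1,\ldots,p_k)$, by $\ell$, and by $s$. Each $p_i$ ranges over $[n_i]$, so there are at most $n^k$ choices for the position tuple $p$. By Lemma~\ref{lemma:mean_length}, the mean length $N$ is at most $(n-1)k+1 = \mathcal{O}(nk)$, so $\ell$ ranges over $\mathcal{O}(nk)$ values. The index $s$ ranges over $V(X) = \{v_1,\ldots,v_r\}$, and since $|V(X)| \le \sum_j n_j \le nk$, there are $\mathcal{O}(nk)$ choices for $s$. Multiplying, the number of entries is $\mathcal{O}(n^k \cdot nk \cdot nk) = \mathcal{O}(n^{k+2}k^2)$.

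Next I would bound the work per entry, which is the dominant factor coming from the recursion $D[p,\ell,s] = \min\{A_{MS}[p,\ell,s], A_{ME}[p,\ell,s]\}$. For the merge case $A_{ME}$, the minimization ranges over the choice of $I_{ME} \subseteq I = [k]$, giving $2^k$ subsets, and each term costs $\mathcal{O}(k)$ to evaluate the sum, so $A_{ME}$ takes $\mathcal{O}(2^k k)$ time. For the move-split case $A_{MS}$, there is an outer minimization over $v_{s'} \in V(X)$ contributing a factor $\mathcal{O}(nk)$, an inner minimization over the partition into $I_{MO}$ and $I_{SP}$ with $I_{MO}\cup I_{SP}=I$ contributing a factor $2^k$, and for each such choice the two sums cost $\mathcal{O}(k)$ to evaluate. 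Thus $A_{MS}$ takes $\mathcal{O}(nk \cdot 2^k \cdot k) = \mathcal{O}(n 2^k k^2)$ time, which dominates the merge case.

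Finally I would multiply the number of entries by the per-entry cost:
\begin{align*}
\mathcal{O}(n^{k+2}k^2) \cdot \mathcal{O}(n 2^k k^2) = \mathcal{O}(n^{k+3} 2^k k^4).
\end{align*}
This gives $k^4$, one factor of $k$ more than the claimed $\mathcal{O}(n^{k+3}2^k k^3)$, so the main obstacle is to shave off a factor of $k$ through a more careful amortized analysis. I would expect the savings to come from precomputation: rather than re-evaluating the sums $\sum_{i\in I_{MO}}|x^{(i)}_{p_i}-v_s|$ and $\sum_{i\in I_{SP}}C(\cdot)$ from scratch for every subset, one can precompute per-coordinate contributions once per $(p,s)$ (or per $(p,s,s')$) and then assemble each subset's value in amortized $\mathcal{O}(1)$ rather than $\mathcal{O}(k)$ time, or alternatively avoid enumerating all $2^k$ partitions explicitly by observing that for fixed $s,s'$ each coordinate independently chooses move or split to minimize its own contribution, collapsing the inner $2^k \cdot k$ factor. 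Verifying that this independence holds and correctly handling the shared dependence on $D[\overline{p}_{MO},\ell-1,s']$ is the delicate step, and it is where I would focus the argument to recover the stated $k^3$ bound.
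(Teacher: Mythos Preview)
Your approach is the same as the paper's: count table entries as $\mathcal{O}(n^{k+2}k^2)$ and multiply by the per-entry work. The paper's per-entry accounting is simply $|V(X)|\cdot 2^k=\mathcal{O}(nk\cdot 2^k)$, which yields the stated $\mathcal{O}(n^{k+3}2^kk^3)$ directly; it does \emph{not} charge an extra $\mathcal{O}(k)$ for evaluating the sums $\sum_{i\in I_{MO}}(\cdots)$ and $\sum_{i\in I_{SP}}(\cdots)$ per subset. In other words, the paper arrives at $k^3$ by omitting exactly the factor you are worried about, so your $k^4$ is the honest naive count and your instinct that a factor of $k$ must be shaved is correct.

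Of the two fixes you sketch, the incremental one works: for fixed $p,s,s'$, enumerate the $2^k$ subsets $I_{MO}$ in Gray-code order so that consecutive subsets differ in a single coordinate, and maintain the two sums (and the index $\overline{p}_{MO}$) by an $\mathcal{O}(1)$ update per step; together with an $\mathcal{O}(1)$ table lookup for $D[\overline{p}_{MO},\ell-1,s']$ this gives $\mathcal{O}(2^k)$ total per $(p,\ell,s,s')$ and recovers the paper's bound rigorously. Your second idea, letting each coordinate independently pick move versus split, does \emph{not} go through, precisely because the looked-up value $D[\overline{p}_{MO},\ell-1,s']$ couples the coordinates through $\overline{p}_{MO}$; you already flagged this as the delicate step, and it is genuinely an obstruction rather than a technicality.
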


\begin{proof}
In the dynamic programming table $D$ at most $n^{k+2}k^2$ entries have to be computed. 
This number is the dimension of $k$ time series with maximum length $n$, the maximum length of the mean $(n-1)k +1 \leq kn$ and the size of $V(X)$ which is at most $kn$. 
For each table entry, the minimum over the set $V(X)$ is taken, which includes again $kn$ data points. For each minimum over $V(X)$ all subsets of $[k]$ are considered which are at most $2^k$ sets. 
All subsets of $[k]$ are only generated once for both $I_{MO}$ and $I_{ME}$.
Thus, filling the table iteratively takes time $\mathcal{O}(n^{k+3} 2^k k^3)$. 

For the traceback, the start entry of $D$ is any of the position $(n_1,\ldots,n_k)$ with minimal cost, that is, 
$$D[(n_1,\ldots,n_k),\ell_{start},s_{start}]  = \min_{\ell, s} D[(n_1,\ldots,n_k),\ell,s].
$$ The length of the mean $m$ is $\ell_{start}$ with $m_{start} = v_{s_{start}}$. 
In each traceback step, the \emph{predecessors} of the current entry are determined, that are the entries leading to the cost of the current entry. A predecessor of an entry is not unique. 
For setting the mean data point we consider the current entry $D[(p_1,\ldots,p_k),\ell,s]$ and the entry of the predecessor  $D[(q_1,\ldots,q_k),\ell',s']$.
If $\ell' = \ell-1$, the point $ v_{s'}$ is assigned to the mean point $m_{\ell'}$ and we continue with the next traceback step. Otherwise, the next traceback step is directly applied without assigning a mean point. 
We repeat this procedure until we reach the entry $D[(1,\ldots,1),1,s^*]$. 
The running time of filling the table clearly dominates the linear time for the traceback.
 \end{proof}
 
 \subsection{Implementation \& Window Heuristic}
 \label{subsec:Implementation_Heuristic}
We fill the table $D$ iteratively and apply the above described traceback mechanism afterwards.
Since the running time of \textsc{MSM-Mean} will often be too high for moderate problem sizes, we introduce the \emph{window heuristic} to avoid computing all entries of table $D$.
Similar to a heuristic of Levenshtein distance \cite{ukkonen1985algorithms}, the key idea is to introduce a parameter $d$ called the window size representing the maximum difference between the current positions of the time series within the recursion. 
All entries whose current positions are not within distance $d$ will be discarded. 
For example, an entry with current position $(6,3,4)$ of $X$ will not be computed for $d=2$. 
In the case of a set of time series with unequal lengths, where $n_{min}$ and $n_{max}$ denotes the minimum length and the maximum lengths, respectively, of all time series, $d$ has to be greater than $n_{max} - n_{min}$. 
 
\section{Experimental Evaluation}
\label{chap:experimental_eval}
This section provides important results from a selection of experiments using implementations of mean algorithms\footnote{All code is available on GitHub: \url{https://github.com/JanaHolznigenkemper/msm_mean}}. After a description of the experimental setup, we first provide a running time comparison of the \textsc{DTW-Mean} algorithm \cite{brill2019exact} and our \textsc{MSM-Mean} algorithm. 
Furthermore, we examine accuracy and running times of \textsc{MSM-Mean} for various heuristics. 

\subsection{Experimental Setup}
The running times of our Java implementations are measured on a server with Ubuntu Linux 20.04 LTS, two AMD EPYC 7742 CPUs at 2.25 Ghz (2.8 Ghz boost), 1TB of RAM, Java version 15.0.2. Our implementations are single-threaded. For our results at most 26GB of RAM were occupied.  

The experiments are conducted on 20 UCR data sets \cite{UCRArchive} that Stefan et al. \cite{stefan2012move} already used (see Table \ref{tab:testsetsUCR}).
The UCR data sets were collected for the use of time series classification. Each set consists of a training and a testing set. The data sets consists of time series of different classes and different lengths.
Since we are not using the sets for classification use cases yet, we just take the training sets of each data set for our experimental setup.
The parameter $c$ is set constant for every data set following the suggestions of Stefan et al. \cite{stefan2012move}.

\begin{table}[t]
 \caption{List of 21 UCR time series data sets. For our running time experiment, we did not take the Italy Power Demand data set (*) since the time series are too short. The quality analysis of \textsc{MSM-Mean} using this data set was conducted for $c\in\{0.01,0.1,0.2,0.5\}$. }
    \centering
    \begin{tabular}{l|c|c|c|c}
         Data Set& \#Classes & \#TS Training & TS Length & MSM $c$  \\ \hline
         50words&50&450&270&1\\
         Adiac &37&390&176& 1 \\
         Beef &5&30&470&0.1 \\
         CBF &3& 30 &128&0.1 \\
         Coffee &2&28&286&0.01 \\
         ECG &2&100&96&1 \\
         FaceAll &14&560&131&1 \\
         Face (four) &4&24&350&1\\
         Fish &7&175&463&0.1 \\
         Gun Point &2& 50&150& 0.01 \\
         Italy Power* &2&67&24&*\\
         Lightning-2 &2&60&367&0.01\\
         Lightning-7 &7&70&319&1\\
         OliveOil &4&30&470&0.01\\
         OSU Leaf &6&200&427&0.1\\
         Swedish Leaf &15&500&128&1 \\
         Synthetic C.  & 6& 300& 60& 0.1 \\
         Trace &4&100&275&0.01\\
         Two Pattern &4&1000&128&1\\
         Wafer  &2&1000&152&1\\
         Yoga &2&300&426&0.1 
         \end{tabular}
    \label{tab:testsetsUCR}
\end{table}

Due to the complexity of the algorithms, we draw time series samples from the training sets obtained from the UCR archive in the following way. For each class of the training sets, we randomly pick $k$ time series, $k\in  \{3,4,5\}$, and for each of them, we cut out a contiguous subsequence of length $n$ starting at a random data point, $n \in \{10, \ldots,50\}$. In addition, we limit the length of the mean time series to at most $n$.

\subsection{Running Time Comparisons}

In our first experiment, we consider the running times for $k=3,4$. Figure~\ref{fig:lineplotMSMDTW} shows the average running time over all 20 data sets as a function of $n$. Our  \textsc{MSM-Mean} algorithm is substantially faster than the DTW counterpart.
The outlier in the DTW graph is due to a data set where the implementation does not complete within 
10 minutes. 
Figure \ref{fig:boxplotMSMDTW} in the appendix provides box plots depicting the running times of both algorithms for $k=4$ and $n=10 \ldots, 13$. They reveal that the \textsc{MSM-Mean} algorithm has smaller medians and interquartile ranges and fewer outliers of high running times  compared to the \textsc{DTW-Mean} algorithm.  
The \textsc{MSM-Mean} implementation was able to compute any instance for $k=3$, $n < 43$, $k=4$, $n < 19$, and $k=5$, $n < 11 $ within 10 minutes.
For \textsc{DTW-Mean}, this was only the case for $k=3$, $n<29$ and $k=4$, $n<14$.

\begin{figure}[h]
    \centering
    \includegraphics[width = 0.45\textwidth]{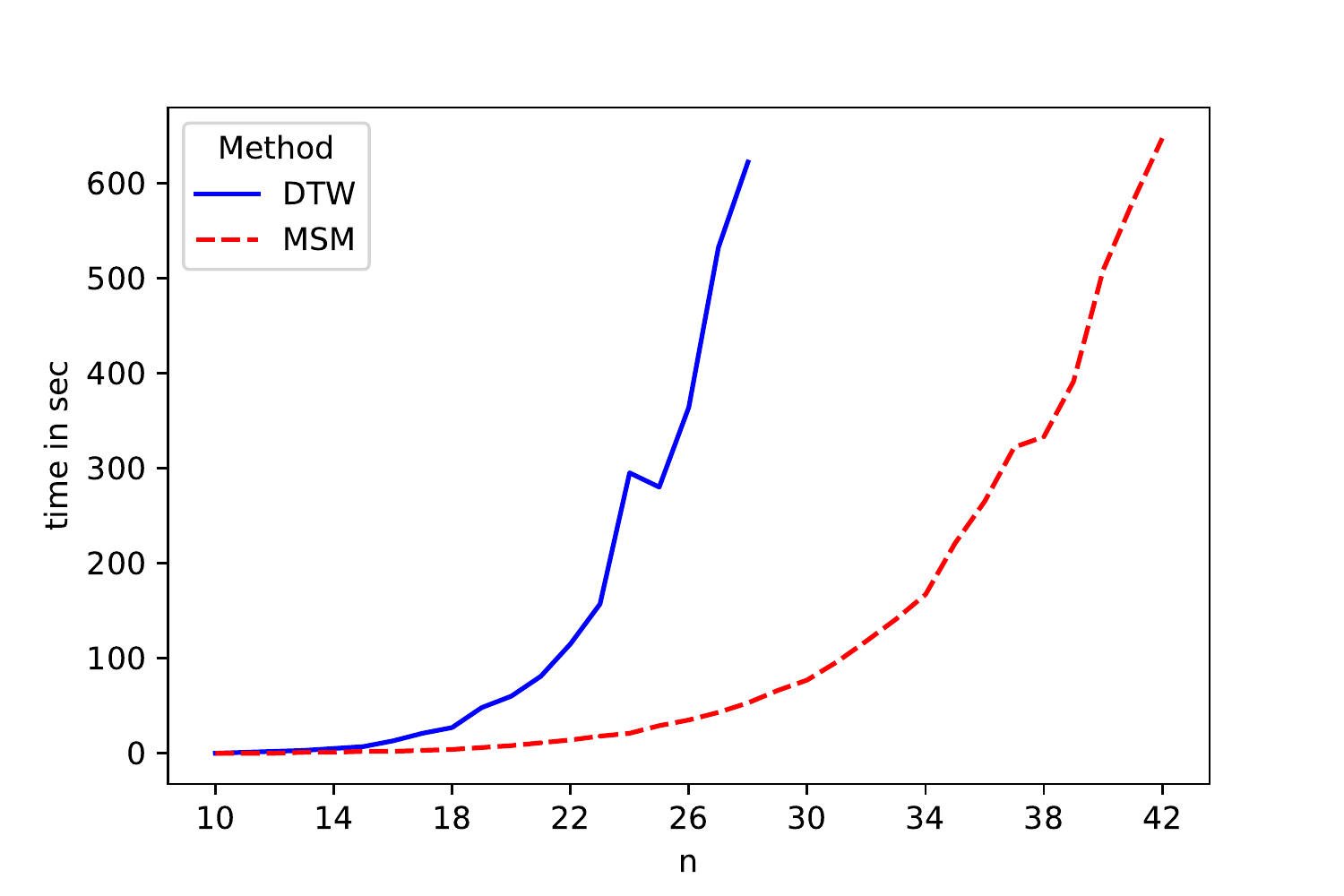}
    \caption{Running time comparison of \textsc{MSM-Mean} and \textsc{DTW-mean} for $k=3$ as a function of $n$}
    \label{fig:lineplotMSMDTW}
\end{figure}

\begin{figure}[h]
    \centering
    \includegraphics[width = 0.90\textwidth]{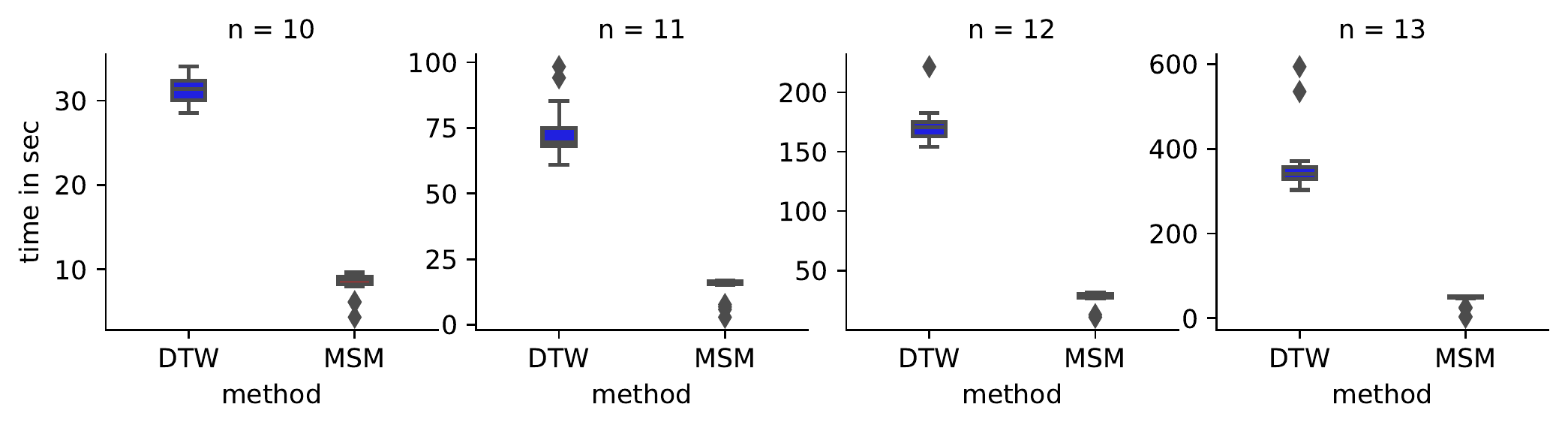}
    \caption{Running time comparison of \textsc{MSM-Mean} and \textsc{DTW-mean} for $k=4$}
    \label{fig:boxplotMSMDTW}
\end{figure}

\subsubsection{\textsc{MSM-Mean} Quality}
To evaluate the quality of the \textsc{MSM-Mean}, we use the algorithm on the \textit{ItalyPowerDemand} data set \cite{UCRArchive} where each time series has length 24. The data set contains two classes. 
For different values of $c \in \{0.01,0.1,0.2,0.5\}$, Table \ref{tab:Italy} shows the distance of \textsc{MSM-Mean} to three other time series. The first row reports the distance when the time series belong to one class, while the second row provides the distance when taking them from both classes. The results confirm for all $c$ that distances of \textsc{MSM-Mean} are lower when time series belong to one class. 

\begin{table}[h]
    \centering
     \caption{Distance of the mean to time series of the data set \textit{ItalyPowerDemand} of one class and to time series of mixed classes for k=3 and n=24 for varying $c$}
    \begin{tabular}{|l|c|c|c|c|}
        \hline
         \textbf{c} & \textbf{0.01} & \textbf{0.1} & \textbf{0.2} & \textbf{0.5}  \\ \hline
         \textbf{one class} & 5.55 & 7.05 & 8.02 & 9.72 \\ \hline
         \textbf{two classes} & 11.66 & 15.41 & 18.46 & 26.1 \\  \hline
    \end{tabular}
    \label{tab:Italy}
\end{table}

\subsubsection{Length of the Mean} 
We implemented two versions of the \textsc{MSM-Mean} algorithm, one with a fixed length $n$ of the mean and one without length constraints. As shown in Lemma~\ref{lemma:mean_length}, the lengths of \textsc{MSM-Mean} is at most $(n-1)k+1$. However, the results of our experiments for $k=3$ and $n\in\{10,\ldots,30\}$ show that the length of \textsc{MSM-Mean} is always exactly $n$. Thus, it is advisable to use this constraint as done in the experiments discussed above.

\subsection{\textsc{MSM-Mean} Heuristics}

\subsubsection{Discretization Heuristic}
Because the domain size of the values of a time series has an significant effect on the performance of \textsc{MSM-Mean} algorithm, we propose a second heuristic where the domain is split into $v$ buckets of equal length. Each value $x$ of a time series is then replaced by the center point of the bucket to which $x$ belongs. Thus, there are at most $v$ different values in total. 

Figure \ref{fig:maxV} shows the running time of this heuristic as a function of $v$ for $k=3$ and $n=30$. There is a substantial (close to linear) decrease in the running time with a decreasing number of buckets. Moreover, the relative error is quite moderate: We observed an average and maximum error of 4.6\% and 8.47\%, respectively. 

\begin{figure}[t]
    \centering
    \includegraphics[width = 0.5\textwidth]{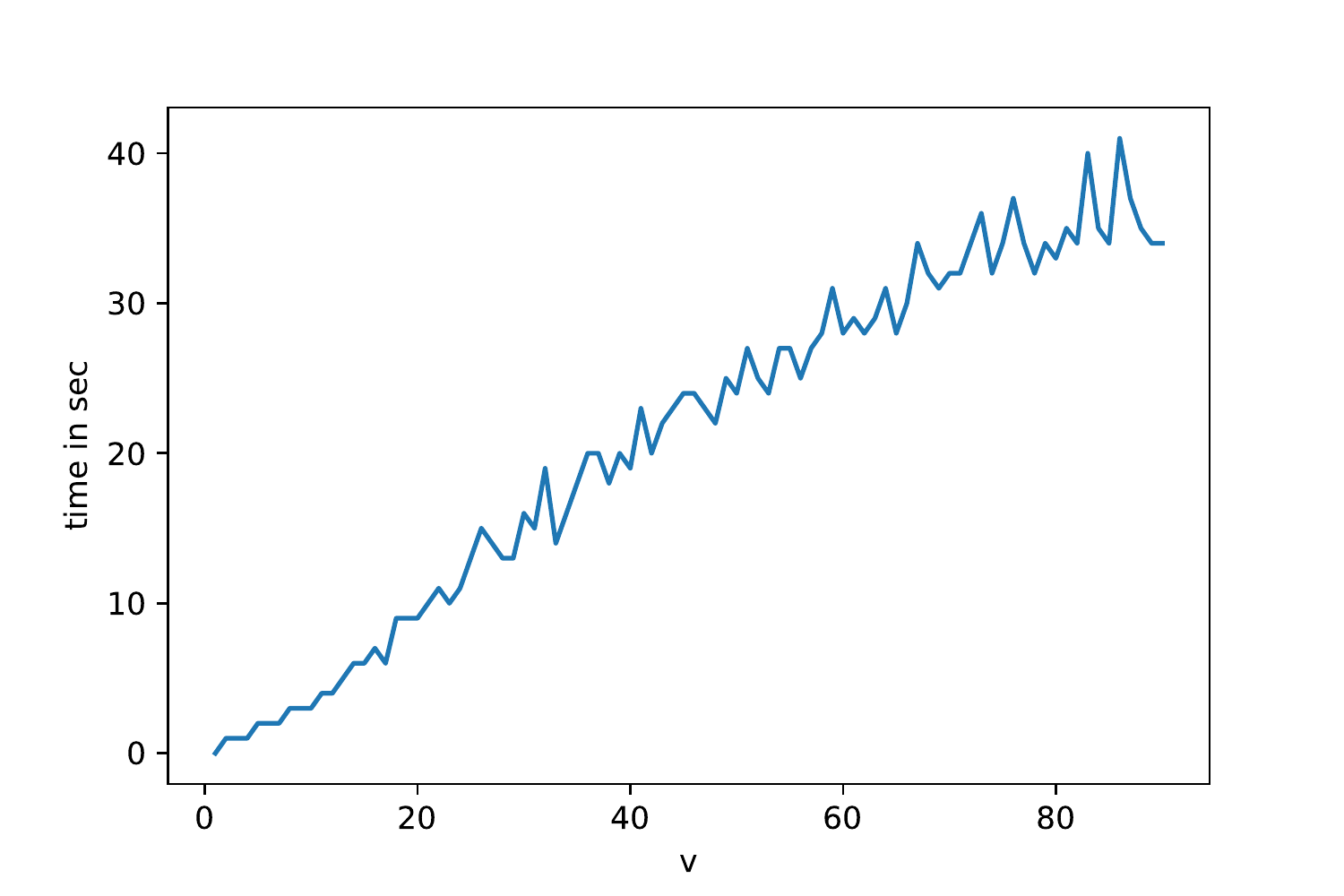}
    \caption{Running time for \textsc{MSM-Mean} calculation of $k=3$ and $n=30$ regarding the discretization heuristic.}
    \label{fig:maxV}
\end{figure}

\subsubsection{Window Heuristic}
In the following, we investigate the window heuristic described in Section \ref{subsec:Implementation_Heuristic} for the \textsc{MSM-Mean} problem and $k=3$, $n\in \{10,\ldots, 42\}$. We examine the window size $d=1,2,3$ in our experiments. 
We analyze relative error of the exact mean and the means obtained from the window heuristic.
As expected, the higher the window size $d$ the smaller is the relative error. 
The relative error averaged over all $n$ and all data sets was $4.8\%, 3.2\%,2.4\%$ and the maximum relative error was $9.1\%,6.4\%,5.4\%$ for $d=1,2,3$, respectively.  
Figure \ref{fig:Heuristic_runningtime} shows the running time of the window heuristic in comparison to the exact computation as a function of $n$. Note that the y-axis plots the running time on a logarithmic scale. 
For all parameter settings, the running times improve substantially in comparison to the exact approach.

\begin{figure}[h]
    \centering
    \includegraphics[width = 0.5\textwidth]{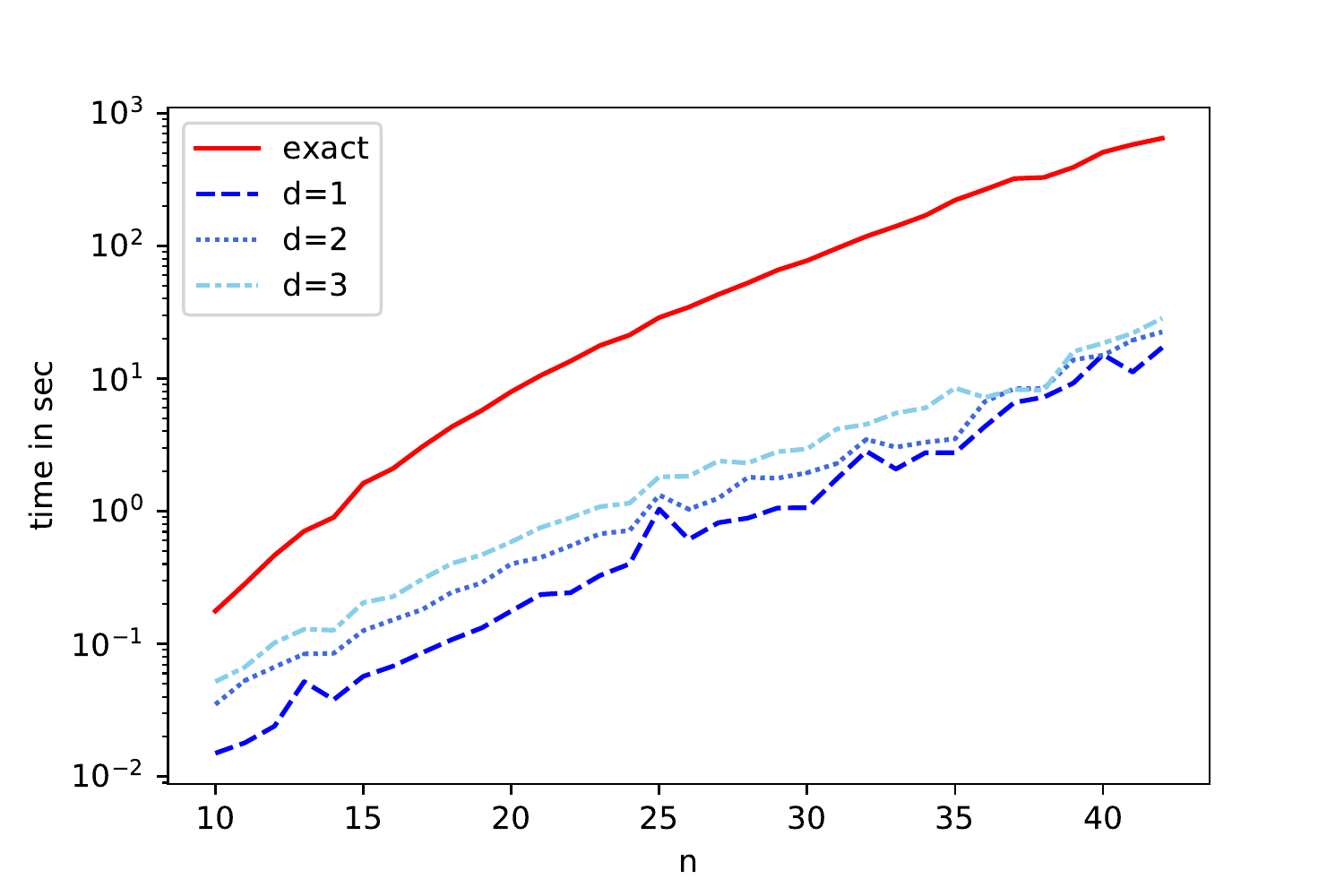}
    \caption{Average running time of computing the exact mean and mean using the window heuristic  for $d=1,2,3$ for $k=3$ and $n\in \{10,\ldots, 42\}$.}
    \label{fig:Heuristic_runningtime}
\end{figure}

\section{Conclusion and Future Work}
\label{chap:conclusion}
This paper introduces the \textsc{MSM-Mean} problem of computing the mean of a set of time series for the Move-Split-Merge (MSM) metric. We present an exact algorithm for \textsc{MSM-Mean} with a better running time than a recent algorithm for computing the mean for the DTW distance. Experimental results confirm the theoretically proven superiority of our \textsc{MSM-Mean} algorithm in comparison to the DTW counterpart. The key observation of our method is that an \textsc{MSM-Mean} exists whose data points occur in at least one of the underlying time series. In addition, we provide an an upper bound for the length of \textsc{MSM-Mean}. In our experiments, the maximum mean length is much shorter, rarely exceeding the length of the longest time series. The paper also provides two heuristics for speeding up the computation of the mean without sacrificing much accuracy, as shown in our experimental evaluation.

In future work we will tackle the following issues. First, we will examine how to use \textsc{MSM-Mean} in real clustering and classification problems. Second, we plan to develop optimization strategies such as the \emph{A*-Algorithm} \cite{hart1968formal} for further improving the running time of our algorithm to avoid filling up the entire dynamic programming table. As a starting point, the structure of the transformation forests and the metric properties of the \textsc{MSM} distance could be further explored. 
Third, the metric properties, especially the triangle inequality, of the \textsc{MSM} distance enables applications of the \textsc{MSM-Mean} to metric indexing.  
Finally, we conjecture \textsc{MSM-Mean} to be NP-hard.
Proving this conjecture could be a next research step. 

\section*{Declarations}
The authors have no relevant financial or non-financial interests to disclose.


\printbibliography

@article{stefan2012move,
  title={The move-split-merge metric for time series},
  author={Stefan, Alexandra and Athitsos, Vassilis and Das, Gautam},
  journal={IEEE transactions on Knowledge and Data Engineering},
  volume={25},
  number={6},
  pages={1425--1438},
  year={2012},
  publisher={IEEE}
}

@article{brill2019exact,
  title={Exact mean computation in dynamic time warping spaces},
  author={Brill, Markus and Fluschnik, Till and Froese, Vincent and Jain, Brijnesh and Niedermeier, Rolf and Schultz, David},
  journal={Data Mining and Knowledge Discovery},
  volume={33},
  number={1},
  pages={252--291},
  year={2019},
  publisher={Springer}
}

@inproceedings{paparrizos2020debunking,
  title={Debunking Four Long-Standing Misconceptions of Time-Series Distance Measures},
  author={Paparrizos, John and Liu, Chunwei and Elmore, Aaron J and Franklin, Michael J},
  booktitle={Proceedings of the 2020 ACM SIGMOD International Conference on Management of Data},
  pages={1887--1905},
  year={2020}
}

@inproceedings{levenshtein1966binary,
  title={Binary codes capable of correcting deletions, insertions, and reversals},
  author={Levenshtein, Vladimir I},
  booktitle={Soviet physics doklady},
  volume={10},
  number={8},
  pages={707--710},
  year={1966},
  organization={Soviet Union}
}

@article{schultz2018nonsmooth,
  title={Nonsmooth analysis and subgradient methods for averaging in dynamic time warping spaces},
  author={Schultz, David and Jain, Brijnesh},
  journal={Pattern Recognition},
  volume={74},
  pages={340--358},
  year={2018},
  publisher={Elsevier}
}

@misc{UCRArchive,
title={The UCR Time Series Classification Archive},
author={ Chen, Yanping and Keogh, Eamonn and Hu, Bing and Begum, Nurjahan and Bagnall, Anthony and Mueen, Abdullah and Batista, Gustavo},
year={2015},
month={7},
note = {\url{www.cs.ucr.edu/~eamonn/time_series_data/}}
}

@article{aghabozorgi2015time,
  title={Time-series clustering--a decade review},
  author={Aghabozorgi, Saeed and Shirkhorshidi, Ali Seyed and Wah, Teh Ying},
  journal={Information Systems},
  volume={53},
  pages={16--38},
  year={2015},
  publisher={Elsevier}
}

@inproceedings{das1998rule,
  title={Rule Discovery from Time Series.},
  author={Das, Gautam and Lin, King-Ip and Mannila, Heikki and Renganathan, Gopal and Smyth, Padhraic},
  booktitle={KDD},
  volume={98},
  number={1},
  pages={16--22},
  year={1998}
}

@inproceedings{macqueen1967some,
  title={Some methods for classification and analysis of multivariate observations},
  author={MacQueen, James},
  booktitle={Proceedings of the fifth Berkeley symposium on mathematical statistics and probability},
  volume={1},
  number={14},
  pages={281--297},
  year={1967},
  organization={Oakland, CA, USA}
}

@article{jiang2020time,
  title={Time series classification: Nearest neighbor versus deep learning models},
  author={Jiang, Weiwei},
  journal={SN Applied Sciences},
  volume={2},
  number={4},
  pages={1--17},
  year={2020},
  publisher={Springer}
}

@article{petitjean2016faster,
  title={Faster and more accurate classification of time series by exploiting a novel dynamic time warping averaging algorithm},
  author={Petitjean, Fran{\c{c}}ois and Forestier, Germain and Webb, Geoffrey I and Nicholson, Ann E and Chen, Yanping and Keogh, Eamonn},
  journal={Knowledge and Information Systems},
  volume={47},
  number={1},
  pages={1--26},
  year={2016},
  publisher={Springer}
}

@article{sakoe1978dynamic,
  title={Dynamic programming algorithm optimization for spoken word recognition},
  author={Sakoe, Hiroaki and Chiba, Seibi},
  journal={IEEE transactions on acoustics, speech, and signal processing},
  volume={26},
  number={1},
  pages={43--49},
  year={1978},
  publisher={IEEE}
}

@book{kruskal1983time,
  title={Time warps, string edits, and macromolecules: the theory and practice of sequence comparison},
  author={Kruskal, Joseph P},
  volume={10},
  year={1983},
  publisher={Addison-Wesley}
}

@inproceedings{hautamaki2008time,
  title={Time-series clustering by approximate prototypes},
  author={Hautamaki, Ville and Nykanen, Pekka and Franti, Pasi},
  booktitle={2008 19th International conference on pattern recognition},
  pages={1--4},
  year={2008},
  organization={IEEE}
}

@inproceedings{niennattrakul2007clustering,
  title={On clustering multimedia time series data using k-means and dynamic time warping},
  author={Niennattrakul, Vit and Ratanamahatana, Chotirat Ann},
  booktitle={2007 International Conference on Multimedia and Ubiquitous Engineering (MUE'07)},
  pages={733--738},
  year={2007},
  organization={IEEE}
}

@article{paparrizos2017fast,
  title={Fast and accurate time-series clustering},
  author={Paparrizos, John and Gravano, Luis},
  journal={ACM Transactions on Database Systems (TODS)},
  volume={42},
  number={2},
  pages={1--49},
  year={2017},
  publisher={ACM New York, NY, USA}
}

@article{aach2001aligning,
  title={Aligning gene expression time series with time warping algorithms},
  author={Aach, John and Church, George M},
  journal={Bioinformatics},
  volume={17},
  number={6},
  pages={495--508},
  year={2001},
  publisher={Oxford University Press}
}

@article{petitjean2012summarizing,
  title={Summarizing a set of time series by averaging: From Steiner sequence to compact multiple alignment},
  author={Petitjean, Fran{\c{c}}ois and Gan{\c{c}}arski, Pierre},
  journal={Theoretical Computer Science},
  volume={414},
  number={1},
  pages={76--91},
  year={2012},
  publisher={Elsevier}
}

@article{petitjean2011global,
  title={A global averaging method for dynamic time warping, with applications to clustering},
  author={Petitjean, Fran{\c{c}}ois and Ketterlin, Alain and Gan{\c{c}}arski, Pierre},
  journal={Pattern recognition},
  volume={44},
  number={3},
  pages={678--693},
  year={2011},
  publisher={Elsevier}
}

@inproceedings{cuturi2017soft,
  title={Soft-dtw: a differentiable loss function for time-series},
  author={Cuturi, Marco and Blondel, Mathieu},
  booktitle={International Conference on Machine Learning},
  pages={894--903},
  year={2017},
  organization={PMLR}
}

@article{bagnall2017great,
  title={The great time series classification bake off: a review and experimental evaluation of recent algorithmic advances},
  author={Bagnall, Anthony and Lines, Jason and Bostrom, Aaron and Large, James and Keogh, Eamonn},
  journal={Data mining and knowledge discovery},
  volume={31},
  number={3},
  pages={606--660},
  year={2017},
  publisher={Springer}
}

@article{lines2015time,
  title={Time series classification with ensembles of elastic distance measures},
  author={Lines, Jason and Bagnall, Anthony},
  journal={Data Mining and Knowledge Discovery},
  volume={29},
  number={3},
  pages={565--592},
  year={2015},
  publisher={Springer}
}

@inproceedings{frechet1948elements,
  title={Les {\'e}l{\'e}ments al{\'e}atoires de nature quelconque dans un espace distanci{\'e}},
  author={Fr{\'e}chet, Maurice},
  booktitle={Annales de l'institut Henri Poincar{\'e}},
  volume={10},
  number={4},
  pages={215--310},
  year={1948}
}

@article{DBLP:journals/pvldb/ChenGZJYY17,
  author    = {Lu Chen and
               Yunjun Gao and
               Baihua Zheng and
               Christian S. Jensen and
               Hanyu Yang and
               Keyu Yang},
  title     = {Pivot-based Metric Indexing},
  journal   = {Proc. {VLDB} Endow.},
  volume    = {10},
  number    = {10},
  pages     = {1058--1069},
  year      = {2017},
  url       = {http://www.vldb.org/pvldb/vol10/p1058-gao.pdf},
  timestamp = {Thu, 14 Oct 2021 09:26:33 +0200},
  bibsource = {dblp computer science bibliography, https://dblp.org}
}

@article{novak2011metric,
  title={Metric index: An efficient and scalable solution for precise and approximate similarity search},
  author={Novak, David and Batko, Michal and Zezula, Pavel},
  journal={Information Systems},
  volume={36},
  number={4},
  pages={721--733},
  year={2011},
  publisher={Elsevier}
}

@article{hart1968formal,
  title={A formal basis for the heuristic determination of minimum cost paths},
  author={Hart, Peter E and Nilsson, Nils J and Raphael, Bertram},
  journal={IEEE transactions on Systems Science and Cybernetics},
  volume={4},
  number={2},
  pages={100--107},
  year={1968},
  publisher={IEEE}
}

@article{liao2005clustering,
  title={Clustering of time series data—a survey},
  author={Liao, T Warren},
  journal={Pattern recognition},
  volume={38},
  number={11},
  pages={1857--1874},
  year={2005},
  publisher={Elsevier}
}

@inproceedings{sakurai2005ftw,
  title={FTW: fast similarity search under the time warping distance},
  author={Sakurai, Yasushi and Yoshikawa, Masatoshi and Faloutsos, Christos},
  booktitle={Proceedings of the twenty-fourth ACM SIGMOD-SIGACT-SIGART symposium on Principles of database systems},
  pages={326--337},
  year={2005}
}

@article{bader2017survey,
  title={Survey and comparison of open source time series databases},
  author={Bader, Andreas and Kopp, Oliver and Falkenthal, Michael},
  journal={Datenbanksysteme f{\"u}r Business, Technologie und Web (BTW 2017)-Workshopband},
  year={2017},
  publisher={Gesellschaft f{\"u}r Informatik eV}
}

@article{garcia2020db2,
  title={Db2 event store: a purpose-built IoT database engine},
  author={Garcia-Arellano, Christian and Roumani, Hamdi and Sidle, Richard and Tiefenbach, Josh and Rakopoulos, Kostas and Sayyid, Imran and Storm, Adam and Barber, Ronald and Ozcan, Fatma and Zilio, Daniel and others},
  journal={Proceedings of the VLDB Endowment},
  volume={13},
  number={12},
  pages={3299--3312},
  year={2020},
  publisher={VLDB Endowment}
}

@inproceedings{berndt1994using,
  title={Using dynamic time warping to find patterns in time series.},
  author={Berndt, Donald J and Clifford, James},
  booktitle={KDD workshop},
  volume={10},
  number={16},
  pages={359--370},
  year={1994},
  organization={Seattle, WA, USA:}
}

@article{ukkonen1985algorithms,
  title={Algorithms for approximate string matching},
  author={Ukkonen, Esko},
  journal={Information and control},
  volume={64},
  number={1-3},
  pages={100--118},
  year={1985},
  publisher={Elsevier}
}

@article{rani2012recent,
  title={Recent techniques of clustering of time series data: a survey},
  author={Rani, Sangeeta and Sikka, Geeta},
  journal={International Journal of Computer Applications},
  volume={52},
  number={15},
  year={2012},
  publisher={Citeseer}
}

@article{jensen2017time,
  title={Time series management systems: A survey},
  author={Jensen, S{\o}ren Kejser and Pedersen, Torben Bach and Thomsen, Christian},
  journal={IEEE Transactions on Knowledge and Data Engineering},
  volume={29},
  number={11},
  pages={2581--2600},
  year={2017},
  publisher={IEEE}
}

\end{document}